\documentclass[11pt, oneside]{article} 
\usepackage[margin=2cm]{geometry}                		
\usepackage[parfill]{parskip}    		
\usepackage{graphicx}			
\usepackage[utf8]{inputenc}
\usepackage[english]{babel} 
\usepackage{amsmath,amssymb,amsthm,bm} 
\usepackage[margin=2cm]{geometry}
\usepackage[color=yellow]{todonotes}
\usepackage{mathrsfs}
\usepackage{url}	
\usepackage{esint}
\usepackage[colorlinks]{hyperref}
\usepackage{enumerate}
\usepackage{outlines}[enumerate]
\usepackage{framed,comment,enumerate}
\usepackage{upgreek}
\usepackage{pgfplots}
\usepackage{float}
\usepackage{tikz,tikz-cd}
\usepackage{rotating}
\usepackage{graphicx}
\usepackage{caption}
\usepackage{multicol}
\setlength{\columnseprule}{1pt}
\usepackage{cancel}
\usepackage{subcaption}
\usepackage[title]{appendix}
\usepackage{tikz-cd}
\usepackage{pgf, pgffor}
\usepackage{subcaption}
\maxdeadcycles=200
\extrafloats{100}
\numberwithin{equation}{section}

\newtheorem{theorem}{Theorem}[section]

\newtheorem{corollary}{Corollary}[section]
\newtheorem{proposition}{Proposition}[section]
\newtheorem{remark}{Remark}[section]

\theoremstyle{definition}

\newcommand{\ob}[1]{\overline{#1}}
\newcommand{\wt}[1]{\widetilde{#1}}
\newcommand{\wh}[1]{\widehat{#1}}
\newcommand{\mb}[1]{\mathbf{#1}}
\newcommand{\mbb}[1]{\mathbb{#1}}

\newcommand{\mc}[1]{\mathcal{#1}}
\newcommand{\bs}[1]{\boldsymbol{#1}}

\newcommand{\mcal}[1]{\mc{#1}}
\newcommand{\scp}[2]{\left<#1\,,\,#2\right>}

\newcommand{\ad}{\operatorname{ad}}

\def\yh{\mathbf{\widehat{y}}}


\def\p{{\partial}}

\def\ep{{\epsilon}}
\def\rmd{{\color{red}{\rm d}}}
\def\ba{{\mathbf{a}}}
\def\bk{{\mathbf{k}}}
\def\bm{{\mathbf{m}}}
\def\bM{{\mathbf{M}}}
\def\bp{{\mathbf{p}}}

\def\bu{{\mathbf{u}}}

\def\bx{{\mathbf{x}}}
\def\bX{{\mathbf{X}}}

\def\bxi{{\boldsymbol{\xi}}}



\def\p{\partial}

\pgfplotsset{compat=1.16}

\begin{document}
	\title{Geometric Mechanics of the Vertical Slice Model \\
 (For Volume 1, Issue 1 of \textit{Geometric Mechanics})}
    \author{Darryl D. Holm, Ruiao Hu\footnote{Corresponding author, email: ruiao.hu15@imperial.ac.uk}, and Oliver D. Street\\
d.holm@imperial.ac.uk, ruiao.hu15@imperial.ac.uk, o.street18@imperial.ac.uk\\
Department of Mathematics, Imperial College London \\ SW7 2AZ, London, UK}
	\date{\today}
	\maketitle
	\begin{abstract}
        
        The goals of the present work are to: (i) investigate the dynamics of oceanic frontogenesis by taking advantage of the geometric mechanics underlying the class of Vertical Slice Models (VSMs) of ocean dynamics; and (ii) illustrate the versatility and utility of deterministic and stochastic variational approaches by deriving several variants of wave-current interaction models which describe the effects of internal waves propagating within a vertical planar slice embedded in a 3D region of constant horizontal gradient of buoyancy in the direction transverse to the vertical pane.
	\end{abstract}

\tableofcontents

\newpage
\section{Introduction}
Oceanic fronts form by extracting the gravitational potential energy available from the `tilting' of buoyancy isoclines, represented as horizontal gradients of buoyancy. In turn, the formation of oceanic fronts facilitates the transfer of kinetic energy to small-scale mixing and transport \cite{GTSM2022,JCM2021}.
In particular, a region of constant horizontal gradient of temperature (implying a corresponding gradient of buoyancy) in a three-dimensional, vertically-stratified fluid governed by the Euler-Boussinesq equations can induce the formation of fronts emerging in vertical planar flows transverse to the direction of constant horizontal gradient of temperature \cite{A-OBdL2019,CH2013,CH2014,YSCMC2017}. 
The present work derives \textit{Vertical Slice Models} (VSMs) to investigate the formation and subsequent evolution of the fronts emerging in a region of \emph{constant} horizontal gradient of temperature. These VSMs may also be useful for bench-marking numerical schemes, since in 2D they can be run quickly on a single workstation.

Driven by their constant transverse horizontal gradient of temperature, the VSMs in \cite{A-OBdL2019,CH2013,CH2014,YSCMC2017} possess a $y$-independent solution structure for the flows within the vertical plane which includes the dynamics of the $y$-component of the fluid velocity transverse to the vertical plane and the Coriolis force. The $y$-independent solution structure for VSM flows is still a solution of the full 3D Euler-Boussinesq fluid equations, as long as the horizontal gradient of the temperature transverse to the vertical slice remains constant. This solution property arises because the pressure gradient within the vertical plane only accesses the $y$-independent part of the temperature as it varies with time and space in the vertical plane. 

The VSM family can be derived in the Euler–Poincaré (EP) framework of symmetry-reduced Lagrangians in Hamilton's variational principle \cite{HMR1998}. The EP framework involves a constrained Hamilton's principle expressed in the Eulerian fluid description. Their derivation in the EP framework establishes the following properties of each member of the VSM family: the Kelvin–Noether circulation theorem, conservation of potential vorticity on fluid parcels, a Lie–Poisson Hamiltonian formulation possessing conserved Casimirs arising from particle-relabelling symmetry, a conserved domain-integrated energy and an associated variational principle satisfied by the equilibrium solutions.
\smallskip

{\bf Aims of the present work.}
New theoretical methods of analysing the dynamics of complex fluids advecting a variety of different co-evolving order parameters have been developing in the Euler–Poincaré (EP) framework during the past twenty years. See, e.g, \cite{H2002,CMR2004,GBR09,GRT2013} for discussions of these new theoretical methods. The EP framework also offers a means of extending the VSMs to include the new methods for complex fluids, as discussed below. 

\color{black}
The new theoretical methods designed for modelling complex fluid flows include the statistical and probabilistic methods needed for data science, as well as the geometric and analytical methods underlying the theory of nonlinear partial differential equations. For example, recent mathematical discoveries have revealed the Poisson structures of the Eulerian description of ideal (non-dissipative) complex fluids such as superfluids, spin glasses, liquid crystals, ferrofluids, etc. derived in \cite{H2002,CMR2004,GBR09,GRT2013}. Complex fluids transport order parameters that co-evolve in the frame of the fluid motion, and the dynamics of these order parameters reacts back to affect the fluid motions that transport them. 

Motivated by these recent mathematical discoveries, we consider applying the variational methods of the geometric mechanics framework \cite{HSS2009} to guide the investigation of the Poisson structures underlying the ocean science of wave and current interactions. We choose this variational approach because of its versatility in deriving the Poisson structures we seek, even in the presence of any random transport that admits the product rule and the chain rule, \cite{H2015,GH2018,CHLN2022}. The inclusion of randomness into nonlinear ocean dynamics has been shown to be useful for uncertainty quantification and data assimilation methods to reduce uncertainty in computational simulations of ocean models, in particular by using the method of Stochastic Advection by Lie Transport (SALT), which preserves the semidirect-product Poisson structures of classical fluids, \cite{H2015, CCHOS18a, CHLN2022}. The Euler-Poincar\'e and Hamilton-Pontryagin versions of Hamilton's principle have been extended from stochastic paths to \textit{geometric rough paths} in \cite{CHLN2022}.

The present work illustrates the versatility and utility of the variational framework of geometric mechanics by deriving several variants of wave-current interaction models that describe internal waves propagating in the vertical slice model (VSM) with transverse flow that was originally derived in \cite{CH2013}. Frontogenesis in this model was demonstrated via numerical simulations in \cite{YSCMC2017} and front formation in its solution behaviour has been analysed recently in \cite{A-OBdL2019}. From the viewpoint of modelling in  ocean dynamics, the problem statement for VSMs is an augmentation of the standard incompressible Euler–Boussinesq equations with a constant gradient of temperature in the transverse direction. 

Here, we will first review the derivation of VSMs based on the Euler-Poincar\'e approach as in \cite{CH2013} and then re-derive it from a variational approach based on a \textit{composition of smooth maps} to describe the order-parameter dynamics taking place in the frame of the fluid motion. Subsequently, we will use the latter method to augment the VSM to enable inclusion of non-Boussinesq effects and also to add the effects of internal gravity waves propagating in the vertical slice.

The variational approach we take in this paper introduces a certain composition of maps (C$\circ$M) into the well-known approach originally due to Clebsch \cite{Clebsch1859}. Remarkably, the C$\circ$M approach produces an `untangled' (block-diagonal) Poisson structure for total momentum and order-parameters, as well as an `entangled' Poisson structure for the fluid momentum alone. The latter Poisson structure exhibits a semidirect-product action of the fluid velocity vector field on the order-parameter phase space, as well as an additional symplectic bracket among the order parameters. 

Mathematically, the `untangled' version of its Poisson structure separates into the sum of a symplectic two-cocycle bracket in the phase-space of order parameters added to the standard semidirect-product Lie-Poisson bracket for classical fluid dynamics, as derived in the Lagrangian framework of Hamilton's variational principle \cite{HMR1998}. By comparing the two versions of the Poisson structures for these fluid equations with advected co-evolving order parameters, one finds that they exhibit a mathematical equivalence that can be written as $T^*Q/G \simeq T^*(Q/G)\oplus \mathfrak{g}^*$, where $\oplus$ denotes the Whitney sum. Perhaps not surprisingly, this duality has persisted and been remarked about throughout the historical investigations of fluid dynamics involving variational principles with a composition of maps. 

The history of Clebsch's approach to the formulation of Hamilton's principle for classical fluid dynamics is reviewed in \cite{S1959}, where it is supplemented by an additional advection constraint due Lin \cite{Lin1963} who introduced it in deriving Landau's two-fluid classical-quantum model of $He_2$ superfluids. This Clebsch approach with the Lin constraint was supplemented further and applied to $He_3$ superfluids with an additional spin order-parameter in \cite{HK1982}, where of course additional two-cocycles were found. Later, it was applied to Yang-Mills charged fluids in \cite{GHK1983} and to magneto-hydrodynamics (MHD) and other fluid plasma models, as well as nonlinear elasticity in \cite{HK1983}. The same augmented approach was applied in deriving fluid equations for special-relativistic plasmas coupled to electromagnetic fields  in \cite{H1987} and also for general-relativistic fluids in \cite{H1985}. Remarkably, the symplectic two-cocycle arising in the Poisson structure for general-relativistic adiabatic fluid dynamics turned out to comprise the Minkowsky metric for space-time $g_{\mu\nu}$ and its canonically conjugate momentum density $\pi_{\mu\nu}$ in the Arnowitt-Deser-Misner theory of general relativity \cite{H1985}.
After this variational approach had been applied to complex fluids in \cite{H2002}, the mathematical essence of the Poisson structures in this approach was proven to arise via Lie group reduction by symmetry with respect to affine actions yielding precisely the $T^*Q/G \simeq T^*(Q/G)\oplus \mathfrak{g}^*$ equivalence as its associated bundle reduction in \cite{CMR2001,GBR09}. This identification and its main mathematical applications for complex fluids are summarised in detail in \cite{GBR09}. A related approach called ``metamorphosis'' was investigated for shape analysis in image registration with applications to computational anatomy in \cite{HTY2009, BGBHR2011}. 

While the mathematical setting of affine Poisson structures for this topic is quite rich, \cite{H2002,CMR2004,GBR09,GRT2013}, the Clebsch-Lin composition of maps variational approach derives the Poisson structure and identifies its corresponding affine Lie group action quite straight-forwardly, as we shall demonstrate later by comparing the original Euler-Poincar\'e derivation of the EB VSM to the Clebsch-Lin approach. 

Hereafter, the Clebsch-Lin approach with affine advection will be referred to simply as composition of maps, as in \cite{HHS2023a, HHS2023c}. The C$\circ$M approach can be applied  for uncertainty quantification of ocean models by using the Stochastic Advection by Lie Transport (SALT) approach applied in \cite{H2015, CCHOS18a, CHLN2022} for the composition of several maps. The C$\circ$M approach is also compatible with traditional approaches in ocean modelling such as Generalised Lagrangian Mean (GLM). In fact, the C$\circ$M approach has been used to provide a stochastic closure for GLM in \cite{HHS2023b} which will be illustrated further in the present work. 

To reprise, the purpose of the present work is to make a concrete application of the stochastic C$\circ$M approach for an enhancement of the VSMs of \cite{CH2013} to include the wave-current interaction (WCI) effects of internal gravity waves (IGW) propagating in a vertical slice of Euler-Boussinesq (EB) flow including a transverse velocity. This work is meant to be presented as explicitly as possible so it can provide a useful foundation for further applications of geometric mechanics in ocean modelling. 

\paragraph{Motivation of the paper:}
In the present work, we consider the interaction of internal waves with Euler-Boussinesq flows in a vertical slice of fluid undergoing three dimensional volume-preserving flow whose motion transverse to a vertical plane advects Lagrangian particle labels that depend linearly on the transverse Eulerian coordinate.
The constant transverse slope, $s$, of the advected Lagrangian particle labels appears as a constant parameter in the slice dynamics along with a new canonically conjugate transverse momentum density, $\pi_T$, dual in the slice to the potential temperature which is advected in three dimensions as a Lagrangian particle label. 
This concept for a VSM with transverse flow was introduced in \cite{CH2013} and its solution behaviour has been simulated computationally in \cite{YSCMC2017} and analysed recently in \cite{A-OBdL2019}.

\paragraph{Main goals of the paper:}
The ocean modelling goal of the present paper is to include IGW motion in the VSMs with transverse flow introduced in \cite{CH2013}. The mathematical goal accompanying the goal for data assimilation is also to determine the Poisson/Hamiltonian structure of the resulting model, and thereby formulate a new model of stochastic parameterisation of advective transport of potential use for quantifying uncertainty in ocean model simulations. 

\paragraph{Plan of the paper:}
\begin{itemize}
    \item
Section \ref{sec: CH2013 VSM} recalls the EB VSM given in \cite{CH2013} and considers two Euler-Poincar\'e derivations of the VSMs. 
\begin{itemize}
    \item Section \ref{sec: Lag VSM} provides two equivalent derivations of the EB VSM model treated in \cite{CH2013}. Namely, they are the derivation based on affine Euler-Poincar\'e reduction and the Clebsch-Lin derivation in the C$\circ$M context.
    \item Section \ref{sec: Ham VSM} treats the `Entangled' and `Untangled' versions of the Hamiltonian formulation for the vertical slice model.
    \item
    Section \ref{sec: CompSims} demonstrates some of the solution behaviours exhibited by the VSM derived in the previous sections by showing the result of computational simulations. 
\end{itemize}

\item Section \ref{sec:WMFI} treats a further asymptotic expansion estimates of IGW interaction dynamics folowing the wave mean flow interaction (WMFI) closure due to \cite{GH1996} in which the Brunt-V\"ais\"al\"a buoyancy frequency is determined via the Hessian of the fluid pressure. 

\begin{itemize}
    \item
Section \ref{sec:EB_VSM_IGW} treats the Clebsch-Lin derivation of the VSM in the Euler-Boussinesq approximation with the inclusion of a standard WKB model for the IGW Hamiltonian and assumes the existence of a constant Brunt-V\"ais\"ala (BV) buoyancy frequency. 
    \item 
Section \ref{sec: Geom GLM} reviews the geometry of the generalised Lagrangian mean approach to wave mean flow interaction.

    \item 
Section \ref{sec:GLM_vertical_slice} introduces a wave mean flow decomposition of dynamics in the vertical slice of an Euler-Boussinesq fluid.
\end{itemize}

    \item
Section \ref{sec:SALTyVSM} then formulates the stochastic parameterisation model for VSMs based on Stochastic Advection by Lie Transport (SALT). 
    \item 
Section \ref{sec: Conclude Outlook} Concludes with a discussion of the goals achieved in the paper and a survey of potential future work.  
    \item 
Appendix \ref{appendix:expansion} contains an asymptotic expansion which reveals the form of the action for the dynamical system studied in Section \ref{sec:EB_VSM_IGW}.
    \item
Appendix \ref{app-B-Dispersion} contains the derivation of a new dispersion relation for inertial gravity waves in a vertical slice domain.

\end{itemize}

At each level of approximation in the sequence of sections in this paper, we identify the opportunities where stochasticity, e.g., SALT, may be introduced that will preserve the geometric Poisson structure of the model. The dynamical effects of the introduction of SALT into the deterministic VSMs derived here and their implications for data science and data assimilation are beyond the scope of the present paper. However, we expect that the framework for applications of the present approach for data calibration, uncertainty quantification and data assimilation considered previously in \cite{CCHOS18a, CCHOS20} will certainly carry over for the stochastic IGW effects considered here.

\section{Vertical slice models (VSMs)}\label{sec: CH2013 VSM}

\begin{figure}[H]
\centering
\begin{tikzpicture}
\pgfmathsetmacro{\cubex}{4}
\pgfmathsetmacro{\slicex}{2}
\pgfmathsetmacro{\cubey}{2}
\pgfmathsetmacro{\cubez}{3}
\draw[blue,fill=blue!20!white] (0,0,0) -- ++(-\cubex,0,0) -- ++(0,-\cubey,0) -- ++(\cubex,0,0) -- cycle;
\draw[blue,fill=blue!20!white] (0,0,0) -- ++(0,0,-\cubez) -- ++(0,-\cubey,0) -- ++(0,0,\cubez) -- cycle;
\draw[blue,fill=blue!20!white] (0,0,0) -- ++(-\cubex,0,0) -- ++(0,0,-\cubez) -- ++(\cubex,0,0) -- cycle;
\draw[black,fill=blue] (-\slicex,0,0) -- ++(0,0,-\cubez) -- ++(0,-\cubey,0) -- ++(0,0,\cubez) -- cycle;
\draw[blue,dashed] (0,-\cubey,-\cubez) -- (-\cubex,-\cubey,-\cubez);
\draw[blue,dashed] (-\cubex,-\cubey,0) -- (-\cubex,-\cubey,-\cubez);
\draw[blue,dashed] (-\cubex,0,-\cubez) -- (-\cubex,-\cubey,-\cubez);
\draw[->] (0.5*\cubey,-\cubey,-\cubez)--(0.5*\cubey+0.2*\cubey,-\cubey,-\cubez) node[right]{\footnotesize$y$};
\draw[->] (0.5*\cubey,-\cubey,-\cubez)--(0.5*\cubey,-\cubey + 0.2*\cubey,-\cubez) node[above]{\footnotesize$z$};
\draw[->] (0.5*\cubey,-\cubey,-\cubez)--(0.5*\cubey,-\cubey,-\cubez + 0.3*\cubey) node[below left]{\footnotesize$x$};
\draw[-stealth] (0,0.5*\cubey,-\cubez) node[right] {$M \subset \mathbb{R}^2$} .. controls (-0.25*\cubex+ 0.1*\cubex,0.6*\cubey,-0.75*\cubez-0.1*\cubex) and (-0.25*\cubex-0.1*\cubex,0.6*\cubey,-0.75*\cubez+0.1*\cubex)  ..(-0.5*\cubex,0,-0.5*\cubez);
\end{tikzpicture}
\caption{The vertical slice domain}
\label{fig:slice_domain}
\end{figure}
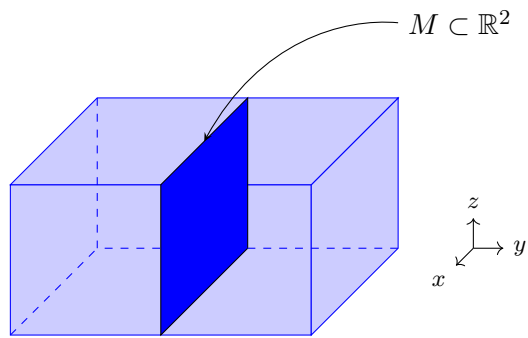
The vertical slice model, as proposed by Eady in the 1940s \cite{E1948,E1949,V2017}, was born from a desire to understand instabilities in atmospheric sciences. It has since proven to be a valuable model for testing numerical methods for geophysical flows, as well as remaining useful for understanding thermal front formation and propagation. Dynamics in a vertical slice model occurs within a vertical ($x$-$z$) plane, denoted by $M$ in Figure \ref{fig:slice_domain}. The slice model features a velocity field tangent to the slice that advects fluid material variables within the slice, as well as a velocity perpendicular to the slice representing transverse flow crossing through the slice at each point. In this section, the vertical slice model will be derived from a variational principle in the Eulerian representation and investigated further by passing to its Hamiltonian formulation. The key idea is to regard the tangential flow in the slice and the perpendicular flow transverse to the slice as the composition of two flow maps. 

\subsection{Lagrangian formulation of VSMs}\label{sec: Lag VSM}

Vertical slice models can be derived using variational principles that apply the Lie group structure of their transverse flow explicitly \cite{CH2013}. In general, the flow map of 3D fluids can be modelled by the diffeomorphism group $\operatorname{Diff}(\mathbb{R}^3)$. Let $g_t \in \operatorname{Diff}(\mathbb{R}^3)$, its action on the coordinates can be written as
\begin{equation*}
    g_t(X,Y,Z) = (x_t(X,Y,Z),\ y_t(X,Y,Z),\ z_t(X,Y,Z))\,,
\end{equation*}
where $(X,Y,Z)$ are the initial Lagrangian particle labels at time, $t_0$, and $(x_t,y_t,z_t)$ denote the Eulerian location of the particle labels $(X,Y,Z)$ at a later time, $t$. In the case of VSMs in 3D, their flow map takes the particular form
\begin{equation}\label{eqn:vertical_slice_flow_map}
    \phi_t(X,Y,Z) = (x_t(X,Z),\ y_t(X,Z) + Y,\ z_t(X,Z))\,,\quad\hbox{which implies}\quad \frac{\p\phi_t}{\p Y} = \begin{pmatrix} 0 \\ 1 \\ 0 \end{pmatrix} \,,
\end{equation}
such that the flow in vertical ($x$-$z$) plane is independent of $Y$ and the flow in the $y$ plane is linear in $Y$. 
One may directly verify that diffeomorphisms of the form \eqref{eqn:vertical_slice_flow_map} constitute to a closed subgroup of $\operatorname{Diff}(\mathbb{R}^3)$ and it is isomorphic to the semidirect-product group $S := {\rm Diff}(M)\ltimes \mcal{F}(M)$. Here, we have $M \subseteq \mbb{R}^2$ as the domain in the vertically sliced plane with coordinates $(x,z)$, $\mcal{F}(M)$ as the space of smooth functions on $M$, and the symbol $\ltimes$ as the semidirect-product \cite{CH2013,HMR1998}. The Lie algebra corresponding to this group is $\mathfrak{s} := \mathfrak{X}(M)\ltimes \mcal{F}(M)$, where $\mathfrak{X}(M)$ is the space of vector fields on $M$. Our Lie algebra consists of elements of the form $(\bu_S,u_T)$, which can be interpreted as components of the velocity \emph{within} and \emph{transverse to} the slice, respectively.

In VSMs, there are two types of advected quantities, belonging to spaces denoted by $V^*_1$ and $V^*_2$, which are acted on by the configuration group $S$ and are assumed to be subspaces of tensor fields $\mathfrak{T}(M)$. Let $(\phi, v)\in S$ and $a_1 \in V^*_1$ be arbitrary, we define the contragradient representation of $S$ on $V^*_1$ to be the pullback by the component of $S$ belonging to $\operatorname{Diff}(M)$
\begin{equation}
    \rho^*_{(\phi, v)^{-1}}a_1 := \rho^*_{\phi^{-1}}a_1 := \phi^*a_1 \,.
\end{equation}
Let $(u_S, u_T) \in \mathfrak{s}$, the infinitesimal action induced by the above representation is given by
\begin{equation}
    \rho^*{}'_{u_S}a_1 := \frac{d}{dt}\bigg|_{t=0} \rho^*_{\exp((-t u_S,-t u_T))}a_1 = \frac{d}{dt}\bigg|_{t=0} \rho^*_{\exp(-t u_S)}a_1 = \mathcal{L}_{u_S} a \,, 
\end{equation}
where $\mathcal{L}_{\fbox{}}$ is the Lie derivative. Since the induced infinitesimal action is independent of $u_T$, we interpret quantities in $V^*_1$ as purely advected by the diffeomorphism group $\operatorname{Diff}(M)$ in the slice domain $M$. The dual operation to the Lie derivative is the diamond operation, $\diamond$, defined through two different pairings as
\begin{equation}\label{eqn:diamond}
    -\scp{b_1 \diamond a_1}{u_S}_{\mathfrak{X}(M)^* \times \mathfrak{X}(M)} = \scp{\mathcal{L}_{u_S} a_1}{b_1}_{V^*_1 \times V_1}\,,
\end{equation}
for all $b_1 \in V_1$. We note that the form of the diamond operation, like the Lie derivative, depends on the concrete choices of the vector space $V_1$ according to its infinitesimal transformation under the Lie derivative.
In VSMs, the fluid quantity that is acted on by $S$ via the contragradient representation is the volume form, $D\,d^2x \in \operatorname{Den}(M) \subset V^*_1$, which is assumed to be advected. That is, the dynamics of $D\,d^2x$ is characterised by the pushforward relation
\begin{equation}
    D_t\,d^2x_t = \rho^*_{(\phi_t, v_t)}(D_0\,d^2x_0) = \phi_* (D_0\,d^2x_0)\,,
\end{equation}
where $D_0 \,d^2x_0$ is the initial reference volume form. Defining the vector field $\bu_S \cdot\nabla = \dot{\phi}\phi^{-1} \in \mathfrak{X}(M)$ where $\nabla := (\p_x ,\p_y)$ is the gradient operator in the slice domain, we have that the pushforward relation is the solution to the advection equation
\begin{equation}\label{eq:D advection}
    \p_t (D\,d^2x) + \mathcal{L}_{u_S} (D\,d^2x) = 0 \quad \Rightarrow\quad \p_t D + \nabla\cdot \left(\bu_S D\right) = 0\,.
\end{equation}
In relation to the advection of the 3D volume form $D\,d^3x$ by the full three dimensional flow, one can obtain the advection of in slice volume form $D\,d^2x$ in the domain $M$ under the assumptions that both $D$ and the transverse component of the 3D velocity field $u_T$ are independent of the Lagrangian label $Y$.

Similarly to $V^*_1$, we can define the contragradient representation of $S$ on $V^*_2$ via the pullback by the $\operatorname{Diff}(M)$ component of $S$. Let $a_2 \in V^*_2$, then the contragradient representation is $\rho^*_{(\phi, v)^{-1}}a_2 := \phi^* a_2$. Additionally, we define the affine representation of $S$ on $V^*_2$ as 
\begin{equation}
    \Phi_{(\phi, v)}a_2 := \rho^*_{(\phi, v)^{-1}}a_2 + c((\phi, v)) = \rho^*_{\phi^{-1}}a_2 + c((\phi, v)) = \phi^*a_2 + c((\phi, v))\,,
\end{equation}
where $c \in \mcal{F}(S, V^*_2)$ is the group one-cocyle satisfying 
\begin{equation}
    c((\phi_1, v_1)\cdot (\phi_2, v_2)) = \rho^*_{(\phi_2, v_2)^{-1}} c((\phi_1, v_1)) + c((\phi_2, v_2))\,,
\end{equation}
and $\cdot$ denotes the group action of the semidirect-product group $S$ on itself. The infinitesimal action induced by the action by affine representation $\Phi$ is given by
\begin{align}\label{eq:inf affine action}
\begin{split}
    \Phi'_{(u_S, u_T)}a_2 &:= \frac{d}{dt}\bigg|_{t=0} \Phi_{\exp((t u_S, t u_T))}a_2 \\
    &= \frac{d}{dt}\bigg|_{t=0} \rho^*_{\exp(-t u_S)}a_2 + c(\exp(t u_S, t u_T)) = \mathcal{L}_{u_S} a_2 + {\rm d}c((u_S, u_T))\,,
\end{split}
\end{align}
where ${\rm d}c \in \mcal{F}(\mathfrak{s}, V^*_2)$ is the Lie-algebra two-cocycle associated with the group one-cocycle and it is defined by ${\rm d}c = T_{(e, 0)} c$. The dual operator to ${\rm d}c$ is ${\rm d}c^T \in \mcal{F}(V, \mathfrak{s}^*)$ and it is defined by
\begin{equation}
    \scp{{\rm d}c^T(b_2)}{(u_S, u_T)}_{\mathfrak{s}^*\times \mathfrak{s}} := \scp{{\rm d}c((u_S, u_T))}{b_2}_{V^*_2\times V_2}\,,
\end{equation}
for all $b_2 \in V_2$. Then, the dual operator to the induced infinitesimal action described in equation \eqref{eq:inf affine action} to defined be
\begin{align}
    \scp{\mcal{L}_{u_S}a_2 + {\rm d}c((u_S, u_T))}{b_2}_{V_2^*\times V_2} = \scp{-b_2 \diamond a_2}{u_S}_{\mathfrak{X}(M)^*\times \mathfrak{X}(M)} + \scp{{\rm d}c^T(b_2)}{(u_S, u_T)}_{\mathfrak{s}^* \times \mathfrak{s}} \,.
\end{align}
Note that the diamond operator in the above definition is operationally identical to that found in equation \eqref{eqn:diamond}, so we have therefore reused the notation. For the remainder of the paper, we will employ the diamond notation to universally refer to the dual operator to the Lie derivative. In VSMs, fluid quantity that is assumed to be acted on by $S$ via the affine action is the potential temperature scalar $\vartheta_s \in \mathcal{F}(M) \subset V^*_2$. That is, the dynamics of $\vartheta_s$ is characterised by the affine group action
\begin{equation}\label{eq:affine vartheta action}
    \vartheta_s(t) = \Phi_{(\phi, v)}\vartheta_s(0) = \phi_* \vartheta_{s}(0) + c((\phi, v)^{-1})\,,
\end{equation}
where $\vartheta_s(0)$ is the initial reference volume form. The one-cocycle relevant to the VSMs is given by 
\begin{equation}\label{eq:VSM one-cocycle}
    c((\phi, v)) := c_2(v) := sv \,,
\end{equation}
where $s \in \mbb{R}$ is a constant. One may directly verify the one-cocycle identity is satisfied 
\begin{align*}
    &\rho^*_{(\phi_2, v_2)^{-1}} c((\phi_1, v_1)) + c((\phi_2, v_2)) = \rho^*_{(\phi_2, v_2)^{-1}}sv_1 + s v_2 = s\phi_2^*v_1 + s v_2\,,\\
    &c((\phi_1, v_1)\cdot (\phi_2, v_2)) = c((\phi_1 \phi_2, v_1\phi_2 + v_2)) = s\phi_2^*v_1 + s v_2\,.
\end{align*}
The Lie-algebra two-cocycle ${\rm d}c$ and its dual ${\rm d}c^T$ can be calculated to be
\begin{equation}
    \frac{d}{dt}\bigg|_{t=0} c(\exp(tu_S, tu_T)) = \frac{d}{dt}\bigg|_{t=0} s \sum_{n=1}^{\infty} \sum_{m=0\,,\,i<n}^{\infty}\frac{1}{n!}t^n\mathcal{L}_{u_S}^m u_T = s u_T\,, \quad {\rm d}c^T(b_2) = (0, s b_2)\,.
\end{equation}
Defining the vector field $\bu_S \cdot \nabla = \dot{\phi}\phi^{-1} \in \mathfrak{X}(M)$ and scalar function $u_T = \phi_* \dot{v} \in \mathcal{F}(M)$, we have affine advection equation for $\vartheta_s$
\begin{equation}\label{eq:affine advection}
    \p_t \vartheta_s + \mathcal{L}_{u_S} \vartheta_s + {\rm d}c((u_S, u_T)) = 0 \quad \Rightarrow\quad \p_t \vartheta_s + \bu_S \cdot \nabla \vartheta_s + s u_T = 0\,,
\end{equation}
where the solution is given by $\vartheta_s(t) = \phi_{t*}\vartheta_s(0) - s\phi_{t*}v_t$.
In relation to the temperature scalar $\vartheta$ in 3D which is advected by the full three dimensional flow, one can obtain the affine advection equation \eqref{eq:affine vartheta action} by assuming that $\vartheta$ has a constant derivative in the $y$ direction. That is, $\vartheta$ has the following decomposition
\begin{equation}\label{eqn:vertical_slice_advected_scalars}
    \vartheta(x,y,z,t) = \vartheta_s(x,z,t) + (y - y_0)s \,,
\end{equation}
where $s \in \mbb{R}$ is the same constant appearing in group one-cocycle definition \eqref{eq:VSM one-cocycle}.

To obtain the full VSMs Euler-Poincar\'e equation, we use the following affine Euler-Poincar\'e variational principle.
\begin{theorem}[Affine Euler-Poincar\'e theorem for VSMs]\label{thm:EP_slice}
    Recalling the preceding dynamical variables for the VSMs, namely the slice vector field $u_S \in \mathfrak{X}(M)$, the transverse velocity scalar $u_T \in \mcal{F}(M)$, the in slice volume form $D\,d^2x \in \operatorname{Den}(M)$ and the potential temperature scalar $\vartheta_s \in \mcal{F}(M)$. 
    Suppose we have a Lagrangian $\ell(u_S, u_T,D,\vartheta_s): (\mathfrak{X}(M)\ltimes \mcal{F}(M))\times\operatorname{Den}(M)\times \mcal{F}(M) \mapsto \mathbb{R}$. An application of Hamilton's Principle
    \begin{equation*}
        0 = \delta S = \delta \int_{t_0}^{t_1} \ell(u_S, u_T,{D},\vartheta_s)\,dt \,,
    \end{equation*}
    subject to the following constraints derived from their definitions 
    \begin{align*}
        (u_S, u_T) := (\dot{\phi}\phi^{-1}, -\phi_*\dot{v}) \quad &\Rightarrow \quad \delta(u_S,u_T) = (\dot{v}_S, \dot{v}_T) - \ad_{(u_S,u_T)}(v_S,v_T)
        \,,\\
        D\,d^2x := \phi_*(D_0\,d^2x_0)\quad &\Rightarrow \quad \delta {D}\,d^2x = -\mcal{L}_{v_S}({D}\,d^2x)
        \,,\\
        \vartheta_s(t) := \phi_*\vartheta_s(0) - s\phi_*v  \quad &\Rightarrow \quad \delta\vartheta_s = -\mcal{L}_{v_S}\vartheta_s - sv_T
        \,,
    \end{align*}
    where $(v_S,v_T) := (\delta \phi\phi^{-1}, -\phi_* \delta v) \in \mathfrak{s}$ are arbitrary Lie-algebra elements, implies the Euler-Poincar\'e equations
    \begin{align}\label{eqn:abstract_vertical_slice_EP}
    \begin{split}
        \left(\p_t + \ad^*_{u_S} \right)\frac{\delta\ell}{\delta u_S} &= \frac{\delta\ell}{\delta u_T} \diamond u_T + \frac{\delta\ell}{\delta {D}} \diamond {D} + \frac{\delta\ell}{\delta \vartheta_s} \diamond \vartheta_s \,,\\
        \left(\p_t + \mcal{L}_{u_S} \right)\frac{\delta\ell}{\delta u_T} &= - \frac{\delta \ell}{\delta \vartheta_s}s\,. 
    \end{split}
    \end{align}
\end{theorem}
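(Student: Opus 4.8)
The plan is to prove the theorem by the standard reduced–variational argument: substitute the prescribed constrained variations into $\delta S=0$, integrate by parts in time, transpose the Lie-derivative and $\ad$ operations off the arbitrary Lie-algebra variations $(v_S,v_T)$ using the definitions of $\ad^*$, $\diamond$ and $\mathrm{d}c^T$, and then read off the two equations from the independence of $v_S\in\mathfrak{X}(M)$ and $v_T\in\mathcal{F}(M)$. First I would expand the first variation as
\begin{equation*}
0 = \delta S = \int_{t_0}^{t_1}\left( \Big\langle \tfrac{\delta\ell}{\delta u_S}, \delta u_S\Big\rangle + \Big\langle \tfrac{\delta\ell}{\delta u_T}, \delta u_T\Big\rangle + \Big\langle \tfrac{\delta\ell}{\delta D}, \delta D\Big\rangle + \Big\langle \tfrac{\delta\ell}{\delta\vartheta_s}, \delta\vartheta_s\Big\rangle \right)dt
\end{equation*}
and substitute the three constrained variations in the statement. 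Those formulas are themselves consequences of the kinematic definitions: $\delta(u_S,u_T)=\partial_t(v_S,v_T)-\ad_{(u_S,u_T)}(v_S,v_T)$ follows by equating the mixed $t$- and $\e$-derivatives of a two-parameter family in $S$ (Lie-algebra-valued, with the semidirect bracket of $\mathfrak{s}$), while $\delta(D\,d^2x)=-\mathcal{L}_{v_S}(D\,d^2x)$ and $\delta\vartheta_s=-\mathcal{L}_{v_S}\vartheta_s-sv_T$ come from differentiating the pushforward relation for $D\,d^2x$ and the affine action \eqref{eq:affine vartheta action} for $\vartheta_s$; the extra summand $-sv_T$ is exactly $-\mathrm{d}c((v_S,v_T))$ with the cocycle \eqref{eq:VSM one-cocycle}.

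The core computation has three moves. (i) In the velocity term, integrate by parts in time using $(v_S,v_T)|_{t_0,t_1}=0$ to turn $\langle\delta\ell/\delta(u_S,u_T),\partial_t(v_S,v_T)\rangle$ into $-\langle\partial_t\,\delta\ell/\delta(u_S,u_T),(v_S,v_T)\rangle$, and use the definition of $\ad^*$ on $\mathfrak{s}^*$ to write $-\langle\delta\ell/\delta(u_S,u_T),\ad_{(u_S,u_T)}(v_S,v_T)\rangle=-\langle\ad^*_{(u_S,u_T)}\,\delta\ell/\delta(u_S,u_T),(v_S,v_T)\rangle$. (ii) In the $D$- and $\vartheta_s$-terms, use the diamond identity \eqref{eqn:diamond} to move $\mathcal{L}_{v_S}$ off $v_S$, producing $\langle\delta\ell/\delta D\diamond D,v_S\rangle$ and $\langle\delta\ell/\delta\vartheta_s\diamond\vartheta_s,v_S\rangle$, and use the definition of $\mathrm{d}c^T$ to write the residual cocycle contribution as $\langle\delta\ell/\delta\vartheta_s,-sv_T\rangle=-\langle\mathrm{d}c^T(\delta\ell/\delta\vartheta_s),(v_S,v_T)\rangle=-\langle(0,s\,\delta\ell/\delta\vartheta_s),(v_S,v_T)\rangle$. (iii) Decompose the resulting $\mathfrak{s}^*$-valued identity into its $\mathfrak{X}(M)^*$- and $\mathcal{F}(M)^*$-components. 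Because $\mathfrak{s}=\mathfrak{X}(M)\ltimes\mathcal{F}(M)$ is a semidirect product, $\ad^*_{(u_S,u_T)}$ is block-triangular: its slice component is $\ad^*_{u_S}(\delta\ell/\delta u_S)$ plus a cross-term built from $\delta\ell/\delta u_T$ and $u_T$ through $\diamond$ — this is where the term $\frac{\delta\ell}{\delta u_T}\diamond u_T$ originates once transposed to the right-hand side — while its transverse component is $\mathcal{L}_{u_S}(\delta\ell/\delta u_T)$. Since $v_S$ and $v_T$ are arbitrary and independent, the fundamental lemma of the calculus of variations then forces their coefficients to vanish separately, which are exactly the two equations of \eqref{eqn:abstract_vertical_slice_EP}.

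I expect the only real difficulty to be bookkeeping rather than conceptual: making the semidirect-product $\ad$/$\ad^*$ formulas and the sign conventions for $\diamond$ (as fixed by \eqref{eqn:diamond}) mutually consistent, so that the slice equation acquires precisely $+\frac{\delta\ell}{\delta u_T}\diamond u_T$ and the transverse equation acquires precisely $-s\,\delta\ell/\delta\vartheta_s$ from $\mathrm{d}c^T$. A secondary point worth writing out carefully, though routine, is the derivation of the constrained variation $\delta(u_S,u_T)=\partial_t(v_S,v_T)-\ad_{(u_S,u_T)}(v_S,v_T)$ from $(u_S,u_T)=(\dot\phi\phi^{-1},-\phi_*\dot v)$ and $(v_S,v_T)=(\delta\phi\phi^{-1},-\phi_*\delta v)$, since this is the step where the semidirect structure of $\mathfrak{s}$ — and hence the coupling between the slice and transverse dynamics — first enters.
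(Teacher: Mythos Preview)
Your proposal is correct and follows essentially the same route as the paper: both proceed by substituting the constrained variations into $\delta S=0$, integrating by parts in time, transposing $\ad$ and $\mathcal{L}$ onto the arbitrary pair $(v_S,v_T)$ via $\ad^*$ and $\diamond$, and then reading off the two equations from the independence of $v_S$ and $v_T$. Your framing in terms of the block-triangular $\ad^*$ on $\mathfrak{s}^*$ and the explicit $\mathrm{d}c^T$ is slightly more abstract than the paper's direct manipulation, but the underlying computation is identical.
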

\begin{remark}
    For the avoidance of doubt, the variational derivative of the Lagrangian with respect to the volume form $D\,d^2x$ is denoted by $\delta\ell / \delta D$ and is a $0$-form on $M$. As is clearer from the notation, $\delta\ell / \delta \vartheta_s$ and $\delta\ell / \delta u_T$ are $2$-forms (densities) on $M$ and $\delta\ell / \delta u_S$ is a $1$-form density.
\end{remark}
\begin{proof}
    We verify this by direct calculation. 
    \begin{align}
        \begin{split}
            0 &= \delta \int_{t_0}^{t_1} \ell(u_S, u_T,{D},\vartheta_s)\,dt = \int_{t_0}^{t_1} \scp{\frac{\delta \ell}{\delta (u_S,u_T)}}{\delta (u_S,u_T)}  + \scp{\frac{\delta \ell}{\delta D}}{\delta D} + \scp{\frac{\delta \ell}{\delta \vartheta_s}}{\delta \vartheta_s}\,dt \\
            &= \int_{t_0}^{t_1} \scp{\frac{\delta \ell}{\delta (u_S, u_T)}}{(\dot{v}_S, \dot{v}_T) - \ad_{(u_S,u_T)}(v_S,v_T)} + \scp{\frac{\delta \ell}{\delta D}}{-\mcal{L}_{v_S}({D}\,d^2x)} \\
            & \qquad \qquad \qquad + \scp{\frac{\delta \ell}{\delta \vartheta_s}}{-\mcal{L}_{v_S}\vartheta_s - sv_T}\,dt\\
            &= \int_{t_0}^{t_1} \scp{-\left(\p_t + \ad^*_{u_S}\right)\frac{\delta \ell}{\delta u_S} + \frac{\delta \ell}{\delta u_T}\diamond u_T}{v_S} + \scp{-\left(\p_t + \mathcal{L}_{u_S}\right)\frac{\delta \ell}{\delta u_T}}{v_T} \\
            & \qquad \qquad \qquad + \scp{\frac{\delta \ell}{\delta D}\diamond ({D}\,d^2x)}{v_S} + \scp{\frac{\delta \ell}{\delta \vartheta_s}\diamond \vartheta_s}{v_S} + \scp{-s\frac{\delta \ell}{\delta \vartheta_s}}{v_T}\,dt \\
            &= \int_{t_0}^{t_1} \scp{-\left(\p_t + \ad^*_{u_S}\right)\frac{\delta \ell}{\delta u_S} + \frac{\delta \ell}{\delta u_T}\diamond u_T + \frac{\delta \ell}{\delta D}\diamond ({D}\,d^2x) + \frac{\delta \ell}{\delta \vartheta_s}\diamond \vartheta_s}{v_S}  \\
            & \qquad \qquad \qquad + \scp{-\left(\p_t + \mathcal{L}_{u_S}\right)\frac{\delta \ell}{\delta u_T}-s\frac{\delta \ell}{\delta \vartheta_s}}{v_T}\,dt\,.
        \end{split}
    \end{align}
    Then, one have the desired affine Euler-Poincar\'e equations \eqref{eqn:abstract_vertical_slice_EP} from the fact that $v_S$ and $u_T$ are assumed to be arbitrary. 
\end{proof}

\begin{remark}\label{rmk:general_vertical_slice_EP}
    Together with the evolution equations of $D\,d^2x$ and $\vartheta_s$, equations \eqref{eq:D advection} and \eqref{eq:affine advection} respectively, the Euler-Poincar\'e equations corresponding to Theorem \ref{thm:EP_slice} may be written as
    \begin{align}
        (\p_t + \mathcal{L}_{u_S})\left( \frac{1}{{D}}\frac{\delta\ell}{\delta u_S} \right) + \frac{1}{{D}}\frac{\delta\ell}{\delta u_T}du_T &= d\left( \frac{\delta\ell}{\delta {D}} \right) -\frac{1}{{D}}\frac{\delta\ell}{\delta\vartheta_s}d\vartheta_s
        \label{eqn:EP_slice_mom}
        \,,\\
        (\p_t + \mathcal{L}_{u_S})\left( \frac{1}{{D}}\frac{\delta\ell}{\delta u_T} \right) &= - \frac{1}{{D}}\frac{\delta\ell}{\delta\vartheta_s}s
        \label{eqn:EP_slice_transverse}
        \,,\\
        (\p_t + \mathcal{L}_{u_S})\vartheta_s + u_Ts &= 0
        \label{eqn:EP_slice_advection_scalar}
        \,,\\
        (\p_t + \mathcal{L}_{u_S})({D}\,d^2x) &= 0 
        \label{eqn:EP_slice_advection_volume_form}
        \,.
    \end{align}
\end{remark}

\begin{corollary}[The Euler-Boussinesq Eady model]
    An application of Theorem \ref{thm:EP_slice} to the Lagrangian
    \begin{equation}\label{eqn:Eady_action}
        \ell[u_S,u_T,{D},\vartheta_s,p] = \int_M \frac{{D}}{2}(|\bu_S|^2 + u_T^2) + {D}fu_Tx + \frac{g}{\vartheta_0}{D}\left( z - \frac{H}{2} \right) \vartheta_s + p(1-{D})\, d^2x \,,
    \end{equation}
    yields the Euler-Boussinesq Eady equations \cite{CH2013},
    \begin{align}
	   \p_t \bu_S + \bu_S \cdot \nabla \bu_S &= fu_T\wh{x}+ \frac{g}{\vartheta_0}\vartheta_s\wh{z} - \nabla p \,,
	   \label{eqn:EBE-slice}\\
	   \p_t u_T + \bu_S \cdot \nabla u_T &= - f\bu_S\cdot\wh{x}
        -\frac{g}{\vartheta_0}\left( z - \frac{H}{2} \right)s
	   \label{eqn:EBE-transverse}\,,\\
	   \p_t\vartheta_s + \bu_S \cdot \nabla \vartheta_s + u_Ts &= 0
	   \label{eqn:EBE-temperature}\,,\\
	   \nabla \cdot \bu_S &= 0
	   \label{eqn:EBE-incompressibility}\,.
    \end{align}
\end{corollary}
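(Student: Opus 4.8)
The plan is to read off the variational derivatives of the Eady Lagrangian \eqref{eqn:Eady_action}, substitute them into the Euler--Poincar\'e system written in the differential-form format of Remark \ref{rmk:general_vertical_slice_EP}, and treat the pressure $p$ as a Lagrange multiplier. From \eqref{eqn:Eady_action} one records
\begin{equation*}
\frac{\delta\ell}{\delta u_S} = D\,\bu_S^\flat\otimes d^2x,\qquad
\frac{\delta\ell}{\delta u_T} = D(u_T+fx)\,d^2x,\qquad
\frac{\delta\ell}{\delta\vartheta_s} = \frac{g}{\vartheta_0}D\Big(z-\frac H2\Big)\,d^2x,
\end{equation*}
\begin{equation*}
\frac{\delta\ell}{\delta D} = \frac12\big(|\bu_S|^2+u_T^2\big) + fu_Tx + \frac{g}{\vartheta_0}\Big(z-\frac H2\Big)\vartheta_s - p,\qquad
\frac{\delta\ell}{\delta p} = 1-D.
\end{equation*}
Free variation of $p$ forces $D=1$, so \eqref{eqn:EP_slice_advection_volume_form} collapses to $\mathcal{L}_{u_S}d^2x=(\nabla\cdot\bu_S)\,d^2x=0$, which is \eqref{eqn:EBE-incompressibility}, while \eqref{eqn:EP_slice_advection_scalar} is already \eqref{eqn:EBE-temperature} since $\vartheta_s$ is a scalar.

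For the transverse momentum equation I would put $\tfrac1D\,\delta\ell/\delta u_T = u_T+fx$ into \eqref{eqn:EP_slice_transverse}; since this is a $0$-form, $\mathcal{L}_{u_S}$ reduces to $\bu_S\cdot\nabla$, and $\bu_S\cdot\nabla(fx)=f\,\bu_S\cdot\wh x$ gives \eqref{eqn:EBE-transverse} at once. The slice momentum equation is the only computation with any substance. There I would insert $\tfrac1D\,\delta\ell/\delta u_S=\bu_S^\flat$ and $\tfrac1D\,\delta\ell/\delta u_T=u_T+fx$ into \eqref{eqn:EP_slice_mom}, expand $\mathcal{L}_{u_S}\bu_S^\flat=(\bu_S\cdot\nabla\bu_S)^\flat+\tfrac12\,d|\bu_S|^2$ by Cartan's formula, and compare with $d(\delta\ell/\delta D)-\tfrac{g}{\vartheta_0}(z-\tfrac H2)\,d\vartheta_s$ on the right. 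The exact terms $\tfrac12\,d|\bu_S|^2$, $u_T\,du_T$, and $fx\,du_T$ cancel between the two sides; the product rule splits $\tfrac{g}{\vartheta_0}\,d\big((z-\tfrac H2)\vartheta_s\big)$ into a piece that kills the explicit $-\tfrac{g}{\vartheta_0}(z-\tfrac H2)\,d\vartheta_s$ plus the buoyancy one-form $\tfrac{g}{\vartheta_0}\vartheta_s\,dz$, and splits $f\,d(u_Tx)$ into $fx\,du_T$ (already cancelled) plus the Coriolis one-form $fu_T\,dx$. Raising the index with the Euclidean metric leaves $\p_t\bu_S+\bu_S\cdot\nabla\bu_S = fu_T\wh x+\tfrac{g}{\vartheta_0}\vartheta_s\wh z-\nabla p$, which is \eqref{eqn:EBE-slice}.

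I do not expect a genuine obstacle; the one point requiring care is the bookkeeping of densities and musical isomorphisms, since $\delta\ell/\delta u_S$ is a one-form density while $\delta\ell/\delta u_T$ and $\delta\ell/\delta\vartheta_s$ are two-form densities, so one must divide by $D$ and identify the results with Euclidean vector fields and functions consistently before carrying out the Cartan-calculus cancellations. Once $D\equiv1$ is imposed the argument reduces to standard vector identities on $M\subseteq\mathbb{R}^2$, and the system \eqref{eqn:EBE-slice}--\eqref{eqn:EBE-incompressibility} follows.
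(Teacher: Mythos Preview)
Your proposal is correct and follows exactly the route the paper sets up: compute the variational derivatives of \eqref{eqn:Eady_action}, insert them into the Euler--Poincar\'e system in the form \eqref{eqn:EP_slice_mom}--\eqref{eqn:EP_slice_advection_volume_form} of Remark~\ref{rmk:general_vertical_slice_EP}, impose $D=1$ via the Lagrange multiplier $p$, and carry out the Cartan-calculus cancellations. The paper states the corollary without an explicit proof, so your write-up is precisely the calculation one is meant to supply; your bookkeeping of the one-form identities (in particular $\mathcal{L}_{u_S}\bu_S^\flat=(\bu_S\cdot\nabla\bu_S)^\flat+\tfrac12\,d|\bu_S|^2$ and the product-rule splits of $d(u_Tx)$ and $d((z-\tfrac H2)\vartheta_s)$) is accurate.
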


\begin{remark}[In what sense is the VSM for the Euler-Boussinesq Eady equations three dimensional?]
All of the equations in the EB Eady model above are evaluated in two dimensions on the vertical slice. 
In what sense, then, does the transverse velocity $u_T$ confer any sense of three dimensionality? The answer 
to this question stems from the constant slope $s$ of the advected scalar function $\vartheta$ in  \eqref{eqn:vertical_slice_advected_scalars}. Consider 3D advected Lagrangian labels $(L_1(x_1,x_2,t), L_2(x_1,x_2,t),L_3(x_1,x_2,x_3,t))$ with $(x_1,x_2,x_3)=(x,z,y)$ as in \eqref{eqn:vertical_slice_advected_scalars}. In that case, the 3rd row of the Jacobian matrix $J_{ij}=\p L_i/\p x_j$ for the inverse map (Euler-to-Lagrange) would have entries $J_{3j}= (\p L_1/\p x_3,\p L_2/\p x_3, \p L_3/\p x_3)=(0,0,s)$. Consequently, imposing the relation $\det J = 1$ in 3D would imply the relation $\det J = 1/s$ on the 2D vertical slice. Hence, the 2D velocity would be divergence-free, i.e., $\nabla \cdot \bu_S = 0$ in equation \eqref{eqn:EBE-incompressibility} and the transverse velocity $u_T$ would confer a sense of three dimensionality for the class of flows with this type of Lagrangian label dependence for the inverse flow map. In particular, the corresponding Lagrange-to-Euler 3D flow map is written in equation \eqref{eqn:vertical_slice_flow_map}.
\end{remark}

\subsection{Clebsch-Lin derivation of VSMs}

In the above summary of the Lagrangian formulation of VSMs, the actions required to produce the Euler-Poincar\'e equations are nonstandard. That is, the structure is more involved than that required for standard models of continuum dynamics, the equations for which can be derived from the Clebsch approach, Hamilton-Pontryagin approach, or by using Lin-constrained variations. Thus, we seek to describe VSMs without the derivation relying on this action, and instead constrain the required relationships using a Lagrange multiplier. The formulation undertaken here rederives the VSM by using a mixed Clebsch-Lin variational principle, thus deriving the Euler-Boussinesq Eady model without the need for the more involved geometry presented in the previous section. This can be thought of as a `short cut' to the above derivation, and is more usable in practice but neglects to reveal the intricacies of the underlying geometry. Note that the resulting equations have the same Lie-Poisson structure as those derived from the composition-of-maps (C$\circ$M) approach for Hamilton's variational principle for the symmetry reduced Eulerian representation of fluid dynamics \cite{HHS2023c}.

\begin{remark}\label{rmk:Clebsch-Lin}
    We have introduced the term `Clebsch-Lin' for this approach, since the variational principle will use a combination of the constrained Euler-Poincar\'e variations and the Clebsch approach of constraining advected equations on passively advected quantities. Namely, the variations of the two dimensional vector field, $u_S$, and the advected quantity, $Dd^2x$, which is advected purely by $u_S$, will be constrained to take the form of the Lin constraints found in the standard theory of Euler-Poincar\'e reduction for semidirect product spaces \cite{HMR1998}. Recall that the variable representing the physics of the temperature, $\vartheta_s$, is not advected purely by the vector field $u_S$. The equation satisfied by $\theta_s$ will be enforced by a Lagrange multiplier as a Clebsch constraint, and the variation of $\theta_s$, as well as the variations of the remaining variables, will be taken to be arbitrary.
\end{remark}

We begin by formulating the Clebsch-Lin form of Hamilton's principle corresponding to the EB Eady VSM \eqref{eqn:EP_slice_mom} - \eqref{eqn:EP_slice_advection_volume_form} by augmenting the action to the sum of the EB Eady VSM Lagrangian \eqref{eqn:Eady_action} and a constraint on the dynamics of the transverse velocity $u_T$. The variational principle then reads 
\begin{align}
\begin{split}
    0 = \delta S[u_S, u_T, D, \vartheta_s, p, \pi_T] = &\delta \int_a^b \int_{M}
\tfrac12 {D} |\bu_S|^2 + \tfrac12 {D}u_T^2 + {D} u_Tfx + \frac{g}{\vartheta_0} D\left(z - \frac{H}{2} \right)\vartheta_s - p({D}-1)
\\&\hspace{2cm}
- \pi_T \left( \p_t \vartheta_s + \bu_S\cdot\nabla \vartheta_s  + su_T\right) 
d^2x\, dt \,. \label{eqn: Cleb Lin VSM}
\end{split}
\end{align}
Here, following Remark \ref{rmk:Clebsch-Lin}, we have the constrained Euler-Poincar\'e variations $\delta u_S = \p_t\xi - {\rm ad}_{u_S}\xi$ and $\delta ({D} d^2x_s)=-\,{\cal L}_{\xi}({D} d^2x_s)$ where $\xi\in\mathfrak{X}(M)$ are arbitrary and vanishes at the boundaries. Furthermore, all other variations, $\delta \pi_T, \delta u_T, \delta \vartheta_s$ and $\delta p$ are taken to be arbitrary. In \eqref{eqn: Cleb Lin VSM}, the Lagrange multiplier $\pi_T$ is introduced which takes the form of momentum density and it is identified as the total momentum normal to an area element on the slice in the horizontal direction transverse to the vertical slice.
Variations of the action produces the following variational derivatives 
\begin{align}
    \begin{split}
        0 = &\int_a^b \int_{M}
B \,\delta {D} 
+ \delta \bu_S \cdot \left( {D} \bu_S - \pi_T \nabla \vartheta_s \right) 
+ \delta u_T  \left( {D} (u_T + fx) - s\pi_T \right) 
- \delta p ({D}-1)
\\&\hspace{15mm}
+ \delta \vartheta_s \left( \p_t \pi_T + {\rm div}(\pi_t \bu_S) + {D}\gamma(z) \right)
- \delta \pi_T \left( \p_t \vartheta_s + \bu_S\cdot\nabla \vartheta_s  + su_T\right) 
d^2x_sdt
    \end{split}\label{eqn: CLagVar1}
\end{align}
where for convenience and brevity in notation we define
\begin{align}
B := \frac{\delta \ell}{\delta D } 
= \tfrac12 |\bu_S|^2 + \tfrac12 u_T^2 +  u_Tfx + {D}\gamma(z) \vartheta_s - p
\quad\hbox{and}\quad 
\gamma(z) := \frac{g}{\theta_0} \left(z - \frac{H}{2} \right)
\,,\label{def: gamma-B}
\end{align}
in which $B$ is the Bernoulli function.

We also have the continuity equation for the advected areal density ${D} d^2x_s$, written in its calculus form or in its Lie derivative form, respectively, as
\begin{align}
\p_t {D} + {\rm div}({D} \bu_S) = 0
\,,\quad\hbox{or}\quad 
\left(\p_t + {\cal L}_{u_s}\right)\left({D} d^2x_s\right) = 0
\,,\label{eqn: D}
\end{align}
in which ${\cal L}_{u_s}$ is the Lie-derivative with respect to the area-preserving vector field $u_s = \bu_S\cdot\nabla$, with time-dependent planar vector components $\bu_S(x,z,t)\in \mbb{R}^2$.  

Upon substituting the Euler-Poincar\'e variations into the variational results for $\delta S$ above in \eqref{eqn: CLagVar1} and integrating by parts following \cite{HMR1998}, one finds the equation of motion and auxiliary equations in Lie derivative form, as
\begin{align}
\begin{split}
\left(\p_t + {\cal L}_{u_s}\right)\left({D}^{-1}\bM\cdot d\bx \right) &= dB
\,,\\
\left(\p_t + {\cal L}_{u_s}\right)\left({D} d^2x_s\right) &= 0
\,,\\
\left(\p_t + {\cal L}_{u_s}\right)\left(\pi_t d^2x_s\right) &= -  {D}\gamma(z) d^2x_s
\,,\\
\left(\p_t + {\cal L}_{u_s}\right)\vartheta_s &= -s u_T
\,.
\end{split}
\label{eqns: Clebsch0}
\end{align}
Here, one defines the following momentum variables whose value will be constrained to $D=1$ 
by the Lagrange multiplier $p$ in \eqref{eqn: CLagVar1},
\begin{align}
\begin{split}
 \bM &:= {D}\bu_S - \pi_T \nabla \vartheta_s
 \,,\\
 \pi_T &:= s^{-1}{D} (u_T + fx) 
\,.\end{split}
\label{def: momvars}
\end{align}
\begin{theorem}[Kelvin-Noether theorem for the Euler-Boussinesq VSM]
	The Euler-Boussinesq vertical slice model in equation \eqref{eqns: Clebsch1} satisfies
	\begin{equation}
	\frac{d}{dt} \oint_{\gamma_t} 
     {D}^{-1}\bM\cdot d\bx
    = \oint_{\gamma_t}  dB = 0 \,,
	\end{equation}
	where $\gamma_t:C^1 \mapsto M$ is a closed loop moving with the flow $\gamma_t=\phi_t\gamma_0$
    generated by the vector field $u_s=\dot{\phi}_t\phi_t^{-1}$.
\end{theorem}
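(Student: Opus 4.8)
The plan is to obtain the statement directly from the Lie-derivative form of the equation of motion --- the first line of \eqref{eqns: Clebsch0} --- combined with the standard transport theorem for a circulation integral around a loop carried by the flow. First I would record the transport theorem: if $\gamma_t = \phi_t\gamma_0$ is a closed loop advected by the flow $\phi_t$ generated by the (divergence-free, time-dependent) vector field $u_s = \dot{\phi}_t\phi_t^{-1}$, and $\alpha_t$ is any smooth one-parameter family of $1$-forms on $M$, then
\[
\frac{d}{dt}\oint_{\gamma_t}\alpha_t = \oint_{\gamma_t}\left(\p_t + \mathcal{L}_{u_s}\right)\alpha_t\,.
\]
This is a one-line computation: change variables to the fixed reference loop, $\oint_{\gamma_t}\alpha_t = \oint_{\gamma_0}\phi_t^*\alpha_t$, differentiate under the integral using the dynamical definition of the Lie derivative, $\tfrac{d}{dt}\phi_t^*\alpha_t = \phi_t^*(\p_t\alpha_t + \mathcal{L}_{u_s}\alpha_t)$, and then push the pullback back onto $\gamma_t$.

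Next I would apply this with the circulation one-form $\alpha_t = D^{-1}\bM\cdot d\bx$, which is a bona fide smooth $1$-form because the Lagrange multiplier $p$ in \eqref{eqn: CLagVar1} enforces $D = 1 > 0$. The first of equations \eqref{eqns: Clebsch0} reads $(\p_t + \mathcal{L}_{u_s})(D^{-1}\bM\cdot d\bx) = dB$, so the transport theorem gives $\tfrac{d}{dt}\oint_{\gamma_t}D^{-1}\bM\cdot d\bx = \oint_{\gamma_t} dB$. Since $B$ in \eqref{def: gamma-B} is a single-valued $0$-form on $M$, the form $dB$ is exact, hence $\oint_{\gamma_t} dB = 0$ by Stokes' theorem, which is the claim. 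One may add the remark that this is precisely the Kelvin--Noether theorem of \cite{HMR1998} specialised to the EB Eady VSM: $D^{-1}\bM\cdot d\bx = D^{-1}\,\delta\ell/\delta u_S$ is the momentum per unit mass, and all of the advected-quantity ($D$, $\vartheta_s$, $\pi_T$, $u_T$) contributions on the right-hand side of the motion equation have combined into the single exact form $dB$, which is why the circulation is conserved.

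There is no genuine obstacle here; the result is a routine consequence of the transport theorem once the equation of motion is written in Lie-derivative form. The only points requiring a moment's care are: (i) justifying the transport theorem at the available regularity, which is immediate from the dynamical definition of $\mathcal{L}_{u_s}$ and smoothness of $\phi_t$ and $\alpha_t$; (ii) verifying that the right-hand side of the momentum equation really is exact --- this is exactly the bookkeeping, already carried out in passing from \eqref{eqn: CLagVar1} to \eqref{eqns: Clebsch0}, that collapses the $\nabla p$, buoyancy, Coriolis and $\pi_T\nabla\vartheta_s$ terms into $dB$ with $B$ the Bernoulli function of \eqref{def: gamma-B}; and (iii) noting that $D\equiv 1$ makes $D^{-1}\bM$ a well-defined integrand so the loop integral is meaningful in the first place.
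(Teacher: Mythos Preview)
Your proposal is correct and follows exactly the standard Kelvin--Noether argument that the paper relies on implicitly: the paper states the theorem without proof immediately after \eqref{eqns: Clebsch0}, and your derivation---transport theorem for the advected loop, then the first line of \eqref{eqns: Clebsch0}, then exactness of $dB$---is precisely the intended one-line justification. The only cosmetic point is that the theorem statement in the paper references \eqref{eqns: Clebsch1} (evidently a typo for \eqref{eqns: Clebsch0}), which you have already silently corrected.
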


\begin{remark}[PV conservation for the VSM]\label{rmk: PV VSM}
Because $d({D}^{-1}\bM\cdot d\bx)=\yh\cdot{\rm curl}({D}^{-1}\bM) \,d^2x_s$, $d^2B=0$ and ${D}=1$, the first equation in \eqref{eqns: Clebsch0} implies potential vorticity advection (conservation of PV on fluid parcels)
\begin{align}
\left(\p_t + {\cal L}_{u_s}\right)d\left({D}^{-1}\bM\cdot d\bx \right) = 0
\quad\Longrightarrow\quad 
(\p_t+\bu_S\cdot\nabla)\left(\yh\cdot{\rm curl}\bu_S + J(\pi_T,\vartheta_s)\right) = 0
\,,\label{eqn: PVcalc0}
\end{align}
where $J(a,b)dx\wedge dz = da\wedge db$ defines the Jacobian operation for functions $(a,b)$ of $(x,z)\in \cal S$. If we also define $\bu_S:=\nabla^\perp\psi_s$ for a stream function $\psi_s$ since ${D}=1$ implies ${\rm div}\bu_S=0$, then $\yh\cdot{\rm curl}\bu_S=\Delta\psi_s$ for the Laplacian operator $\Delta$ on the vertical $(x,z)$ slice and one may write PV conservation on fluid parcels for the VSM as 
\begin{align}
\left(\p_t + \bu_S\cdot\nabla \right)q = \p_t q+ J(\psi_s,q)= 0
\quad\hbox{with}\quad 
q:= \Delta\psi_s + J(\pi_T,\vartheta_s)
\,.\label{eqn: PVcons0}
\end{align}
\end{remark}

\subsection{Hamiltonian formulation of VSMs}\label{sec: Ham VSM}
Recall that the Lagrangian for the Euler-Boussinesq Eady model is given by
\begin{equation}\tag{\eqref{eqn:Eady_action} revisited}
    \ell[u_S,u_T,{D},\vartheta_s,p] = \int_M \frac{{D}}{2}(|\bu_S|^2 + u_T^2) + {D}fu_Tx + \frac{g}{\vartheta_0}{D}\left( z - \frac{H}{2} \right) \vartheta_s + p(1-{D})\, d^2x \,.
\end{equation}
Following the approach taken in a previous variational derivation of the model \cite{CH2013}, we may define momentum variables with respect to both $u_S$ and $u_T$ as
\begin{equation}
    m_S = \frac{\delta\ell}{\delta u_S}\,,\quad\hbox{and}\quad m_T = \frac{\delta\ell}{\delta u_T} = s\pi_T \,,
\end{equation}
where $\pi_T$ is the momentum density introduced in \eqref{eqn: Cleb Lin VSM} whose relationship with $m_T$ will be discussed later in this section. Continuing with the $m_T$ definition of momentum density for the transverse velocity $u_T$, for the Euler-Boussinesq Eady system, we have
\begin{equation}
    m_S = \bm_S\cdot d\bx \otimes d^2x = {D}\bu_S\cdot d\bx \otimes d^2x \,,\quad\hbox{and}\quad m_T = {D}u_T + Dfx \,.
\end{equation}
Defining $\gamma(z) := \frac{g}{\vartheta_0} (z - \frac{H}{2})$, the Hamiltonian may be calculated as
\begin{align}
\begin{split}
    h[m_S,m_T,{D},\vartheta_s] &= \int_M \bm_S\cdot\bu_S + m_Tu_T \,d^2x - \ell[u_S,u_T,{D},\vartheta_s,p]
    \\
    &= \int_M \frac{|\bm_S|^2}{{D}} + m_T\left( \frac{m_T}{{D}}-fx \right) - \frac{{D}}{2}\bigg(\left|\frac{\bm_S}{{D}}\right|^2 + \left(\frac{m_T}{{D}} - fx \right)^2\bigg) 
    \\
    &\qquad\qquad\qquad - {D}fx\left(\frac{m_T}{{D}} - fx\right) - {D}\gamma(z)\vartheta_s + p({D}-1)\,d^2x
    \\
    &= \int_M \frac{|\bm_S|^2}{2{D}} + \frac{m_T}{2{D}} - m_Tfx + \frac{{D}}{2}(fx)^2 - {D}\gamma(z)\vartheta_s + p(D-1) \,d^2x
    \\
    &= \int_M \frac{|\bm_S|^2}{2{D}} + \frac{(m_T - {D}fx)^2}{2{D}} - {D}\gamma(z)\vartheta_s + p(D-1) \,d^2x \,.
\end{split}\label{eqn: tangled EB VSM ham}
\end{align}
It can be easily verified that this agrees with the conserved energy known to the literature \cite{CH2013, VCC2014}
\begin{equation}
    E = \int_M \frac{{D}}{2}\left( |\bu_S|^2 + u_T^2 \right) - {D}\gamma(z)\vartheta_s \,d^2x \,.
\end{equation}
With respect to the Hamiltonian, $h$, defined above, the equations of motion can be written in the following Lie-Poisson form

\begin{equation}
\frac{\p}{\p t}
\begin{bmatrix}\,
m_S \\ {D} \\ m_T \\ \vartheta_s
\end{bmatrix}
= - 
   \begin{bmatrix}
   \ad^*_{\Box}m_S & \Box \diamond {D} & \Box \diamond m_T  & \Box \diamond \vartheta_s
   \\
   \mathcal{L}_{\Box}{D} & 0 & 0 & 0 
   \\
   \mathcal{L}_{\Box}m_T & 0 & 0 & -s 
   \\
   \mathcal{L}_{\Box}\vartheta_s & 0 & s & 0
   \end{bmatrix}
   \begin{bmatrix}
	{\delta h}/{\delta m_S} \\
	{\delta h}/{\delta {D}} \\
	{\delta h}/{\delta m_T} \\
	{\delta h}/{\delta \vartheta_s}
\end{bmatrix} 
.
\label{Eqn: Tangled-Deterministic}
\end{equation}
\paragraph{Interpretations of the Poisson matrix \eqref{Eqn: Tangled-Deterministic}.}
The Poisson matrix appearing in \eqref{Eqn: Tangled-Deterministic} is ``tangled'' in the sense that it consists of the canonical Lie-Poisson structure on the dual of the semi-direct product Lie algebra $\mathfrak{s} = \mathfrak{X}(M)\ltimes \left(\mathcal{F}(M) \oplus \mathcal{F}(M) \oplus \text{Den}(M)\right)$ coupled to a $s$ weighted symplectic structure on the cotangent bundle $T^*\mathcal{F}(M) \simeq \mathcal{F}(M)\otimes \text{Den}(M)$. The $s$ factor on the canonical symplectic structure is due to the choice of momentum density $m_T$. Choosing the canonical momentum $\pi_T = m_T/s$, one can transform the weighted symplectic structure to the canonical structure as demonstrated below when `untangeling' the Hamiltonian structure.

In the following calculations, we will seek to verify that the equations are indeed Lie-Poisson equations. We first notice that equations \eqref{Eqn: Tangled-Deterministic} imply that, for any function $f$ on the semi-direct product Lie co-algebra, $\mathfrak{X}^* \ltimes (\Lambda^2 \oplus \Lambda^2 \oplus \Lambda^0 )$, we have
	\begin{align*}
		\frac{\p f}{\p t}(m_S,{D},m_T,\vartheta_s) &= \bigg\langle \bigg( \frac{\delta f}{\delta m_S} , \frac{\delta f}{\delta D}, \frac{\delta f}{\delta m_T}, \frac{\delta f}{\delta \vartheta_s} \bigg) , \bigg( \frac{\p m_S}{\p t} , \frac{\p D}{\p t} ,\frac{\p m_T}{\p t} ,\frac{\p\vartheta_s}{\p t}  \bigg) \bigg\rangle
		\\ 
		&=-\bigg\langle\bigg( \frac{\delta f}{\delta m_S} , \frac{\delta f}{\delta D}, \frac{\delta f}{\delta m_T}, \frac{\delta f}{\delta \vartheta_s} \bigg) , \bigg( \ad^*_{{\delta h}/{
		\delta m_S}}m_S + \frac{\delta h}{\delta D}\diamond D + \frac{\delta h}{\delta m_T}\diamond m_T + \frac{\delta h}{\delta\vartheta_s}\diamond \vartheta_s ,
		\\
		&\hspace{+5cm} \mathcal{L}_{{\delta h}/{\delta m_S}}D ,   \mathcal{L}_{{\delta h}/{\delta m_S}} m_T - s\frac{\delta h}{\delta \vartheta_s} ,   \mathcal{L}_{{\delta h}/{\delta m_S}}\theta_s + s\frac{\delta h}{\delta m_T}\bigg)\bigg\rangle
		\\
		&=-\bigg\langle m_S , \ad_{\delta h / \delta m_s}\frac{\delta f}{\delta m_S} \bigg\rangle - \bigg\langle  m_T, \mathcal{L}^T_{\delta h / \delta m_S}\frac{\delta f}{\delta m_T} - \mathcal{L}^T_{\delta f / \delta m_S}\frac{\delta h}{\delta m_T}  \bigg\rangle \\
		&\qquad - \bigg\langle \vartheta_s , \mathcal{L}^T_{\delta h / \delta m_S}\frac{\delta f}{\delta \vartheta_s} - \mathcal{L}^T_{\delta f / \delta m_S}\frac{\delta h}{\delta \vartheta_s} \bigg\rangle - \bigg\langle D , \mathcal{L}^T_{\delta h / \delta m_S}\frac{\delta f}{\delta D} - \mathcal{L}^T_{\delta f / \delta m_S}\frac{\delta h}{\delta D} \bigg\rangle \\
		&\qquad + s\bigg[ \bigg\langle \frac{\delta f}{\delta m_T},\frac{\delta h}{\delta \vartheta_s} \bigg\rangle - \bigg\langle \frac{\delta f}{\delta\vartheta_s},\frac{\delta h}{\delta m_T} \bigg\rangle \bigg] \,.
	\end{align*}
Noting that the Lie bracket for the semi-direct product algebra, $\mathfrak{X}\ltimes (\Lambda^0 \oplus \Lambda^0 \oplus \Lambda^2 ) $, is
	\begin{equation*}
		[(X, f,g,\omega),(\tilde X,\tilde f,\tilde g,\tilde\omega)] = ([X,\tilde X],X(\tilde f) - \tilde X(f) ,X(\tilde g) - \tilde X(g) ,X(\tilde \omega) - \tilde X(\omega) ) \,,
	\end{equation*}
	we thus have
    \begin{equation}\label{eqn:LP_demonstration_1}
	\begin{aligned}
		\frac{\p f}{\p t} &= \bigg\langle (m_S,{D},m_T,\theta_s), \bigg[\bigg(\frac{\delta f}{\delta m_S} , \frac{\delta f}{\delta D},\frac{\delta f}{\delta m_T},\frac{\delta f}{\delta \theta_s}\bigg),\bigg(\frac{\delta h}{\delta m_S} , \frac{\delta h}{\delta D},\frac{\delta h}{\delta m_T},\frac{\delta h}{\delta \theta_s}\bigg)\bigg] \bigg\rangle
		\\
		&\qquad +s\bigg[ \bigg\langle \frac{\delta f}{\delta m_T},\frac{\delta h}{\delta \theta_s} \bigg\rangle - \bigg\langle \frac{\delta f}{\delta\theta_s},\frac{\delta h}{\delta m_T} \bigg\rangle \bigg] \,.
	\end{aligned}
    \end{equation}
The right hand side of this equation is the conventional Lie-Poisson bracket on $\mathfrak{X}\ltimes (\Lambda^0 \oplus \Lambda^0 \oplus \Lambda^2 )$, and the remaining term can be shown to be a compatible bracket. Indeed, as defined in \cite{CMR2004}, for a bilinear skew-symmetric map, $\Sigma:\mathfrak{g}\times \mathfrak{g}\mapsto \mathbb{R}$, we may define the $\Sigma$-bracket on $\mathcal{F}(\mathfrak{g}^*)$ by $\{ g_1 , g_2 \}_{\Sigma}(\mu) = \Sigma\left( {\delta g_1}/{\delta \mu},{\delta g_2}/{\delta \mu} \right)$ for $\mu\in \mathfrak{g}^*$ and any $g_1,g_2 \in \mathcal{F}(\mathfrak{g}^*)$. As proved in \cite{CMR2004}, the sum of the Lie-Poisson bracket and the $\Sigma$-bracket, $\{\cdot,\cdot\}^\Sigma = \{\cdot,\cdot\} + \{\cdot,\cdot\}_\Sigma$, is itself a Poisson bracket on $\mathfrak{g}^*$ if and only if $\Sigma$ is a 2-cocycle. The final term in equation \eqref{eqn:LP_demonstration_1} is a $\Sigma$-bracket correspoinding to the bilinear map $\Sigma((X_1, f_1,g_1,\omega_1),(X_2, f_2,g_2,\omega_2)) = s\scp{g_1}{\omega_2} - s\scp{g_2}{\omega_1}$, and it can be easily verified that this bilinear map satisfies the 2-cocycle identity.

Thus, the equations \eqref{Eqn: Tangled-Deterministic} can be written in the following form
\begin{equation}\label{eqn:LP_demonstration_conclusion}
    \p_t f = \{ f , h \}^\Sigma \,,
\end{equation}
and have a Poisson structure.

\paragraph{`Untangled' Hamiltonian form.}
As was discussed in a previous work in the context of reduction by stages \cite{HHS2023c}, there exists a so-called \emph{untangling} map which allows us to untangle the Lie-Poisson structure by shifting the momentum variable. Indeed, consider the terms corresponding to the Legendre transformation in the previous section. Indeed, the sum of the Lagrangian and the Hamiltonian is
\begin{equation*}
    \ell + h = \scp{\bm_S}{\bu_S} + \scp{m_T}{u_T} \,.
\end{equation*}
By considering the advection equation for $\vartheta_s$, we have that
\begin{align}\label{eqn:untangled_momentum}
\begin{split}
    \ell + h &= \scp{\bm_S}{\bu_S} + \scp{m_T}{-\frac{\p_t\vartheta_s + \bu_S\cdot\nabla\vartheta_s}{s}}
    \\
    &= \scp{\bm_S - \frac{m_T\nabla\vartheta_s}{s}}{\bu_S} - \scp{\frac{m_T}{s}}{\p_t\vartheta_s} =: \scp{\bM}{\bu_S} - \scp{\pi_T}{\p_t \vartheta_s} \,,    
\end{split}
\end{align}
where the pair of momentum variables, $\bM$ and $\pi_T$ have the same definition as in \eqref{def: momvars}. Furthermore, the velocity variables, $\bu_S$ and $u_T$, can be written in terms of these momenta as
\begin{equation*}
    \bu_S = \frac{\bM + \pi_T\nabla\vartheta_s}{{D}} \,,\quad\hbox{and}\quad u_T = \frac{s\pi_T}{{D}} - fx \,,
\end{equation*}
and the Hamiltonian may be derived as follows.
\begin{equation}\label{eqn:untangled_hamiltonian}
\begin{aligned}
    \tilde{h}[\bM,\pi_T,{D},\vartheta_s] &= \int_M \bM \cdot\bu_S - \pi_T\p_t\vartheta_s - \ell[u_S,u_T,{D},\vartheta_s,p] \,d^2x
    \\
    &= \int_M \bM \cdot\bu_S + \pi_T(\bu_S\cdot\nabla\vartheta_s + su_T) - \ell[u_S,u_T,{D},\vartheta_s,p] \,d^2x
    \\
    &= \int_M \bM\cdot\left( \frac{\bM + \pi_T\nabla\vartheta_s}{{D}} \right) + \pi_T\left(\bigg(\frac{\bM + \pi_T\nabla\vartheta_s}{{D}}\bigg)\cdot\nabla\vartheta_s + s\bigg(\frac{s\pi_T}{{D}} - fx\bigg)\right) - \ell \,d^2x
    \\
    &= \int_M \frac{|\bM|^2}{{D}} + \frac{2\pi_T\bM\cdot\nabla\vartheta_s}{{D}} + \frac{\pi_T^2|\nabla\vartheta_s|^2}{{D}} + \frac{s^2\pi_T^2}{{D}} - sfx\pi_T - \frac{{D}}{2}\bigg| \frac{\bM + \pi_T\nabla\vartheta_s}{D} \bigg|^2
    \\
    &\qquad\qquad  - \frac{{D}}{2}\bigg( \frac{s\pi_T}{{D}} - fx \bigg)^2 - {D}fx\bigg( \frac{s\pi_T}{{D}} - fx \bigg) - {D}\gamma(z)\vartheta_s + p({D}-1)\,d^2x
    \\
    &= \int_M \frac{|\bM + \pi_T\nabla\vartheta_s|^2}{2{D}} + \frac{(s\pi_T - {D}fx)^2}{2{D}} - {D}\gamma(z)\vartheta_s + p({D}-1)\,d^2x \,.
\end{aligned}
\end{equation}

The equations of motion for the Hamiltonian defined in terms of ${\rm M}$ and $\pi_T$ have the following \emph{untangled} Lie-Poisson structure
\begin{equation}
\frac{\p}{\p t}
\begin{bmatrix}\,
{\rm M} \\ {D} \\ \pi_T \\ \vartheta_s
\end{bmatrix}
= - 
   \begin{bmatrix}
   \ad^*_{\Box}{\rm M} & \Box \diamond {D} & 0  & 0
   \\
   \mathcal{L}_{\Box}{D} & 0 & 0 & 0 
   \\
   0 & 0 & 0 & -1 
   \\
   0 & 0 & 1 & 0
   \end{bmatrix}
   \begin{bmatrix}
	{\delta\tilde h}/{\delta {\rm M}} \\
	{\delta\tilde h}/{\delta {D}} \\
	{\delta\tilde h}/{\delta \pi_T} \\
	{\delta\tilde h}/{\delta \vartheta_s}
\end{bmatrix} 
.\label{Eqn: Untangled-Deterministic}
\end{equation}
The Poisson structure appearing in equation \eqref{Eqn: Untangled-Deterministic} is untangled and it is the canonical Poisson structure on the dual space $\left(\mathfrak{X}^*(M) \ltimes \text{Den}(M)\right) \oplus T^*\mathcal{F}(M)$. That is, it is the sum of the Lie-Poisson matrix on the dual of the semi-direct product Lie algebra $\mathfrak{X}(M)\ltimes \mathcal{F}(M)$ and the canonical symplectic matrix on $T^*\mathcal{F}(M)$.
One can verify the equivalence of the tangled and untangled Hamiltonian form of the Lie-Poisson equations \eqref{Eqn: Tangled-Deterministic} and \eqref{Eqn: Untangled-Deterministic} respectively, by direct calculation. Indeed, the momentum equation in a result of the relationship
\begin{equation*}
    (\p_t + \mathcal{L}_{u_S})(\pi_Td\vartheta_s) = -m_T d\frac{\delta h}{\delta m_T} + \frac{\delta h}{\delta \vartheta_s}d\vartheta_s = -\frac{\delta h}{\delta\vartheta_s}\diamond \vartheta_s - \frac{\delta h}{\delta m_T}\diamond m_T \,,
\end{equation*}
and it can be easily verified that the above Lie-Poisson system is equivalent to the Euler-Boussinesq Eady equations \eqref{eqn:EBE-slice}-\eqref{eqn:EBE-incompressibility} for the Hamiltonian, $\tilde h$, given by equation \eqref{eqn:untangled_hamiltonian}.

\subsection{Computational Simulations of VSMs}\label{sec: CompSims}
This section demonstrates some of the variety of solution behaviours of the VSM derived in the previous sections by showing the dynamics of front formation in computational simulations of the VSM equations.

For this section only, we rewrite the VSM equations \eqref{eqn:EBE-slice} - \eqref{eqn:EBE-incompressibility} using the buoyancy variable $b_s(x,z,t)$ which is related to the potential temperature $\vartheta_s$ by the relation $b_s = \frac{g}{\vartheta_0} \vartheta_s$. We assume the buoyancy decomposes to $b_s(x,z,t) = \overline{b}_s(z) + b'_s(x,z,t)$ where $\overline{b}_s(z)$ is the background buoyancy which is linear in $z$. Upon defining the buoyancy frequency as $N^2 = {\p \overline{b}_s/\p z}$, the VSM equations can be written as
\begin{align}
    \begin{split}
    \p_t \bu_S + \bu_S \cdot \nabla \bu_S &= fu_T\wh{x} + b'_s\wh{z} - \nabla p \,,\\
	\p_t u_T + \bu_S \cdot \nabla u_T  &= - f\bu_S\cdot\wh{x}
    -s\frac{g}{\vartheta_0}\left( z - \frac{H}{2} \right) \,, \\
	\p_t b'_s + \bu_S \cdot \nabla b'_s + N^2 \bu_S\cdot \wh{z} + s\frac{g}{\vartheta_0} u_T &= 0 \,,\\
	\nabla \cdot \bu_S &= 0 \,.
    \end{split}\label{eqn:EBE buoyancy}
\end{align}
The numerical method we employ here to solve the VSM equations \eqref{eqn:EBE buoyancy} is the compatible finite element method \cite{CS2012} based on the finite element discretisation for VSMs discussed in \cite{YSCMC2017}. We summarise the numerical methods as follows.
The vertical slice domain $M$ is discretised on the mesh $\Omega$ using quadrilateral finite elements. For this discretisation, we approximate the prognostic variables $(\bu_S, u_T, b'_s, p)$, respectively, in terms of the finite element spaces $$(\mathring{RK_1(\Omega)}, DG_1(\Omega), CG_2(\Omega_h)\otimes DG_1(\Omega_v), DG_1(\Omega))\,.$$ Here, the various finite element function spaces are denoted as follows: $RK_1(\Omega)$ denotes the Raviart–Thomas space of polynomial degree $1$; $DG_1(\Omega)$ denotes the continuous finite element space of polynomial degree $1$; and $CG_2(\Omega)$ denotes the continuous finite element space of polynomial degree $2$. The space $\mathring{RK_1(\Omega)}$ is a subspace of $RK_1(\Omega)$ defined by $\mathring{RK_1(\Omega)} = \{\bu \in RK_1(\Omega) : \bu \cdot \mb{n} = 0 \quad\hbox{on}\quad\partial\Omega \}$. The finite element space for the buoyancy, $CG_2(\Omega_h)\otimes DG_1(\Omega_v)$, is the continuous finite element space in the horizontal direction and tensor product discontinuous finite element space in the vertical direction, so as to replicate Charney–Phillips grid staggering.

For the temporal discretisation, we use a semi-implicit scheme where we modify the explicit third order strong stability preserving Runge Kutta (SSPRK3) scheme by introducing an implicit time discretisation average applied to the forcing terms, as well as the advecting velocity field $\bu_S$ in the advection terms in all of the evolution equations. The resulting nonlinear system is solved using a Picard iterations scheme in which  the magnitude of the residual is enforced to be less than a prescribed tolerance.

The physical parameters used in the example simulation are obtained by adapting the parameters used in \cite{YSCMC2017} for oceanic conditions, rather than atmospheric. The computational domain is a rectangle $[-L, L]\times [0, H]$ where $L = 10000m$ and $H = 100m$. The boundary conditions are periodic on the lateral boundaries and the in-slice velocity $\bu_S$ has no normal component on the top and bottom boundaries. These boundary conditions are sufficient in the absence of explicit diffusion terms. The aspect ratio is given by $\sigma := H/L = 10^{-2}$ which is applicable to ocean submesoscale dynamics. The rotation frequency $f = 10^{-4}s^{-1}$, gravity $g=10ms^{-2}$, squared buoyancy frequency $N^2 := \frac{\p \overline{b}_s}{\p z} = \frac{g}{\vartheta_0}\frac{\p \vartheta_s}{\p z} = 10^{-6}s^{-2}$, the constant $y$ derivative of potential temperature $s = {\p\vartheta/\p y}$ and reference temperature $\vartheta_0$ are set such that ${gs/\vartheta_0} = 10^{-7}$. The initial conditions of $b'_s$ is taken to be a perturbation away from the stable, stratified buoyancy field which takes the form as in \cite{YSCMC2017},
\begin{align} 
    b'_s(x,z,0) = aN\left(-\left[1-\frac{1}{4}\coth\left(\frac{1}{4}\right)\right]\sinh Z \cos\left(\frac{\pi x}{L}\right) -  \frac{n}{4}\cosh Z \sin\left(\frac{\pi x}{L}\right)\right)\,, \label{eqn:init buoyancy}
\end{align}
where $a = 7.5$ is the amplitude of the perturbation and the constant $n$ and modified vertical coordinate $Z$ are defined as
\begin{align}
    n = 2\left(\left[\frac{1}{4}-\tanh\left(\frac{1}{4}\right)\right]\left[\frac{1}{4}-\coth\left(\frac{1}{4}\right)\right]\right)^{1/2}\!\!, \qquad Z = \frac{1}{2}\left(\frac{z}{H} - \frac{1}{2}\right)\,.
\end{align}
Three nondimensional parameters are involved. These are the Rossby number $Ro = U/fL$, Froude number $Fr = U/NH$ and Burger number $Bu= (NH)/(fL) = N\sigma/f = Ro/Fr$ which can be evaluated to $Bu = 0.1$. 

The initial pressure is calculated from hydrostatic balance condition using the initialised buoyancy field \eqref{eqn:init buoyancy} and the initial transverse velocity $u_T$ is calculated using the geostrophic balance condition using the computed initial pressure. Finally, the initial in-slice velocity $\bu_S$ is assumed to satisfy a semi-geostrophic condition which is calculated from the linearised evolution equation of $u_T$ and $b'_s$. For a full account of the finite element discretisation of the VSM, see \cite{YSCMC2017}.

Using the aforementioned model setup, we simulate the VSM for $30$ days with time-step $\delta t = 50s$. The snapshots of the numerical simulations of the VSM \eqref{eqn:EBE buoyancy} are shown in Figure \ref{fig:vsm snapshot v b} and Figure \ref{fig:vsm snapshot u p}. In Figure \ref{fig:vsm snapshot v b}, the left hand column and the right hand column are the snapshot of the transverse velocity $u_T$ and buoyancy $b'_s$ respectively, both of which are shown at day $4, 7,9$ and $14$ of the simulation.

The model displays signs of frontogenesis in the $u_T$ field at day $4$ when sharp gradients start consolidating near the center of the domain. At the same time, the buoyancy field is tracking the region of large gradients in $u_T$ and large buoyancy gradients can be found in the north east and south west regions. At day $7$, the front can be seen clearly in both the $u_T$ and $b'_s$ field and it is tilted eastwards starting from the bottom of the domain. The front weakens over time and is then regenerated with a westward tilt on day $9$, as is visible on the $u_T$ snapshot. The tilted front then rotates clockwise and displays a westward tilt on day $14$ whose magnitude is similar to the westward tilting front on day $7$. The weakening and reformation of the front continue over another $7$ day cycle. Consequently, we conclude that the VSM dynamics shows a quasi-periodic formation of fronts.

Figure \ref{fig:vsm snapshot u p} consists of the snapshot of the horizontal component of the velocity field $\bu_S$ and the pressure field $p$ at day $14$. The presence of the low pressure zone in the center of the pressure field aligned with the front in the $u_T$ field suggests a cyclone/anticyclone pair. This is further suggested by the lack of horizontal velocity in frontal regions. Thus, the Lagrangian fluid particles do not cross the front. Instead, they travel vertically upwards and downwards on the left and right of the frontal regions, respectively. The difference in sign of $u_T$ across the front then completes the interpretation of this configuration of the VSM simulation as a cyclone/anticyclone pair. 
\begin{figure}
    \centering
    \begin{subfigure}[b]{0.45\textwidth}
        \centering
        \includegraphics[width=\textwidth]{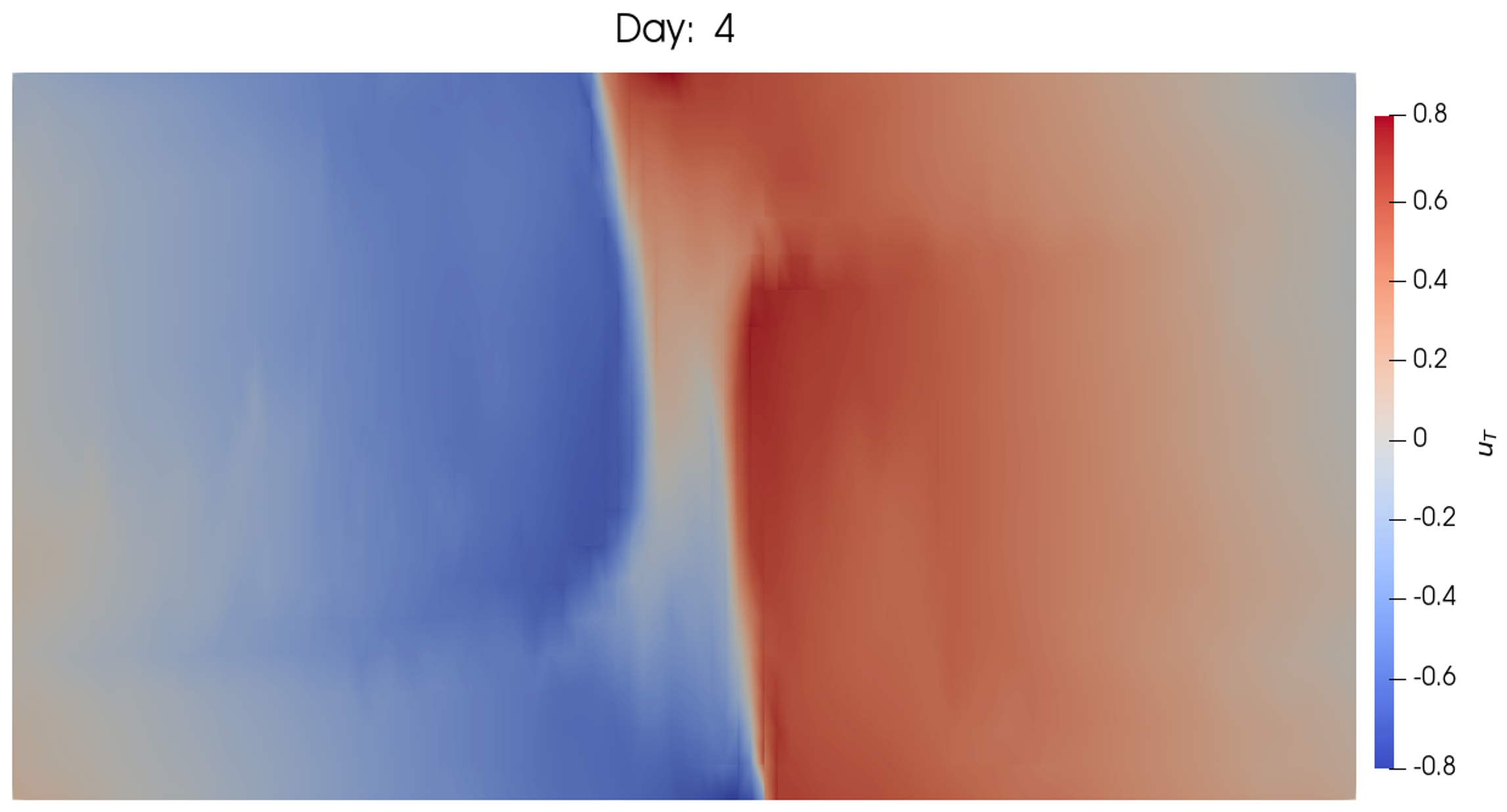}
    \end{subfigure}
    \begin{subfigure}[b]{0.45\textwidth}
        \centering
        \includegraphics[width=\textwidth]{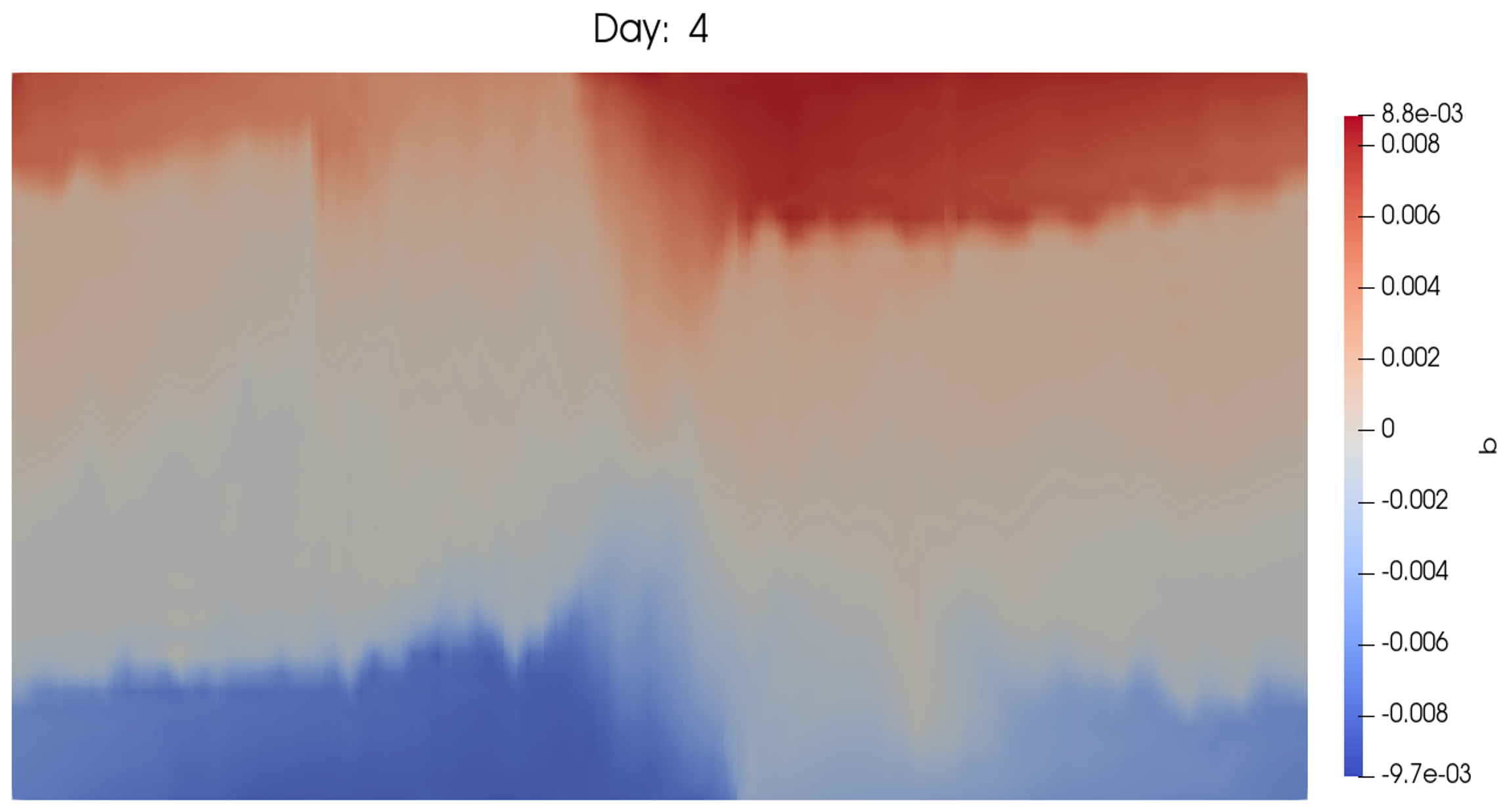}
    \end{subfigure}
    \begin{subfigure}[b]{0.45\textwidth}
        \centering
        \includegraphics[width=\textwidth]{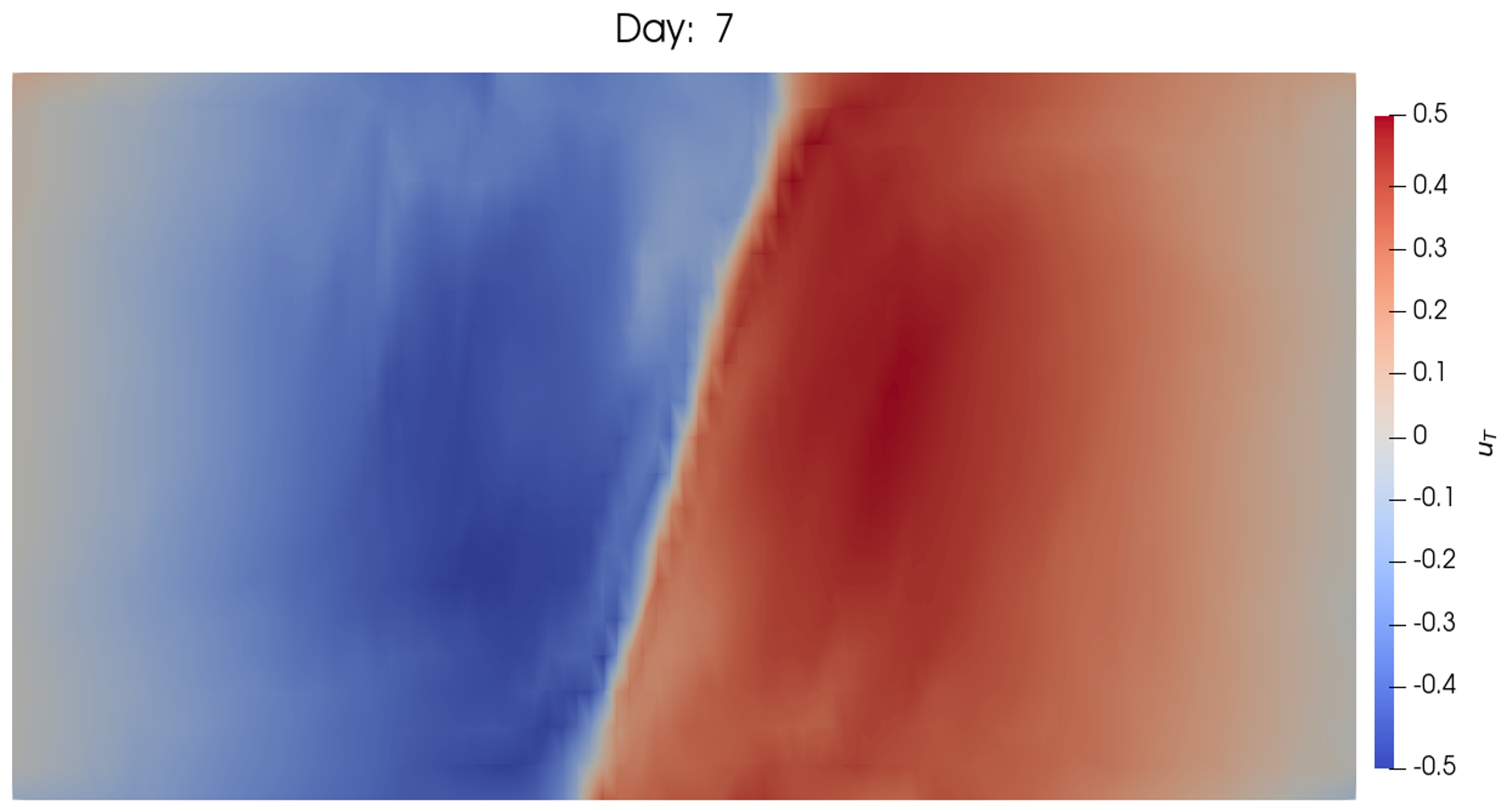}
    \end{subfigure}
    \begin{subfigure}[b]{0.45\textwidth}
        \centering
        \includegraphics[width=\textwidth]{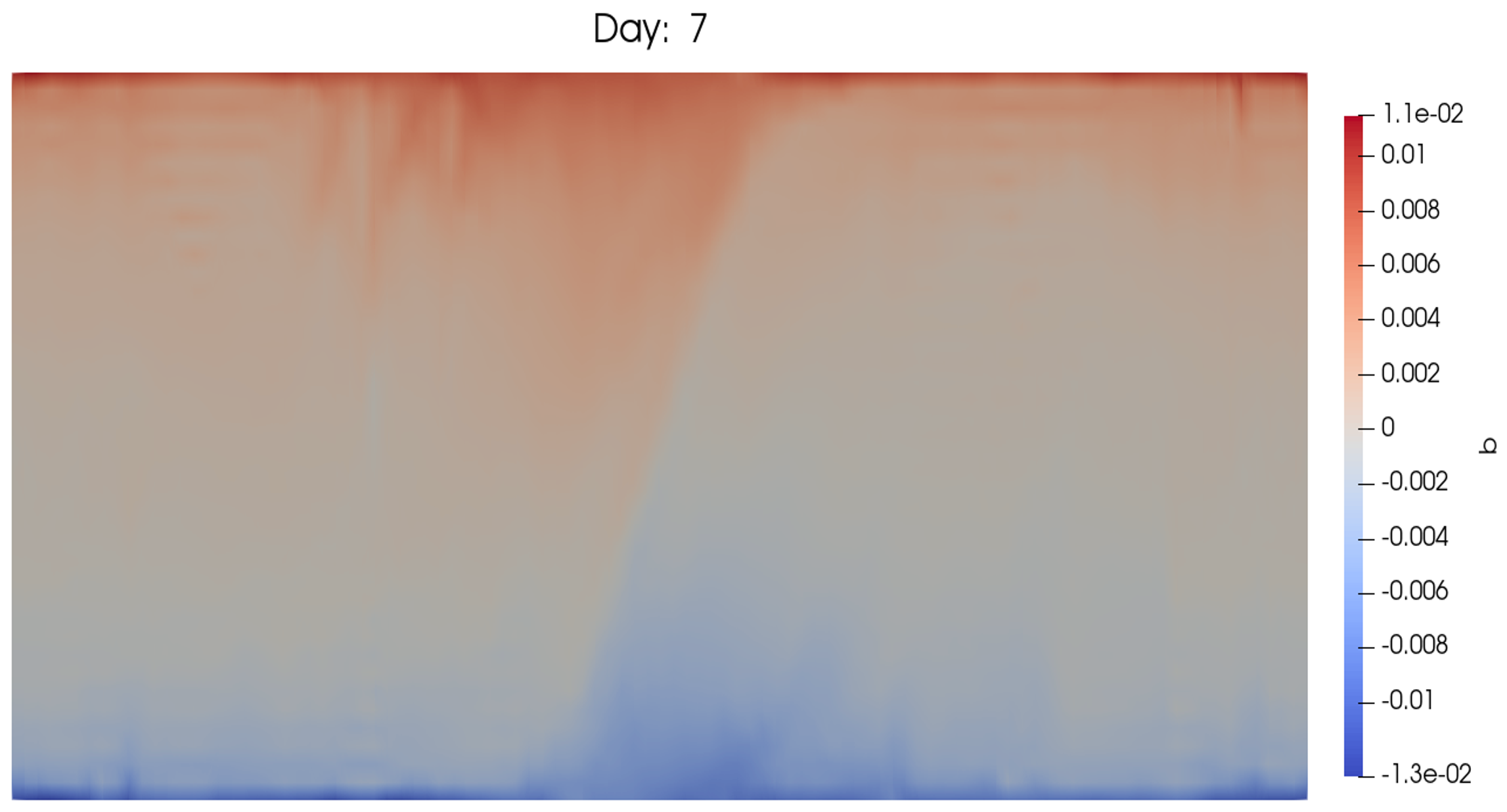}
    \end{subfigure}
        \begin{subfigure}[b]{0.45\textwidth}
        \centering
        \includegraphics[width=\textwidth]{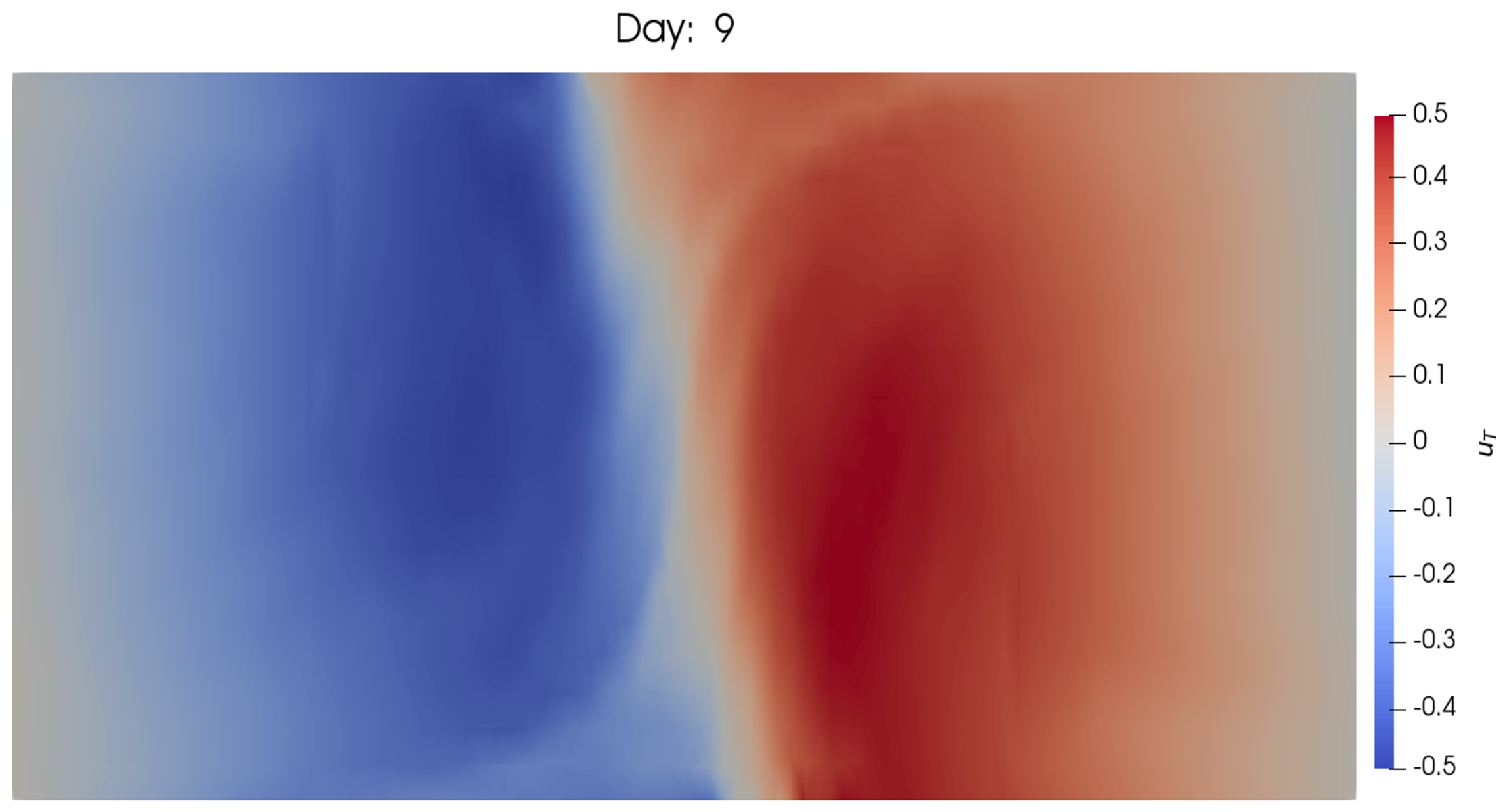}
    \end{subfigure}
    \begin{subfigure}[b]{0.45\textwidth}
        \centering
        \includegraphics[width=\textwidth]{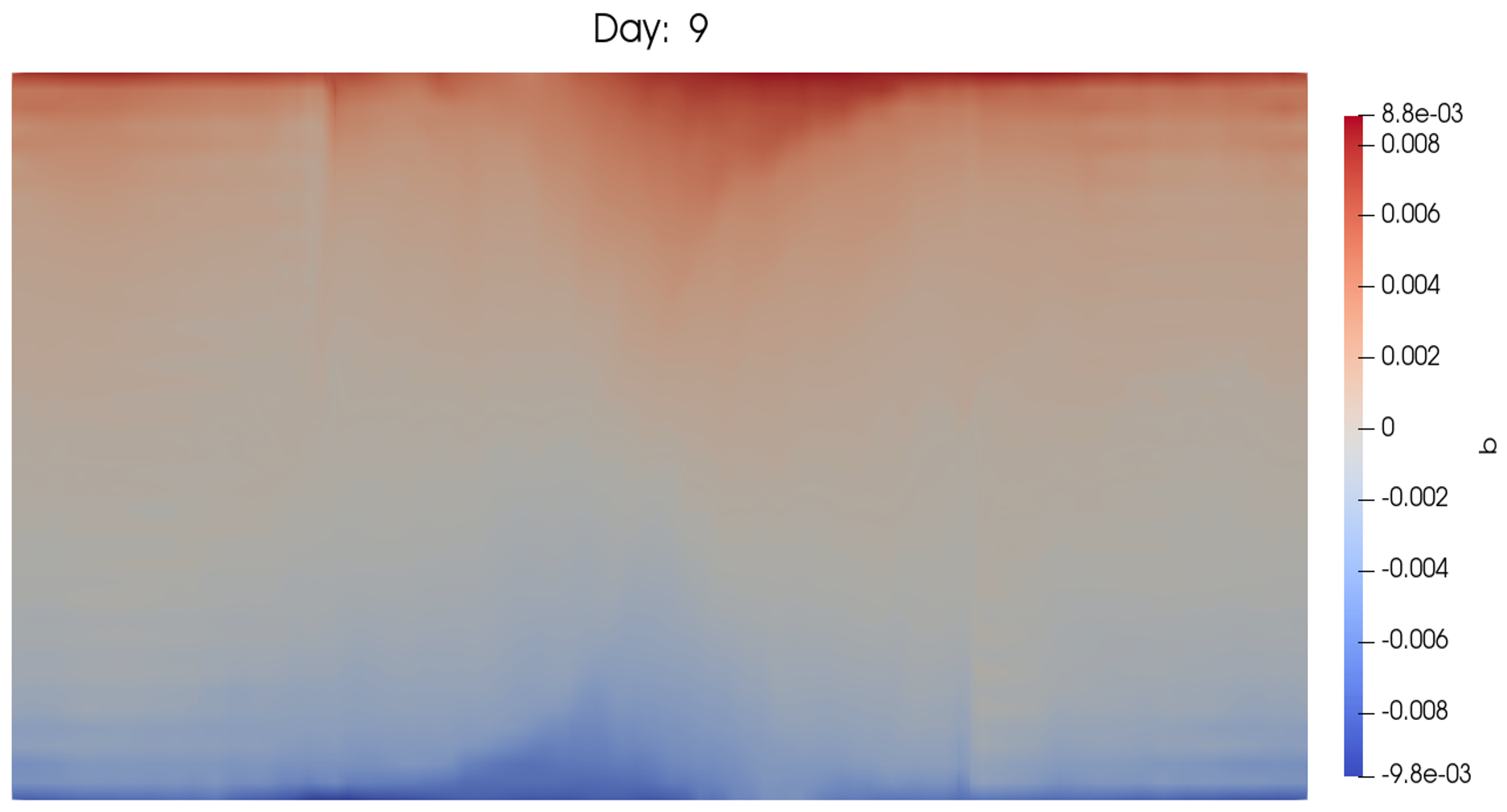}
    \end{subfigure}
        \begin{subfigure}[b]{0.45\textwidth}
        \centering
        \includegraphics[width=\textwidth]{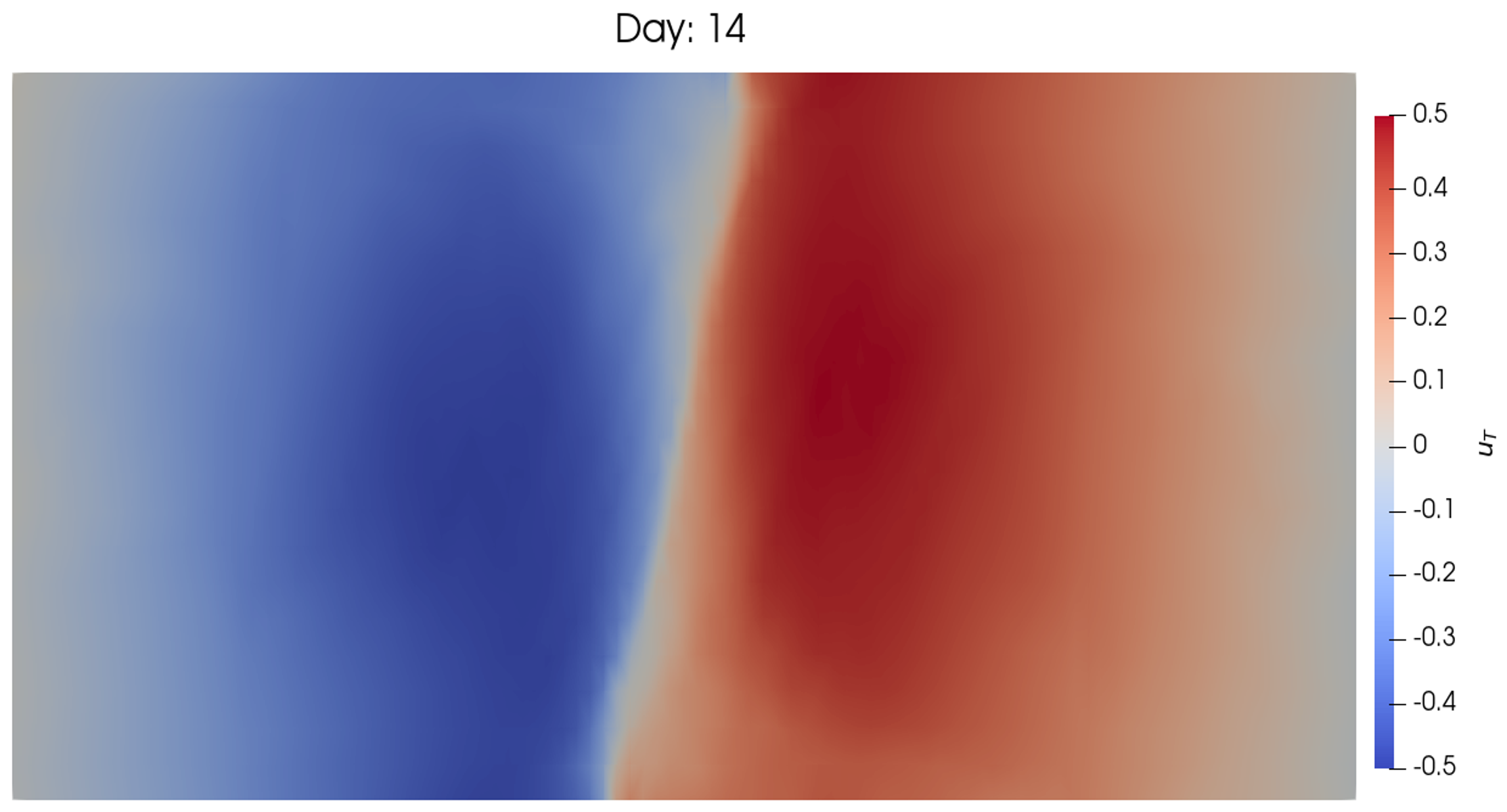}
    \end{subfigure}
    \begin{subfigure}[b]{0.45\textwidth}
        \centering
        \includegraphics[width=\textwidth]{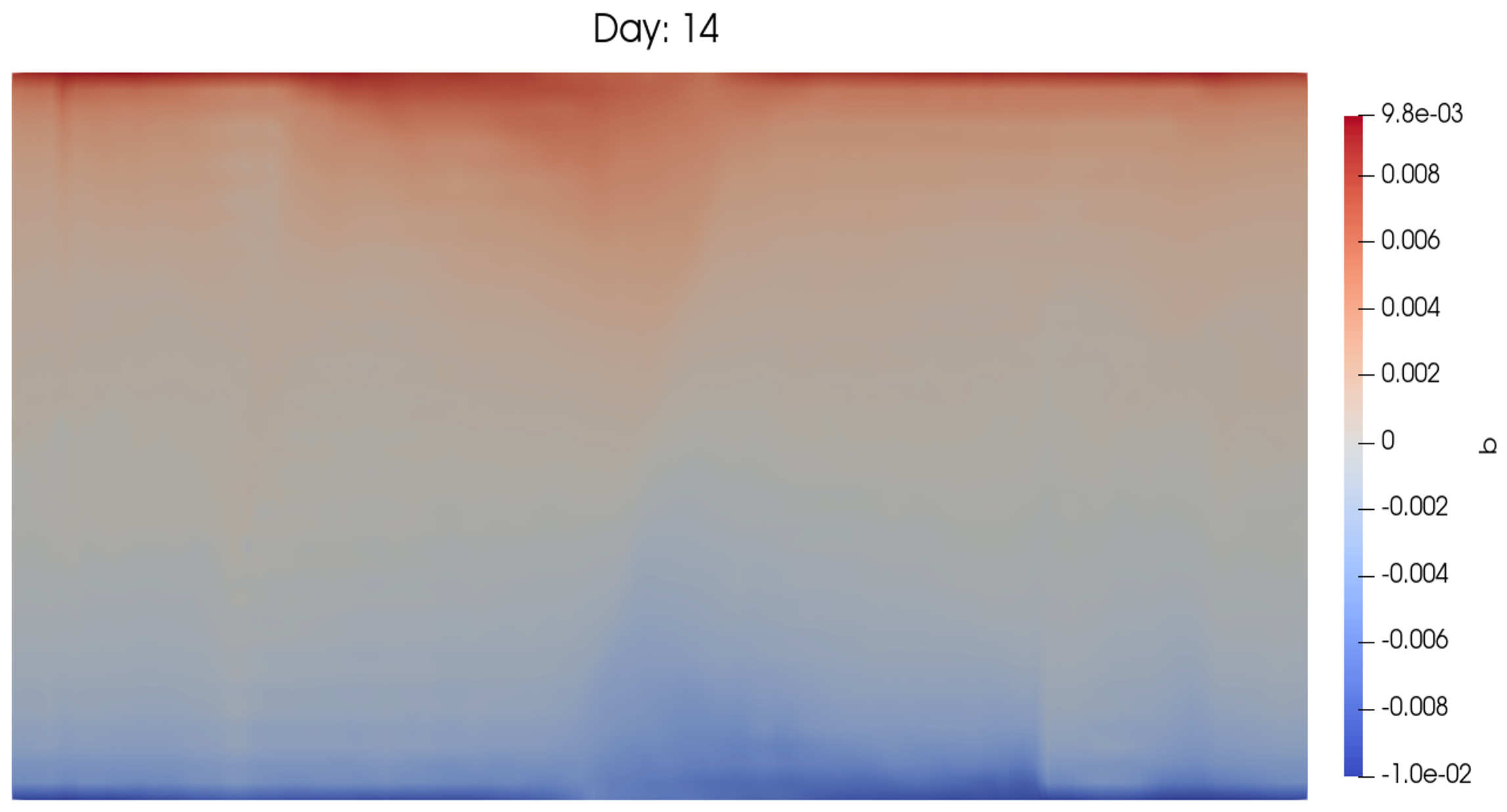}
    \end{subfigure}
    \caption{Snapshots of the transverse velocity field $u_T$ (left) and the buoyancy field $b'_s$ (right) of the numerical simulation of the VSM \eqref{eqn:EBE buoyancy} at days $4,7,9$ and $14$. The solutions are periodic on the lateral boundaries and their in-slice velocity $\bu_S$ has no normal component on the top and bottom boundaries; so the buoyancy field is advected horizontally at the top and bottom of the domain.}
    \label{fig:vsm snapshot v b}
\end{figure}

\begin{figure}
    \centering
    \begin{subfigure}[b]{0.45\textwidth}
        \centering
        \includegraphics[width=\textwidth]{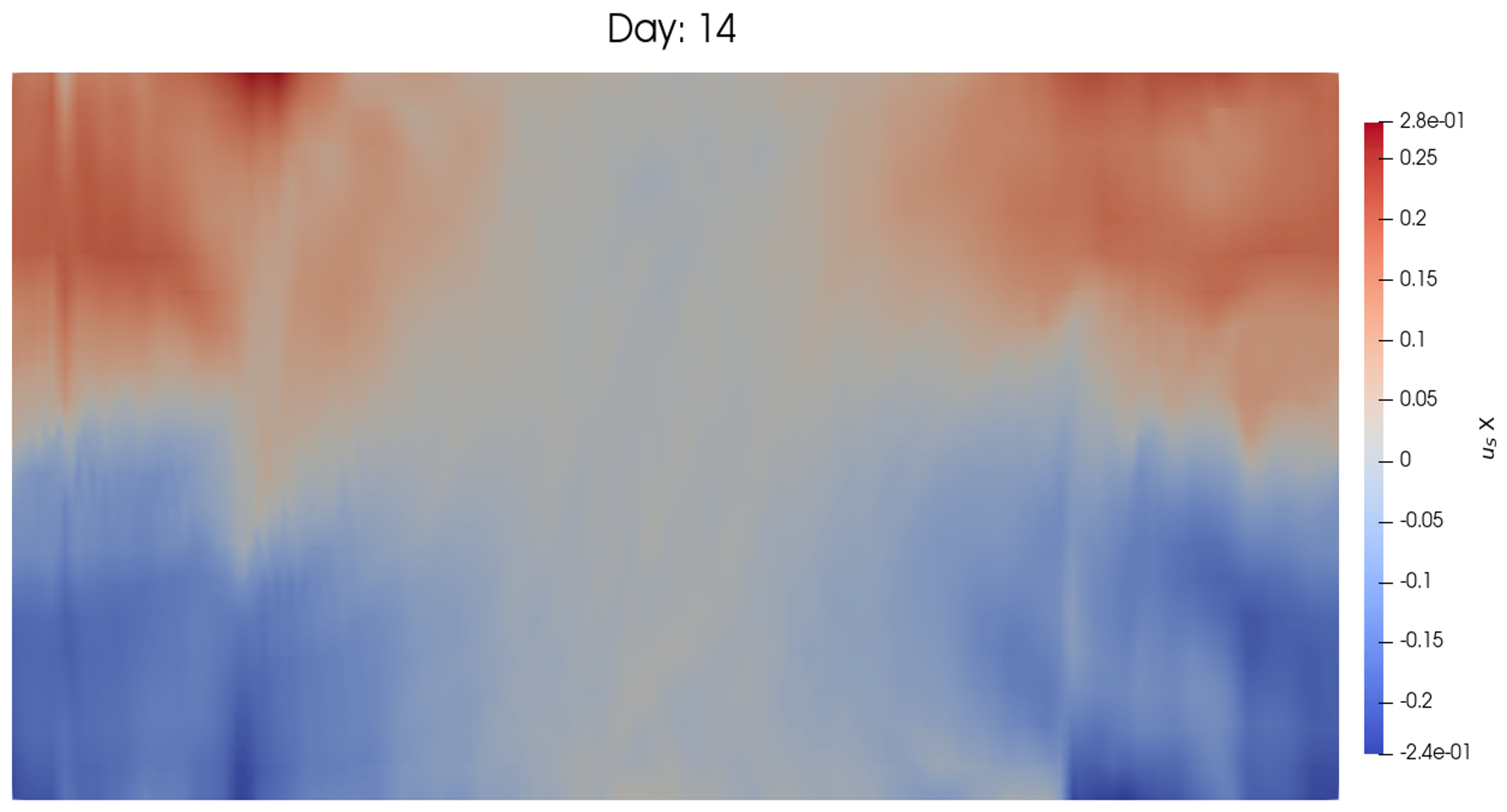}
    \end{subfigure}
    \begin{subfigure}[b]{0.45\textwidth}
        \centering
        \includegraphics[width=\textwidth]{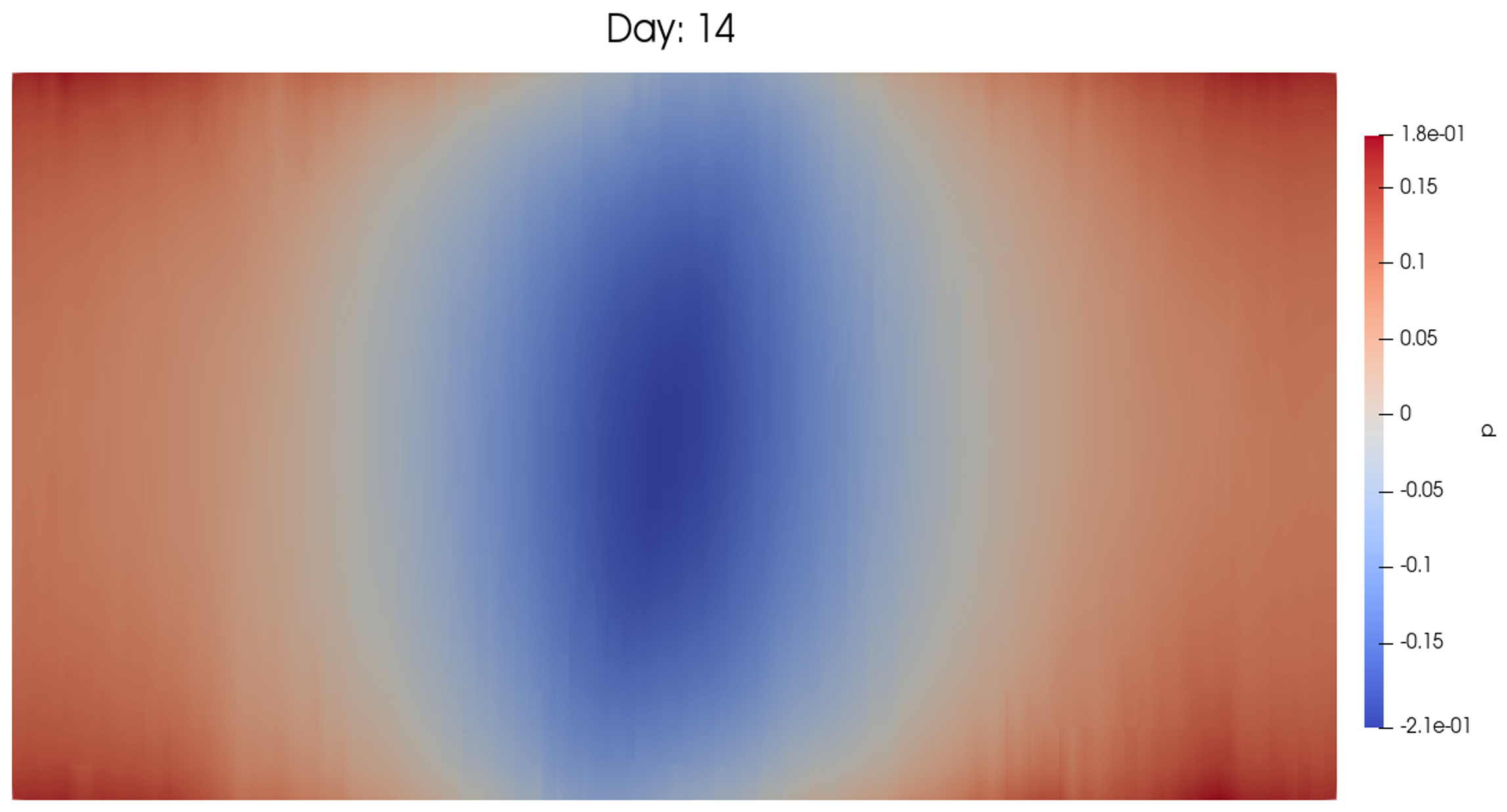}
    \end{subfigure}
    \caption{Snapshots of the horizontal component of in slice velocity field $\bu_S\cdot \wh{\bx}$ (left) and the pressure field $p$ (right) of the numerical simulation of the VSM \eqref{eqn:EBE buoyancy} at day $14$. }
    \label{fig:vsm snapshot u p}
\end{figure}


\section{Wave mean flow interaction (WMFI)}\label{sec:WMFI}

Describing the interacting dynamics between a mean flow and the fluctuating motion around this is a central problem in geophysical fluid dynamics. For fluid flows on a planetary scale, the average behaviour of the system is of particular interest since it can be computationally resolved. However, the interactions between the mean and fluctuating components of the flow are integral to understanding the formation and subsequent dynamics of instabilities. These instabilities can dramatically influence the nature of the mean flow and suitable parameterisations are required to model them effectively \cite{GM1990}. This section explores WMFI within the context of VSMs.

\subsection{Clebsch-Lin formulation of the EB VSM with wave dynamics}\label{sec:EB_VSM_IGW}
One general formulation of the coupling between the EB VSM dynamics and wave activity, let us first assume the general form of the wave dynamics is described by a wave Hamiltonian $H_W(N, \phi)$.
Here the wave degree of freedom $(N, \phi)$ are wave action density and wave phase respectively with the wave number $\bk = \nabla \phi$ which induces a single frequency wave propagating on the vertical slice domain $\cal S$ with area element $d^2x_s=dxdz$.
The coupling between the EB VSM degree of freedom and waves degree of freedom occurs in the Hamilton variational principle where the phase space principle of the wave motion are boosted to the frame of the in slice fluidic motion, in the same fashion as the out of slice fluid variables.
That is, we augment the Clebsch-Lin form of the Hamilton variational principle in equation \eqref{eqn: CLagVar1} to have the following
\begin{align}
\begin{split}
0 = \delta S &= \delta \int_a^b \int_{\cal S} 
\ell(\bu_S,u_T,D,\vartheta_s,\phi;p,\pi_t,N)
\\&=
\delta \int_a^b \int_{\cal S}\big(
\tfrac12 {D} |\bu_S|^2 + \tfrac12 {D}u_T^2 + {D} u_Tfx + {D}\gamma(z) \vartheta_s - p({D}-1)
\\&\hspace{2cm}
- \pi_T \left( \p_t \vartheta_s + \bu_S\cdot\nabla \vartheta_s  + su_T\big) 
- N(\partial_t\phi  + \bu_S\cdot\nabla \phi)\,d^3x + H_W(N,\bk) 
\right)dt
\\&=
 \int_a^b \int_{\cal S}
B \,\delta {D} 
+ \delta \bu_S \cdot \left( {D} \bu_S - \pi_T \nabla \vartheta_s - N \nabla \phi \right) 
+ \delta u_T  \left( {D} (u_T + fx) - s\pi_T \right) 
- \delta p ({D}-1)
\\&\hspace{15mm}
+ \delta \vartheta_s \left( \p_t \pi_T + {\rm div}(\pi_t \bu_S) + {D}\gamma(z) \right)
- \delta \pi_T \left( \p_t \vartheta_s + \bu_S\cdot\nabla \vartheta_s  + su_T\right) \
\\&\hspace{15mm}
+ \delta \phi \left( \partial_t N  + {\rm div}\left( N\bu_S + \frac{\delta H_W}{\delta \bk} \right)\right)
- \delta N \left( \p_t \phi + \bu_S\cdot\nabla \phi  - \frac{\delta H_W}{\delta N}\right) \
d^2x_sdt
\,.\end{split}
\label{eqn: CLagVar1wave}
\end{align}
Here, we have arbitrary variations for the two pairs of Clebsch variables $(\vartheta_s, \pi_T), (\phi, N)$ and the associated out of slice velocities $u_T$; we also have constrained variations for the Euler-Poincar\'e variables 
\begin{align}
    \delta u_s = \p_t\xi - {\rm ad}_{u_s}\xi \quad\text{and}\quad \delta ({D} d^2x_s)=-\,{\cal L}_{\xi}({D} d^2x_s)\,, \label{eqn: EP variations}
\end{align}
for arbitrary variations $\xi$. The constrained variations of $D d^2x_s$ implies the advection of $D d^2x$, written in coordinates as
\begin{align}
    \p_t {D} + {\rm div}({D} \bu_S) = 0
    \,,\quad\hbox{or}\quad 
    \left(\p_t + {\cal L}_{u_s}\right)\left({D} d^2x_s\right) = 0\,,
\end{align}
As before in \eqref{def: gamma-B}, we define the Bernoulli function, $B$, and $g$, $\theta_0$, $H$ are constants, as
\begin{align}
B := \frac{\delta \ell}{\delta D } 
= \tfrac12 |\bu_S|^2 + \tfrac12 u_T^2 +  u_Tfx + {D}\gamma(z) \vartheta_s - p
\quad\hbox{and}\quad 
\gamma(z) := \frac{g}{\theta_0} \left(z - \frac{H}{2} \right)
\,.\label{def: gamma-Bwave}
\end{align}
Let us briefly remark on the terms appearing in the Lagrangian \eqref{eqn: CLagVar1wave}. The first line is the fluid Lagrangian for EB fluids in the vertical slice \cite{HMR1998}. The first term in the second line is the phase-space Lagrangian that couples the thermal degree of freedom in the slide to the transverse fluid flow through the slice via the pairing $-\int_\mathcal{S} \pi_T\nabla \phi\cdot \bu_S\,d^2x$. Similarly, the second term and third term in the second line is the phase-space Lagrangian that couples the wave dynamics to the in slice fluid flow via the coupling $-\int_\mathcal{S} N\nabla \phi\cdot \bu_S\,d^2x$. These two coupling terms in the Lagrangian in \eqref{eqn: CLagVar1wave} introduces additional momentum components to the the fluid momentum in the slice. 

Upon substituting the Euler-Poincar\'e variations \eqref{eqn: EP variations} into the variational results for $\delta S$ above in \eqref{eqn: CLagVar1wave} and integrating by parts, one finds the equation of motion and auxiliary equations in Lie derivative form, as
\begin{align}
\begin{split}
\left(\p_t + {\cal L}_{u_s}\right)\left({D}^{-1}\wt{\bM}\cdot d\bx \right) &= dB
\,,\\
\left(\p_t + {\cal L}_{u_s}\right)\left(\pi_t d^2x_s\right) &= -  {D}\gamma(z) d^2x_s
\,,\\
\left(\p_t + {\cal L}_{u_s}\right)\vartheta_s &= -s u_T
\,,\\
\left(\p_t + {\cal L}_{u_s}\right)\left(N d^2x_s\right) &= - {\rm div}\left(\frac{\delta H_W}{\delta \bk}\right)\,d^2x_s\,,\\
\left(\p_t + {\cal L}_{u_s}\right)\phi &= \frac{\delta H_W}{\delta N}
\,,\\
\left(\p_t + {\cal L}_{u_s}\right)\left({D} d^2x_s\right) &= 0\,, \quad \text{with}\quad D=1\,.
\end{split}
\label{eqns: Clebsch1}
\end{align}
Here we have the incompressibility constraint $D=1$ enforced by the Lagrange multiplier $p$ and we have the following definitions for the momentum variables
\begin{align}
\begin{split}
 \wt{\bM} &:= {D}\bu_S - \pi_T \nabla \vartheta_s - N\bk
 \,,\\
 \pi_T &:= s^{-1}{D} (u_T + fx) 
\,.\end{split}
\label{def: momvars wave}
\end{align}
Compared with the standard EB VSM in \eqref{def: momvars}, we note that the untangled in slice fluid momentum $\wt{\bM}$ also have contributions from the additional wave degree of freedom. 
\begin{theorem}[Kelvin-Noether theorem for the Euler-Boussinesq VSM]
	The Euler-Boussinesq vertical slice model in equation \eqref{eqns: Clebsch1} satisfies
	\begin{equation}
	\frac{d}{dt} \oint_{\gamma_t} 
     {D}^{-1}\wt{\bM}\cdot d\bx
    = \oint_{\gamma_t}  dB = 0 \,,
    \label{def: KelThm Mtot}
	\end{equation}
	where $\gamma_t:C^1 \mapsto M$ is a closed loop moving with the flow $\gamma_t=\phi_t\gamma_0$
    generated by the vector field $u_s=\dot{\phi}_t\phi_t^{-1}$.
\end{theorem}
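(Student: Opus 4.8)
The plan is to obtain \eqref{def: KelThm Mtot} as an immediate consequence of the first equation of the system \eqref{eqns: Clebsch1}, namely $\left(\p_t + \mathcal{L}_{u_s}\right)\!\left({D}^{-1}\wt{\bM}\cdot d\bx\right) = dB$, combined with the standard transport theorem for circulation integrals carried by the flow. Since \eqref{eqns: Clebsch1} has already been derived in the excerpt by substituting the Euler--Poincar\'e variations \eqref{eqn: EP variations} into \eqref{eqn: CLagVar1wave} and integrating by parts, I would simply cite it; the only structural remark worth making first is why the transported object is the one-form $D^{-1}\wt{\bM}\cdot d\bx$ rather than the one-form density $\wt{\bM}\cdot d\bx\otimes d^2x_s$. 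Collecting the coefficient of $\xi$ in \eqref{eqn: CLagVar1wave} gives an equation of the schematic form $\left(\p_t + \mathrm{ad}^*_{u_s}\right)\!\big(\wt{\bM}\cdot d\bx\otimes d^2x_s\big) = (dB)\otimes({D}\,d^2x_s)$, in which the $\diamond$-terms produced by $\vartheta_s$, $\pi_T$, $N$ and $\phi$ all reorganise into the single exact one-form $dB$ with $B$ the Bernoulli function \eqref{def: gamma-Bwave}; dividing through by the advected density ${D}\,d^2x_s$, using the continuity equation and the identity $\tfrac1D\left(\p_t + \mathcal{L}_{u_s}\right)\!\big(\beta\otimes {D}\,d^2x_s\big) = \left(\p_t + \mathcal{L}_{u_s}\right)\!\big(\tfrac1D\beta\big)$ for a one-form $\beta$, then yields the stated Lie-derivative form.

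Next I would invoke the Kelvin--Reynolds transport theorem for advected loops: for a smooth time-dependent one-form $\alpha_t$ on $M$ and a loop $\gamma_t = \phi_t\gamma_0$ with $u_s = \dot{\phi}_t\phi_t^{-1}$, one has $\tfrac{d}{dt}\oint_{\gamma_t}\alpha_t = \oint_{\gamma_t}\left(\p_t + \mathcal{L}_{u_s}\right)\alpha_t$. This follows in one line by pulling back to the fixed reference loop, $\oint_{\gamma_t}\alpha_t = \oint_{\gamma_0}\phi_t^*\alpha_t$, differentiating under the integral sign, and using $\tfrac{d}{dt}\phi_t^*\alpha_t = \phi_t^*\big(\p_t\alpha_t + \mathcal{L}_{u_s}\alpha_t\big)$. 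Applying this with $\alpha_t = D^{-1}\wt{\bM}\cdot d\bx$ and substituting the first equation of \eqref{eqns: Clebsch1} gives $\tfrac{d}{dt}\oint_{\gamma_t} D^{-1}\wt{\bM}\cdot d\bx = \oint_{\gamma_t} dB$, and since $\gamma_t$ is a closed loop and $dB$ is exact, Stokes' theorem kills the right-hand side, which is the assertion.

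The argument is essentially routine, so the only places I expect any care to be needed are the bookkeeping in the first step and a single-valuedness check. On the right-hand side, one must confirm that the wave sector (the $\delta\phi$ and $\delta N$ rows in \eqref{eqn: CLagVar1wave}, and the extra $-N\bk$ in $\wt{\bM}$ from \eqref{def: momvars wave}) contributes nothing beyond what is already absorbed into $dB$; in particular $B$ in \eqref{def: gamma-Bwave} does not depend on the wave phase $\phi$, so $dB$ is a genuine exact, single-valued one-form and $\oint_{\gamma_t} dB = 0$ even though $\phi$ may itself be multivalued. On the left-hand side the analogous consistency check is that $\oint_{\gamma_t} N\nabla\phi\cdot d\bx$ is well defined, which holds because $\nabla\phi$ is a single-valued one-form. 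Beyond these points the proof is just the two displayed manipulations above.
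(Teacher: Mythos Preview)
Your proposal is correct and matches the paper's approach: the paper states this theorem without an explicit proof, treating it as an immediate consequence of the first line of \eqref{eqns: Clebsch1}, $\left(\p_t + \mathcal{L}_{u_s}\right)\!\left({D}^{-1}\wt{\bM}\cdot d\bx\right) = dB$, together with the standard transport identity $\tfrac{d}{dt}\oint_{\gamma_t}\alpha_t = \oint_{\gamma_t}(\p_t + \mathcal{L}_{u_s})\alpha_t$ and Stokes' theorem on the exact form $dB$. Your additional remarks on how the one-form density reduces to a one-form via the advected density, and the single-valuedness check for $B$ versus the possibly multivalued phase $\phi$, go slightly beyond what the paper makes explicit but are consistent with it.
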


\begin{remark}[PV conservation for the VSM with waves]
Under similar considerations as for PV conservation for the EB VSM (without waves) in Remark \ref{rmk: PV VSM}, i.e. $d({D}^{-1}\wt{\bM}\cdot d\bx)= \yh\cdot{\rm curl}({D}^{-1}\wt{\bM}) \,d^2x_s$, $d^2B=0$ and ${D}=1$, the first equation in \eqref{eqns: Clebsch1} implies potential vorticity advection (conservation of PV on fluid parcels)
\begin{align}
\left(\p_t + {\cal L}_{u_s}\right)d\left({D}^{-1}\wt{\bM}\cdot d\bx \right) = 0
\quad\Longrightarrow\quad 
(\p_t+\bu_S\cdot\nabla)\left(\yh\cdot{\rm curl}\bu_S + J(\pi_T/D,\vartheta_s) + J(N/D,\phi) \right) = 0
\,,\label{eqn: PVcalc}
\end{align}
where, as before, $J(a,b)dx\wedge dz = da\wedge db$. We may write PV conservation on fluid parcels for the VSM with wave dynamics as 
\begin{align}
\left(\p_t + \bu_S\cdot\nabla \right)\wt{q} = \p_t \wt{q} + J(\psi_s,\wt{q})  = 0
\quad\hbox{with}\quad 
\wt{q}:= \Delta\psi_s + J(\pi_T/D,\vartheta_s) + J(N/D,\phi)
\,,\label{eqn: PVcons}
\end{align}
where $\psi_s$ is the stream function for the slice velocity in same manner as presented in Remark \ref{rmk: PV VSM}.
\end{remark}

Next, let's rewrite the Kelvin circulation theorem in equation \eqref{def: KelThm Mtot} in terms of the fluid momentum alone, 
\begin{align}
\bm_S = \wt{\bM} + \pi_T \nabla \vartheta_s + N \nabla\phi = D\bu_S
\,.
\label{def-m-mom}
\end{align}
As we shall see in the Hamiltonian formulation, mapping from the total momentum $\wt{\bM}$ to the in slice fluid momentum $\bm_S$ is known as the \emph{tangling map}. 
The motion equation for the fluid momentum $\bm_S$ in \eqref{def-m-mom} is given by
\begin{align}
\begin{split}
(\partial_t  + \mathcal{L}_{u_s}) \left(\bm_S\cdot d\bx \otimes d^3x\right) 
&= 
\left( 
D d B - D\gamma(z)d \vartheta_s -\frac{s}{2}d\pi_T^2 
- {\rm div} \left(\frac{\delta H_W}{\delta  \bk}\right) d\phi
+ N d \left( \frac{\delta H_W }{\delta  N}\right)
\right)\otimes d^3x
\,,
\end{split}
\label{SVP3-det-redux}
\end{align}
where we have used evolution equation of $\pi_T\nabla \vartheta_s$ and $N\nabla \phi$ through equations in \eqref{eqns: Clebsch1}. Noting that $Dd^2x$ is still advected, from \eqref{SVP3-det-redux} one obtains the equation for the circulation 1-form $D^{-1}\bm_S$ needed for the Kelvin theorem, 
\begin{align}
(\partial_t  + \mathcal{L}_{u_s}) \left( D^{-1}\bm_S\cdot d\bx \right) = 
 d B - \gamma(z)d \vartheta_s -\frac{s}{2D}d\pi_T^2
- \frac{1}{D}{\rm div} \left(\frac{\delta H_W}{\delta  \bk} \right) d\phi
+ \frac{N}{D} d \left( \frac{\delta H_W }{\delta  N}\right)
\,.\label{Kel-form}
\end{align}
The final two terms in equation \eqref{Kel-form} can be evaluated in similar form to the 3D WKB closure of the GLM EB equations in \cite{GH1996,Holm2019}, where we further specialise to the 2D slice domain.
Namely,
\begin{align}
H_W = - \int_M N \omega(\bk) \,d^2x 
\,,\quad\hbox{so that}\quad
 \frac{\delta H_W }{\delta  N}\Big|_{\bk} = - \,\omega(\bk)
 \,,\quad\hbox{and}\quad
 \frac{\delta H_W}{\delta  \bk}\Big|_{N} = - N \frac{\partial \omega(\bk) }{\partial \bk} 
\,,\label{separatedWaveHam-redux}
\end{align}
In Subsection \ref{sec:GLM_vertical_slice}, we will use more concrete wave Hamiltonian $H_W$ derived by the using the WKB closure of GLM.

\begin{remark}[Kelvin circulation theorem for WKB wave-current interaction]
The motion equation \eqref{SVP3-det-redux} implies the following Kelvin circulation theorem for WKB wave-current interaction in the EB vertical slice model dynamics with internal gravity waves,
\begin{align}
\begin{split}
\frac{d}{dt} 
&\oint_{c(u_s)} \frac{1}{D}\bm_S\cdot d\bx
= \oint_{c(u_s)} (\partial_t  + \mathcal{L}_{u_s})\left(\frac{1}{D}\bm_S\cdot d\bx\right)
\\&\qquad = \oint_{c(u_s)} d\left(B-\frac{s\pi_T^2}{2D}\right) - \gamma(z)d \vartheta_s 
+
\underbrace{\
\frac{1}{D}  \bigg(\bk \,{\rm div} \left( N \frac{\partial \omega(\bk) }{\partial \bk}\right) 
+ 
N \nabla \omega(\bk)\bigg) \
}_{\hbox{IGW Forcing}}\hspace{-1mm}
\cdot \,d\bx
\,,
\end{split}
\label{Det-GLM-Kelvin}
\end{align}
where $c(u_s)$ is a material loop moving with the flow velocity $\bu_S(x,z,t)$ in the vertical slice. We remark that the notation $\nabla \omega(\bk)$ denotes the spatial gradient of $\omega$ to avoid confusion. 
In \eqref{Det-GLM-Kelvin}, the fluid quantities $B$ and $\bm$  are defined in equations \eqref{def: gamma-Bwave} and \eqref{def-m-mom}. The wave quantity $\frac{\partial \omega(\bk) }{\partial \bk}$ is defined in equation \eqref{separatedWaveHam-redux}. Besides the usual effect of fluid circulation generated by horizonal components of temperature gradients, ones sees the generation of fluid circulation caused by internal gravity wave quantities which can be evaluated using expressions in equations \eqref{separatedWaveHam-redux}.
\end{remark}

\paragraph{Hamiltonian formulation of the EB VSM with wave dynamics.}

The Hamiltonian formulation of the EB VSM with wave dynamics tracks closely to the EB VSM without waves since the wave degree of freedom have similar Hamiltonian structure as transverse fluid momentum and temperature. As the wave Hamiltonian $H_W$ are separate from the Lagrangian for the EB VSM Lagrangian, one can define the tangled Hamiltonian $\wt{h}_T(\bm_S,{D},\pi_T, \vartheta_s, N, \phi)$ by summing the $H_W$ with the tangled Hamiltian for the EB VSM in \eqref{eqn: tangled EB VSM ham} to have
\begin{align}
\wt{h}_T(m_S,m_T,{D},\vartheta_s, N, \phi; p) := \int_{\cal S} 
\frac{|\bm_S|^2}{2{D}} + \frac{(\pi_T - {D}fx)^2}{2{D}} - {D}\gamma(z)\vartheta_s + p(D-1) - H_W(N,\phi)
\,d^2x_s \,.
\label{eqn: HamVS_1-wave}
\end{align}
\begin{remark}
    We remark that the minus sign of the wave Hamiltonian $H_W$ appearing in \eqref{eqn: HamVS_1-wave} are due to the overall minus sign chosen for the phase space Lagrangian $-N\left(\p_t + \bu_S\cdot \nabla \phi\right) + H_W$ in \eqref{eqn: CLagVar1wave}.
\end{remark}
The variations of this Hamiltonian are given by
\begin{align}
\begin{split}
\delta \wt{h}_T &:= \int_{\cal S} 
\bu_S\cdot \delta \bm_S -  B \,\delta {D} + \left(\bu_S\cdot \nabla \vartheta_s + su_T\right)  \delta \pi_T 
-  \left( {\rm div}({D} \bu_S)  + {D}\gamma(z)  \right) \delta\vartheta_s
\\&\hspace{15mm}
+ \left(\bu_S\cdot \nabla \phi + \frac{\delta H_W}{\delta N}\right)\delta N 
-  \left( {\rm div}(N \bu_S) - \frac{\delta H_W}{\delta \phi}\right)\delta \phi
+ \left(D-1\right)\delta p\,d^2x_s
\,,\end{split}
\label{eqn: Ham1var}
\end{align}
where the expression $B$ is given by, cf. equations \eqref{eqn: CLagVar1} and \eqref{def: gamma-B},
\begin{align}
\begin{split}
\frac{\delta \wt{H}_1}{\delta D } 
&= p - \frac{1}{2{D}^2}|\wt{\bM} + \pi_T\nabla \vartheta_s + N\nabla\phi|^2  
- \frac{s^2}{2{D}^2}\pi_T^2 + \tfrac12(fx)^2 - \gamma(z) \vartheta_s
\\&= p - \tfrac12 |\bu_S|^2 -  \tfrac12 u_T^2 - u_Tfx - \gamma(z)\vartheta_s
= - \frac{\delta \ell}{\delta D } = - B
\,.\end{split}
\label{eqn: Ham1B}
\end{align}
The corresponding Hamiltonian equations are obtained from the following matrix operator which comprises a $s$ weighted symplectic $2\times2$ block in $(\pi_T,\vartheta_s)$ in the centre and symplectic $2\times2$ block in $(N,\phi)$ on the lower left which are coupled to the Lie-Poisson $2\times2$ block in $(\bm_S,{D})$ on the upper left through the semi-direct product structure. Namely, in coordinate form, we have
\begin{equation}
\frac{\p}{\p  t}
\begin{bmatrix}\,
{m_S}_j\\ {D} \\ \pi_T \\ \vartheta_s \\ N \\ \phi
\end{bmatrix}
= -
\begin{bmatrix}
{m_S}_k\partial_{j} + \partial_{k} {m_S}_j &{D}\partial_{j} & \pi_T \p_{j} & -{\vartheta_s}_{,j} & N\p_j & \phi_{, j}
\\
\partial_k {D} & 0 & 0 & 0 & 0 & 0
\\
\p_k \pi_T & 0 & 0 &  -s & 0 & 0
\\
{\vartheta_s}_{, k} & 0 & s & 0 & 0 & 0
\\
\p_k N & 0 & 0 &  0 & 0 & -1
\\
\phi_{, k} & 0 & 0 & 0 & 1 & 0   
\end{bmatrix}
\begin{bmatrix}
{\delta \wt{h}_T/\delta {m_S}_k} = {{u}_s}^k\\
{\delta \wt{h}_T/\delta {D}} =   - B \\
{\delta \wt{h}_T/\delta {\pi_T}} =  su_T  \\
{\delta \wt{h}_T/\delta \vartheta_s} =  - {D}\gamma(z)  \\
{\delta \wt{h}_T/\delta N} = - {\delta H_W}/{\delta N}  \\
{\delta \wt{h}_T/\delta \phi} =  - {\delta H_W}/{\delta \phi}  \\
\end{bmatrix}
\,.
  \label{VSM-LPbrkt-1-tangled}
\end{equation}
To untangle the canonical symplectic structure in the pairs of variables $(\pi_T, \vartheta_s)$ and $(N, \phi)$ from the semi-direct product Lie-Poisson structure of $(\bm_S, {D})$, we consider the untangled Hamiltonian $\wt{h}_{UT}$ in terms of the total momentum $\wt{\bM}$
\begin{align}
\begin{split}
&\wt{h}_{UT}(\wt{\bM},{D},\pi_T, \vartheta_s, N, \phi) 
\\&:= \int_{\cal S} 
\frac{1}{2{D}}|\wt{\bM} + \pi_T\nabla \vartheta_s + N\nabla\phi|^2 
+ \frac{1}{2D}\big(s\pi_T - Dfx\big)^2  - {D}\gamma(z) \vartheta_s + p({D}-1) - H_W(N,\phi)
\,d^2x_s
\,.\end{split}
\label{eqn: HamVS_2-wave}
\end{align}
Then, the Hamiltonian equations corresponding to \eqref{eqn: tangled EB VSM ham} can be written in terms of the following block-diagonal matrix operator form
\begin{equation}
\frac{\p}{\p  t}
\begin{bmatrix}\,
{\wt{M}}_j\\ {D} \\ \pi_T \\ \vartheta_s \\ N \\ \phi
\end{bmatrix}
= - 
   \begin{bmatrix}
   {\wt{M}}_k\partial_{j} + \partial_{k} {\wt{M}}_j &{D}\partial_{j} & 0 & 0 & 0 & 0
   \\
   \partial_k {D} & 0 & 0 & 0 & 0 & 0
      \\
   0 & 0 & 0 &  -1 & 0 & 0
   \\
   0 & 0 & 1 & 0 & 0 & 0
      \\
   0 & 0 & 0 &  0 & 0 & -1
   \\
   0 & 0 & 0 & 0 & 1 & 0   \end{bmatrix}
   \begin{bmatrix}
{\delta \wt{h}_{UT}/\delta {\wt{M}}_k} = {{u}_s}^k\\
{\delta \wt{h}_{UT}/\delta {D}} =   - B \\
{\delta \wt{h}_{UT}/\delta {\pi_T}} = \bu_S\cdot \nabla \vartheta_s + su_T  \\
{\delta \wt{h}_{UT}/\delta \vartheta_s} = - \,{\rm div}({D} \bu_S)  - {D}\gamma(z)  \\
{\delta \wt{h}_{UT}/\delta N} = \bu_S\cdot \nabla \phi - {\delta H_W}/{\delta N}  \\
{\delta \wt{h}_{UT}/\delta \phi} = - \,{\rm div}(N \bu_S)  - {\delta H_W}/{\delta \phi}  \\
\end{bmatrix}
\,,
  \label{VSM-LPbrkt-1-untangled}
\end{equation}
where we note that the Lie-Poisson $2\times2$ block in $(\wt{\bM},{D})$ on the upper left, a symplectic $2\times2$ block in $(\pi_T,\vartheta_s)$ in the centre, and another symplectic $2\times2$ block in $(N,\phi)$ on the lower left are completely separate. 

\subsection{The geometry of the generalised Lagrangian mean (GLM) approach to wave-current interaction}\label{sec: Geom GLM}

One particular wave Hamiltonian of interest is that corresponding to a closure of GLM. Following the initial contribution of Andrews and McIntyre \cite{AM1978}, a number of papers have emerged which seek to describe the generalised Lagrangian mean (GLM) approach to wave mean flow interaction (WMFI) from a geometric perspective. Notable contributions to this field were made in the context of an Euler-Boussinesq fluid \cite{GH1996}, for general Euler-Poincar\'e systems on semidirect product Lie algebras \cite{H2002b}, and, more recently, on any real Riemannian manifold \cite{GV2018}.

\paragraph{A review of WMFI in an Euler-Boussinesq fluid.}

Gjaja and Holm \cite{GH1996} closed the GLM approach to WMFI \cite{AM1978} for stratified rotating incompressible fluid flow in the Euler-Boussinesq (EB) approximation. This was accomplished by assuming a complex vector Wentzel–Kramers–Brillouin (WKB) representation of the internal wave amplitude and performing an asymptotic expansion of Hamilton's principle for the EB model to several orders in the ratio of wave-amplitude to wave-length $\alpha$ and the phase fluctuation ratio $\epsilon$. Recently \cite{HHS2023c}, the GH WKB closure of GLM at quadratic order $O(\alpha^2)$ in the wave amplitude, $\alpha$, has been expressed as a composition of two maps and derived through the semidirect product \cite{HMR1998} variational structure of an Euler-Boussinesq fluid. Understanding this approach by viewing the flow map as a composition of maps gives this model an intrinsic connection to the previous work of the authors on wave current interaction from the viewpoint of a composition of two maps \cite{HHS2023a,HHS2023b,HHS2023c}.

In the geometric description of the Landau paradigm which separates a turbulent flow into fluctuations around a mean flow, the flow map considered here is expressed as the composition of fluctuation map acting via pullback on a mean flow map. Taking the time derivative of the Lagrangian trajectory corresponding to this map results in two velocities, the \emph{Lagrangian mean} and \emph{Lagrangian disturbance} velocities, as introduced in \cite{AM1978}. To achieve the aforementioned closure \cite{GH1996,HHS2023c}, a WKB approximation is made in the displacement part of the Lagrangian trajectory. Performing the resulting approximations and expansions in Hamilton's Principle gives a closed system of equations for the mean flow and wave motion, featuring both the wave effects on currents and the effect of the currents on the waves. In particular, a dispersion relation was found for the Doppler-shifted frequency which extends that of the classical setting. In the classical setting it is assumed that vertical derivatives of the pressure dominate and a \emph{buoyancy frequency} is introduced to represent this effect. However, in the dispersion relation found for wave mean flow interactions \cite{GH1996,HHS2023c}, the effect of the fluid pressure on the (Doppler-shifted) wave frequency is featured in its entirety. As we will see in Proposition \ref{prop:slice_WCI_disp_rel}, a similar such dispersion relation can be found in the vertical slice framework, and its comparison to the standard theory of two dimensional internal gravity waves turns out to be similar.

\subsection{A wave mean flow decomposition of dynamics in the vertical slice}\label{sec:GLM_vertical_slice}

As in the three dimensional case, we begin by considering a flow map which is a composition between a mean flow and a fluctuating part. In particular, we represent spatial path of the flow map as
\begin{equation}
    \begin{pmatrix} x^{\xi}_t \\ y^{\xi}_t \\ z^{\xi}_t \end{pmatrix}= g_t\begin{pmatrix} X \\ Y \\ Z \end{pmatrix} = (Id + \alpha\xi_t)\circ \bar{g}_t\begin{pmatrix} X \\ Y \\ Z \end{pmatrix} = \begin{pmatrix} x_t \\ y_t \\ z_t \end{pmatrix} + \alpha\bxi_t \begin{pmatrix} x_t \\ y_t \\ z_t \end{pmatrix} \,,
    \label{eqn:FlowMap}
\end{equation}
where $(x^{\xi},y^{\xi},z^{\xi})$ are the coordinates of the full trajectory and $(x,y,z)$ are the \emph{mean} coordinates. As in the standard case, we will reserve the bold notation to mean the in-slice coordinates $\bx^{\xi} = (x^{\xi},z^{\xi})$, $\bx = (x,z)$, and $\bX = (X,Z)$.

Here, the mean flow map, $\bar{g}_t$, is assumed to be of the form \eqref{eqn:vertical_slice_flow_map} and corresponds to the \emph{Lagrangian mean velocity}
\begin{equation}
    \dot{\bar{g}}\bar{g}^{-1}\bx_t = \ob{\bu}_t(x_t,z_t) = (\ob{\bu}_S,\ob{u}_T) \,,
\end{equation}
in which the bar denotes Lagrangian mean. The fluctuation term $\bxi$ in \eqref{eqn:FlowMap} is defined to be an array of three scalar displacements in position defined on the vertical slice domain as functions of $x$ and $z$, i.e. $\mcal{F}(M)$. The array of displacements $\bxi$ can be split into its slice and transverse components as $\bxi = (\xi_1,\xi_2,\xi_3) = (\bxi_S,\xi_T)$.

Due to these assumptions and with the above notation, equation \eqref{eqn:FlowMap} can be rewritten as a pair of equations
\begin{align*}
    \bx^{\xi}_t &= g_t\bX = (Id + \alpha\xi_t)\circ \bar{g}_t\bX = \bx_t + \alpha\bxi_S(\bx_t,t)
    \,,\\
    y^{\xi}_t &= g_tY = (Id + \alpha\xi_t)\circ \bar{g}_tY = y_t + \alpha\xi_T(\bx_t,t)
    \,.
\end{align*}

In this set-up, the full map $g_t$ is also a map of the type described by equation \eqref{eqn:vertical_slice_flow_map} and thus can be described by an element of ${\rm Diff}(M)\ltimes \mcal{F}(M)$. Indeed, since $\bar{g}_t$ is of the form \eqref{eqn:vertical_slice_flow_map} and each component of $\bxi$ is a function of $x$ and $z$ only, we have
\begin{equation*}
    \begin{pmatrix} x^{\xi}_t \\ y^{\xi}_t \\ z^{\xi}_t \end{pmatrix} = (Id + \alpha\xi_t)\begin{pmatrix} x_t(X,Z) \\ y_t(X,Z) + Y \\ z_t(X,Z) \end{pmatrix} = \begin{pmatrix} x_t(X,Z) + \alpha\xi_1(x_t(X,Z),z_t(X,Z))\\ y_t(X,Z) + Y + \alpha\xi_2(x_t(X,Z),z_t(X,Z)) \\ z_t(X,Z) + \alpha\xi_3(x_t(X,Z),z_t(X,Z))  \end{pmatrix} \,,
\end{equation*}
which is of the form \eqref{eqn:vertical_slice_flow_map}.
Taking the time derivative of the full map $g_t$ reveals the structure of its tangent velocity field
\begin{equation}
\begin{aligned}    
    \bu(x^{\xi}_t,y^{\xi}_t,z^{\xi}_t,t) = \frac{d(x^{\xi}_t,y^{\xi}_t,z^{\xi}_t)}{dt} &= \ob{\bu}(x_t,z_t,t) + \alpha\left( \p_t\bxi(x_t,z_t,t) + \ob{\bu}_S(x_t,z_t,t)\cdot\nabla \bxi(x_t,z_t,t) \right)
    \\
    &= \ob{\bu}(x_t,z_t,t) + \alpha(\p_t + \mathcal{L}_{\ob{\bu}_S})\begin{pmatrix} \xi_1 \\ \xi_2 \\ \xi_3  \end{pmatrix}\,.
\end{aligned}
\end{equation}
Notice that $\ob{\bu}$ and $\bxi$ are both three dimensional objects defined on our two dimensional domain, $M$. We thus introduce notation  We then make the assumption that the fluctuating part takes the form
\begin{equation}\label{eqn:WKB_fluctuation}
    \bxi(\bx,t) = \ba(\epsilon\bx,\epsilon t)e^{i\phi(\epsilon\bx,\epsilon t)/\epsilon}
+ \ba^*(\epsilon\bx,\epsilon t)e^{-i\phi(\epsilon\bx,\epsilon t)/\epsilon}\,,
\end{equation}
which is motivated by a WKB approximation. The pressure therefore also decomposes into mean and fluctuating parts as
\begin{equation}\label{eqn:WKB_pressure}
        p(\bx^{\xi}, t) = p_0(\bx^{\xi}, t) + \sum_{j\geq 1}\alpha^j\left(b_j(\epsilon\bx^{\xi}, \epsilon t)e^{ij\phi(\epsilon\bx^{\xi},\epsilon t)/\epsilon} + b^*_j(\epsilon\bx^{\xi}, \epsilon t)e^{-ij\phi(\epsilon\bx^{\xi},\epsilon t)/\epsilon}\right)\,.
\end{equation}
Since we have introduced a wave phase variable, $\phi(\bx,t)$, we can define the wave vector, frequency, and doppler-shifted frequency in the standard way
\begin{equation}\label{eqn-DopplerShift-Freq}
    \bk = \nabla\phi \,,\quad \omega = -\p_t\phi \,,\quad\hbox{and}\quad \wt{\omega} = -(\p_t + \bu_S\cdot\nabla)\phi \,.
\end{equation}

As demonstrated in Appendix \ref{appendix:expansion}, the action corresponding to the Euler-Boussinesq Eady model under these approximations is
\begin{equation}\label{action:EB-Eady-WCI}
\begin{aligned}
    S[\ob{\bu}_S,\ob{u}_T,D,\theta_s,p_0,b,\ba] &= \int_{t_0}^{t_1} \ell[\ob{\bu}_S,\ob{u}_T,D,\theta_s,p_0,b,\ba] \,dt 
    \\
    &= \int_{t_0}^{t_1}\int_M \frac{D}{2} \bigg(\big|\ob{\bu}_S\big|^2 + \ob{u}_T^2 + 2\alpha^2 \wt{\omega}^2\left( \ba_S\cdot\ba_S^* + a_Ta_T^*\right) \bigg) 
    \\
    &\qquad\qquad + Df\ob{u}_Tx + Df\alpha^2i\wt{\omega}\left( -a_Ta_1^* + a_T^*a_1 \right) + \frac{g}{\theta_0}D\left( z - \frac{H}{2} \right) \theta_s
    \\
    &\qquad\qquad - D\alpha^2i\left( b\bk\cdot\ba_S^* - b^*\bk\cdot\ba_S \right) - D\alpha^2a^*_ia_j\frac{\p^2p_0}{\p x_i\p x_j} 
    \\
    &\qquad\qquad+ (1-{D})p_0 + \mathcal{O}(\alpha^2\epsilon) \, d^2x\,dt \,,
\end{aligned}
\end{equation}
where we sum over $i,j \in \{1,2\}$. By constraining the relationship between $N$, $\phi$, and $\wt{\omega}$, we have
\begin{equation}\label{action:EB-Eady-WCI-constrained}
\begin{aligned}
    S[\ob{\bu}_S,\ob{u}_T,D,\theta_s,p_0,b,\ba] &= \int_{t_0}^{t_1} \ell[\ob{\bu}_S,\ob{u}_T,D,\theta_s,p_0,b,\ba] \,dt 
    \\
    &= \int_{t_0}^{t_1}\int_M \frac{D}{2} \bigg(\big|\ob{\bu}_S\big|^2 + \ob{u}_T^2 + 2\alpha^2 \wt{\omega}^2\left( \ba_S\cdot\ba_S^* + a_Ta_T^*\right) \bigg) 
    \\
    &\qquad\qquad + Df\ob{u}_Tx + Df\alpha^2i\wt{\omega}\left( -a_Ta_1^* + a_T^*a_1 \right) + \frac{g}{\theta_0}D\left( z - \frac{H}{2} \right) \theta_s
    \\
    &\qquad\qquad - D\alpha^2i\left( b\bk\cdot\ba_S^* - b^*\bk\cdot\ba_S \right) - D\alpha^2a^*_ia_j\frac{\p^2p_0}{\p x_i\p x_j} + (1-{D})p_0 \,d^2x
    \\
    &\qquad\qquad+ \alpha^2 \scp{N}{-\frac{\p}{\p\epsilon t}\phi - \ob{\bu}_S\cdot \nabla_{\epsilon\bx} \phi - \wt{\omega}} + \mathcal{O}(\alpha^2\ep) \, dt \,.
\end{aligned}
\end{equation}
Taking variations in \eqref{action:EB-Eady-WCI-constrained}, we have the following 
\begin{equation}\label{eqn:variations_EB-Eady-WCI-constrained}
    \begin{aligned}
         0 = \delta S = \int_{t_0}^{t_1} &\scp{\delta \ob{\bu}_S}{D\ob{\bu}_S - \alpha^2N\bk } + \scp{\delta\ob{u}_T}{D\ob{u}_T + Dfx} + \scp{\delta\theta_s}{\frac{g}{\theta_0}D\left( z - \frac{H}{2}\right)} 
         \\
         & + \scp{\delta D}{\varpi} + \scp{\delta p_0}{1-D} + \scp{\delta N}{-\frac{\p\phi}{\p \epsilon t} - \ob{\bu}_S\cdot\nabla\phi - \wt{\omega}}
         \\
         &  + \scp{\delta\phi}{\frac{\p N}{\p\epsilon t} + {\rm div}_{\epsilon\bx}(\ob{\bu}_SN) + i{\rm div}_{\epsilon\bx}(Db\ba_S^* - Db^*\ba_S)}
         \\
         & + \scp{\delta\wt{\omega}}{2D\alpha^2\wt{\omega}\left( \ba_S\cdot\ba_S^* + a_Ta_T^*\right) + Df\alpha^2 i (-a_Ta_1^* + a_T^*a_1) - \alpha^2N }
         \\
         & + \scp{\delta a_1^*}{D\alpha^2\wt{\omega}^2 a_1 - iDf\alpha^2\wt{\omega}a_T - iD\alpha^2 b k_1 - \alpha^2D a_j\frac{\p^2 p_0}{\p x_1 \p x_j}} + c.c.  
         \\
         & + \scp{\delta a_2^*}{D\alpha^2\wt{\omega}^2 a_2 - iD\alpha^2 b k_2 - \alpha^2 D a_j\frac{\p^2 p_0}{\p x_2\p x_j}} + c.c.
         \\
         & +\scp{\delta a_T^*}{D\alpha^2 \wt{\omega}^2 a_T + iDf\alpha^2\wt{\omega}a_1} + \scp{\delta b^*}{-\bk\cdot\ba_S} + c.c. \,dt\,,
    \end{aligned}
\end{equation}
where we have introduced a reduced notation for the variation in the density, $D$,
\begin{equation}
\begin{aligned}
    \varpi := \frac{\delta\ell}{\delta D} &= \frac{|\ob{\bu}_S|^2 + \ob{u}_T^2}{2} + f\ob{u}_Tx + \frac{g}{\theta_0}\left( z - \frac{H}{2} \right)\theta_s + \alpha^2\wt{\omega}^2\left( \ba_S\cdot\ba_S^* + a_Ta_T^* \right) - p_0
    \\
    &\qquad + \alpha^2fi\wt{\omega}\left( -a_Ta_1^* + a_T^*a_1 \right) - \alpha^2i\left( b\bk\cdot\ba_S^* - b^*\bk\cdot\ba_S \right) - \alpha^2a_i^*a_j\frac{\p^2p_0}{\p x_i\p x_j} \,.
\end{aligned}
\end{equation}

{\color{black}
As explained in Appendix \ref{appendix:expansion}, the advected quantities featuring in the action \eqref{action:EB-Eady-WCI} are defined along the mean part of the Lagrangian trajectory. For example, a scalar advected quantity satisfies $a_t(\bx^{\xi}_t) = a_t((Id + \alpha\xi_t)\circ \bar{g}_t \bx_0) = a_0(\bx_0)$. Thus, by defining $a_t^{\xi}(\bx_t) = a_t((Id + \alpha\xi_t)\bx_t)$, we see that the quantity $a_t^{\xi}$ is advected by the \emph{mean} flow. This idea can be generalised to advected quantities with a different geometric form by examining how their basis transforms (see Appendix \ref{appendix:expansion}), and the advected quantities, $\theta_s$ and $Dd^2x$, found in the action \eqref{action:EB-Eady-WCI} are of this form. The Euler-Poincar\'e equations given in Remark \ref{rmk:general_vertical_slice_EP} may therefore be applied to our wave current interaction action, where the transport velocity is the in-slice Lagrangian mean velocity. The remaining variations will separately give the wave dynamics. 
}

\paragraph{The total momentum equation. }

The \emph{total momentum} of the system is the $1$-form density defined by
\begin{equation}\label{eqn:total_momentum}
    M = \bs{M}\cdot d{\bx} \otimes d^2x := \frac{\delta\ell}{\delta\ob{u}_S} = D\ob{\bu}_S\cdot d{\bx} \otimes d^2x - \alpha^2N\bk\cdot d{\bx} \otimes d^2x \,.
\end{equation}
We assemble the variational derivatives of the Lagrangian (with respect to $\ob{u}_S$, $\ob{u}_T$, $D$, and $\theta_s$) into the Euler-Poincar\'e equation \eqref{eqn:EP_slice_mom}, we have
\begin{equation*}
    (\p_t + \mathcal{L}_{\ob{u}_S})\left[ \frac{1}{D}\left( D\ob{\bu}_S - \alpha^2N\bk \right)\cdot d{\bx} \right] + \frac{1}{D}\left( D\ob{u}_T + Dfx \right)d\ob{u}_T = d\varpi - \frac{1}{D}\left[ \frac{g}{\theta_0}D\bigg( z - \frac{H}{2}\bigg)\right] \,.
\end{equation*}
For the wave action density, $N$, defined by the variation in the doppler-shfited phase, we have
\begin{equation}\label{eqn:wave_action_density}
    \frac{N}{D} = 2\wt{\omega}\left( \ba_S\cdot\ba_S^* + a_Ta_T^*\right) + if(-a_Ta_1^* + a_T^*a_1) \,,
\end{equation}
and we thus have
wea
\begin{equation}\label{eqn:EP_total_mom_geometric_form}
\begin{aligned}
    (\p_t + \mathcal{L}_{\ob{u}_S})\left( \frac{\bs{M}\cdot d{\bx}}{D} \right) &= d\left( \frac{|\ob{\bu}|^2}{2} - p_0 \right) + f\ob{u}_T\wh{x}\cdot d{\bx} + \frac{g}{\theta_0}\theta_s\wh{z}\cdot d{\bx} 
    \\
    &\quad + \alpha^2d\left( \wt{\omega}\frac{N}{D} - \wt{\omega}^2(\ba_S\cdot\ba_S^* + a_Ta_T^*) - a_i^*a_j\frac{\p^2p_0}{\p x_i\p x_j} \right) \,.
\end{aligned}
\end{equation}
In vector calculus notation\footnote{Note that we have used the identity $\mathcal{L}_u(\bs{A}\cdot d{\bx}) = \left( - \bs{u}\times {\rm curl}\bs{A} + \nabla(\bs{u}\cdot\bs{A}) \right)\cdot d{\bx}$}, the equation of motion is
\begin{equation}\label{eqn:EP_total_mom}
\begin{aligned}
    \p_t\frac{\bs{M}}{D} - \ob{\bu}_S\times {\rm curl}\frac{\bs{M}}{D} &+ \nabla\left( \frac{|\ob{\bu}_S|^2}{2} + p_0 \right) - f\ob{u}_T\wh{x} - \frac{g}{\theta_0}\theta_s\wh{z}
    \\
    & +\alpha^2\nabla \left( -\omega \frac{N}{D} + \wt{\omega}^2(\ba_S\cdot\ba_S^* + a_Ta_T^*) + a_i^*a_j\frac{\p^2p_0}{\p x_i\p x_j} \right) = 0
    \,,
\end{aligned}
\end{equation}
where the term $\nabla(\ob{\bu}_S \cdot (\bs{M} / D))$ arising from the Lie derivative has been combined with the remainder of the equation, in particular serving to remove the Doppler shift from the frequency appearing in the first term of order $\alpha^2$. 

\paragraph{The mean flow momentum equation. }

We decompose the total momentum \eqref{eqn:total_momentum} into a mean flow momentum and wave momentum
\begin{equation}
    \bs{M} = D\ob{\bu}_S - \alpha^2N\bk =: \bs{m} - \bp \,.
\end{equation}
From the variational principle we have equations for $N$ and $\phi$,
\begin{align}
    \p_t N + {\rm div}(N \ob{\bu}_S) &= - i{\rm div}(Db\ba_S^* - Db^*\ba_S)
    \,,\\
    \p_t\phi + \ob{\bu}_S\cdot\nabla \phi &= -\wt{\omega}
    \,,
\end{align}
which, when combined, give an equation for the wave momentum
\begin{equation}
    (\p_t + \mathcal{L}_{\ob{\bu}_S})(\bp\cdot d\bx) = \alpha^2(\p_t + \mathcal{L}_{\ob{\bu}_S})(Nd\phi) = -\alpha^2Nd\wt{\omega} -\alpha^2{\rm div}\left( N\bs{v}_g\right)d\phi \,,
\end{equation}
where
\begin{equation*}
    \bs{v}_g := \frac{iD}{N}\left( \ba_S^*b - \ba_Sb^* \right) \,.
\end{equation*}
Combining this equation with the Euler-Poincar\'e equation \eqref{eqn:EP_total_mom_geometric_form} gives the equation for the mean flow momentum
\begin{equation}\label{eqn:EP_mean_mom}
\begin{aligned}
    \p_t\ob{\bu}_S - \ob{\bu}_S\times{\rm curl}\ob{\bu}_S &= - \nabla\left( \frac{|\ob{\bu}_S|^2}{2} + p_0 \right) + f\ob{u}_T\wh{x} + \frac{g}{\theta_0}\theta_s\wh{z}
    \\
    &\quad - \alpha^2\nabla\left( -\wt{\omega} \frac{N}{D} + \wt{\omega}^2(\ba_S\cdot\ba_S^* + a_Ta_T^*) + a_i^*a_j\frac{\p^2p_0}{\p x_i\p x_j} \right)
    \\
    &\quad -\frac{\alpha^2}{D}\left(N\nabla\wt{\omega} + \bk {\rm div}\left( N\bs{v}_g\right) \right)
    \,.
\end{aligned}
\end{equation}
Notice that by moving the time derivative of the wave momentum, $(\p_t + \mathcal{L}_{\ob{\bu}_S})(\bp\cdot d\bx)$, to the right hand side of the equation of motion, we no longer have the term in our Lie derivative which removed the Doppler shift from the frequency in equation \eqref{eqn:EP_total_mom}. Thus, all occurrences of the wave frequency in equation \eqref{eqn:EP_mean_mom} are Doppler shifted.

\paragraph{The equation for the mean transverse velocity. }

The Euler-Poincar\'e equation \eqref{eqn:EP_slice_transverse} from Remark \ref{rmk:general_vertical_slice_EP}, when combined with Hamilton's Principle \eqref{eqn:variations_EB-Eady-WCI-constrained}, immediately gives the equation for the mean transverse velocity
\begin{equation}
    \p_t\ob{u}_T + \ob{\bu}_S\cdot\nabla \ob{u}_T + f\ob{\bu}_S\cdot\wh{x} = - \frac{g}{\theta_0}\bigg(z - \frac{H}{2} \bigg)s \,.
\end{equation}
\begin{remark}
    This equation is identical to that present in the standard Euler-Boussinesq Eady model \cite{CH2013}. This is to be expected, since each element of the fluctuating term $\bxi$ in \eqref{eqn:FlowMap} is a function on $M$ and thus has no derivative in the direction transverse to the slice. As such, the wave effect on current occurs only on the velocity field $\ob{\bu}_S$ within the vertical slice.
\end{remark}

\paragraph{The advection equations and incompressibility. }

The variation in $p_0$ implies that $D=1$ up to order $\alpha^2\epsilon$. If we combine this with the advection equation for the mass density, we have incompressibility
\begin{equation}
     \left.\begin{aligned} D &= 1 \\ \p_t D + {\rm div}(D\ob{\bu}_S) &= 0 \end{aligned}\right\} \quad\implies\quad \nabla\cdot \ob{\bu}_S = 0 \,.
\end{equation}
These equations are to be considered together in tandem with the advection equation for the scalar buoyancy,
\begin{equation}
    \p_t\theta_s + \ob{\bu}_S\cdot\nabla\theta_s + \ob{u}_Ts = 0 \,.
\end{equation}

\paragraph{The wave dynamics. }

Using the variations of $\delta a_1$, $\delta a_2$, $\delta a_T$, $\delta b$ and the stationary condition, the linear set of equations 
\begin{align}
    \begin{split}
        \wt{\omega}^2 a_1 - if\wt{\omega}a_T - ib k_1 - a_j\frac{\p^2 p_0}{\p x_1 \p x_j} &= 0 \,,\\
        \wt{\omega}^2 a_2 - ib k_2 - a_j\frac{\p^2 p_0}{\p x_2 \p x_j} &= 0\,,\\
        \wt{\omega}^2 a_T + if\wt{\omega}a_1 &= 0\,,\\
        \bk\cdot\ba_S &= 0\,,
    \end{split} \label{eq:ab constrains}
\end{align}
implies a linear dispersion relation for the Doppler shifted wave frequency $\wt{\omega}$ 
defined in \eqref{eqn-DopplerShift-Freq} 
\begin{equation}\label{eqn:slice_WCI_dispersion_rel}
    \wt{\omega}^2 = \frac{f^2k_2^2}{|\bk|^2} + \left( \delta_{ij} - \frac{k_ik_j}{|\bk|^2} \right)\frac{\p^2p_0}{\p x_i\p x_j} \,.
\end{equation}
{\color{black}
The details of the derivation of the dispersion relation equation in \eqref{eqn:slice_WCI_dispersion_rel} are shown in the proof of Proposition \ref{prop:slice_WCI_disp_rel} in Appendix \ref{app-B-Dispersion}.
}
\begin{remark}
The dispersion relation \eqref{eqn:slice_WCI_dispersion_rel} takes a more familiar form when reverting many of our modelling assumptions back to a more standard form. In particular, we have assumed that the waves are coupled to the mean flow and the frequency is Doppler shifted. We have assumed further that the pressure here is the complete pressure required to ensure that the flow is incompressible. If we instead consider the non-shifted frequency and make the standard assumption that the pressure increases with depth such that $p_0$ only has derivatives in $z$ and is related to the buoyancy frequency, ${\cal N}$, in the standard way, the dispersion relation in \eqref{eqn:slice_WCI_dispersion_rel} simplifies to the following form,
\begin{equation}\label{eqn-IW-dispersion}
    \left.\begin{aligned} \wt{\omega} &\mapsto \omega \\ p_0 &= \frac{{\cal N}^2z^2}{2}\end{aligned}\right\} \quad\implies\quad \omega^2 = \frac{k_2^2f^2 + k_1^2{\cal N}^2}{|\bk|^2}\,.
\end{equation}
Under these assumptions the dispersion properties of the waves found in our model are a generalisation of the classical theory for internal gravity waves. This is clear, since the approximate dispersion relation in \eqref{eqn-IW-dispersion} may be found in standard texts on fluid dynamics \cite{V2017}. This generalisation is noteworthy since the assumption that vertical gradients of the pressure are dominant is not universally valid.
\end{remark}




\section{Stochastic advection by Lie transport (SALT) for VSMs}\label{sec:SALTyVSM}
As discussed at the beginning of Section \ref{sec:WMFI}, parameterisations of fast fluctuations are essential in the numerical simulation of geophysical fluid dynamics. A more modern approach is through the application of stochastic parametersiation schemes, where one obtains a statistical representation of uncertainty during ensemble forecasting simulations. In this section, we briefly consider a stochastic parameterisation scheme known as SALT \cite{H2015} applied to the family of VSMs. 

Recall that the Euler-Poincar\'e equations resulting from Theorem \ref{thm:EP_slice} can be written as the following Lie-Poisson equations on $(\mathfrak{X}(M) \ltimes \mathcal{F}(M))^* \times V_1^* \times V_2^*$,
\begin{equation}\tag{\eqref{Eqn: Tangled-Deterministic} revisited}
\frac{\p}{\p t}
\begin{bmatrix}\,
m_S \\ {D} \\ m_T \\ \vartheta_s
\end{bmatrix}
= - 
   \begin{bmatrix}
   \ad^*_{\Box}m_S & \Box \diamond {D} & \Box \diamond m_T  & \Box \diamond \vartheta_s
   \\
   \mathcal{L}_{\Box}{D} & 0 & 0 & 0 
   \\
   \mathcal{L}_{\Box}m_T & 0 & 0 & -s 
   \\
   \mathcal{L}_{\Box}\vartheta_s & 0 & s & 0
   \end{bmatrix}
   \begin{bmatrix}
	{\delta h}/{\delta m_S} \\
	{\delta h}/{\delta {D}} \\
	{\delta h}/{\delta m_T} \\
	{\delta h}/{\delta \vartheta_s}
\end{bmatrix} 
\,,
\end{equation}
where the reduced Hamiltonian, $h(m_S,{D},m_T,\theta_s)$, is related to the Lagrangian through the Legendre transform
\begin{equation}
    h(m_S,{D},m_T,\vartheta_s) = \scp{m_S}{u_S} + \scp{m_T}{u_T} - \ell(u_S,u_T,{D},\vartheta_s,p) \,.
\end{equation}
\begin{remark}
    As for the velocity field, we will denote the momentum by $m_S$, and the coefficients of the momentum with respect to the geometric basis by $\bm_S$.
\end{remark}
As noted in \cite{Street2021}, particular care needs to be taken in the variational description of incompressible fluids with stochastic transport. That is, the variation in $D$ gains a stochastic part corresponding to the stochastic part of the pressure. To achieve this, we define stochastic Hamiltonians, $h_i:\mathfrak{X}^* \ltimes (\Lambda^2 \oplus \Lambda^2 \oplus \Lambda^0 )\rightarrow \mathbb{R}$, by
\begin{equation}
    h_i = \scp{m_S}{\xi_{Si}}_{\mathfrak{X}^*\times\mathfrak{X}} + \int_M \xi_{Ti}\,m_T - \int_M p_i(D-1)\,d^2x \,,
\end{equation}
where $\xi_{Si}\in \mathfrak{X}(M)$ and $\xi_{Ti}\in \Omega^0(M)$. The stochastic Lie-Poisson system is then given by
\begin{equation}\label{eqn:stochastic_LP_sigma}
    \rmd f = \{ f , h \}^\Sigma \,dt + \sum_i \{ f , h_i \}^\Sigma \circ dW_t^i \,,
\end{equation}
where the bracket $\{\cdot,\cdot,\}^\Sigma$ is the bracket also used in the deterministic equation \eqref{eqn:LP_demonstration_conclusion}. We will employ the following shorthand notation
\begin{equation}
    \rmd P := p\,dt + \sum_i p_i\circ dW_t^i \,,\quad \rmd x_S := u_S \,dt + \sum_i \xi_{Si}\circ dW_t^i \,,\quad \rmd x_T := u_T\,dt + \sum_i \xi_{Ti} \circ dW_t^i \,.
\end{equation}
The stochastic Lie-Poisson equations \eqref{eqn:stochastic_LP_sigma} for vertical slice dynamics, can be written in the following explicit geometric form as, cf. equation \eqref{Eqn: Tangled-Deterministic},
\begin{equation}\label{eqn:Stochastic_Lie_Poisson}
\rmd
\begin{bmatrix}\,
m_S \\ {D} \\ m_T \\ \vartheta_s
\end{bmatrix}
= - 
   \begin{bmatrix}
   \ad^*_{\Box}m_S & \Box \diamond {D} & \Box \diamond m_T  & \Box \diamond \vartheta_s
   \\
   \mathcal{L}_{\Box}{D} & 0 & 0 & 0 
   \\
   \mathcal{L}_{\Box}m_T & 0 & 0 & -s 
   \\
   \mathcal{L}_{\Box}\vartheta_s & 0 & s & 0
   \end{bmatrix}
   \begin{bmatrix}
	\rmd x_S := u_S \,dt + \sum_i \xi_{Si}\circ dW_t^i   \\
	{\delta h}/{\delta {D}}\,dt + \sum_i{\delta h_i}/{\delta {D}}\circ dW_t^i \\
	\rmd x_T := u_T\,dt + \sum_i \xi_{Ti} \circ dW_t^i \\
	{\delta h}/{\delta \vartheta_s}\,dt 
\end{bmatrix} 
.
\end{equation}
Thus, we have
\begin{align}
	(\rmd + \mathcal{L}_{\rmd x_S})m_S &= -\rmd x_T\diamond m_T\,dt - \frac{\delta h}{\delta\vartheta_s}\diamond\vartheta_s\,dt - \frac{\delta h}{\delta {D}}\diamond {D}\,dt - \sum_i\frac{\delta h_i}{\delta {D}}\diamond {D}\circ dW_t^i 
    \label{eqn:slice_LP_mom}
	\,,\\
	(\rmd + \mathcal{L}_{\rmd x_S})m_T &= \frac{\delta H}{\delta\theta_s}s\,dt
	\,,\\
	(\rmd + \mathcal{L}_{\rmd x_S})\vartheta_s &= -s\,\rmd x_T
	\,,\\
	(\rmd + \mathcal{L}_{\rmd x_S})({D}d^2x) &= 0
	\,.
\end{align}

\begin{remark}
    Notice that in this model the Poisson bracket remains unmodified, which is evident from observing equations \eqref{eqn:stochastic_LP_sigma} and \eqref{eqn:Stochastic_Lie_Poisson}.
\end{remark}

\subsection{The Euler-Boussinesq Eady model with SALT}

We first Legendre transform the Lagrangian for the Euler-Boussinesq Eady model to get
\begin{equation}
	h(m_S,m_T,\theta_s,{D}) = \int_M \frac{1}{2{D}}\left( |\bs{m}_S|^2 + m_T^2 \right) - {D}fm_Tx - \frac{g}{\theta_0}{D}\left( z - \frac{H}{2} \right)\theta_s + p(D-1)\,d^2x \,.
\end{equation}
The equation \eqref{eqn:slice_LP_mom} corresponding to the above Hamiltonian is
\begin{equation}
    (\rmd + \mathcal{L}_{\rmd x_S})u_S^{\flat} + (u_T+ fx)d(\rmd x_T) = \frac{1}{{D}}\frac{\delta H}{\delta\vartheta_s}d\vartheta_s\,dt -dp\,dt - \sum_i dp_i \circ dW_t^i \,,
\end{equation}
where we have divided by the volume form and used the fact that $\frac{1}{{D}}\frac{\delta\ell}{\delta u_T} = \frac{m_T}{{D}} = u_T + fx$. Note that in the above equation, the exterior derivative is denoted by $d$ and the time increment by $\rmd$.

In vector calculus notation, we have
\begin{align}
	\begin{split}
	\rmd \bu_S + \bu_S \cdot \nabla \bu_S\,dt + \sum_i \bxi_{Si}\cdot\nabla\bu_S \circ dW_t^i  + \sum_i u_j\nabla\xi_{Si}^j \circ dW_t^i  + \sum_i u_T&\nabla \xi_{Ti}\circ dW_t^i + \sum_i fx\nabla \xi_{Ti}\circ dW_t^i 
	\\
	&= fu_T\wh{x}\,dt+ \frac{g}{\vartheta_0}\vartheta_s\wh{z}\,dt - \nabla\rmd P \,,
	\end{split}
	\\
	\rmd u_T + \bu_S \cdot \nabla u_T \,dt + \sum_i \bxi_{Si}\cdot\nabla u_T \circ dW_t^i &= - f\bu_S\cdot\wh{x}\,dt -\frac{g}{\vartheta_0}\left( z - \frac{H}{2} \right)s\,dt
	\,,\\
	\rmd\vartheta_s + \bu_S \cdot \nabla \vartheta_s \,dt + \sum_i \bxi_{Si}\cdot\nabla \vartheta_s \circ dW_t^i + u_Ts\,dt + \sum_i \xi_{Ti}s\circ dW_t^i &= 0
	\,,\\
	\nabla \cdot \bu_S = \nabla \cdot \bxi_{Si} &= 0
	\,.
\end{align}
If we use the reduced notation $\rmd \bx_S$ and $\rmd \bx_T$, and introduce the notation $\bu = (\bu_S,u_T)$ and $\bxi = (\bxi_S,\xi_T)$, then the equations can be written in the following more compact form
\begin{align}
	\rmd \bu_S + \rmd \bx_S\cdot \nabla \bu_S + \sum_i u_j \nabla \xi^j \circ dW_t^i + \sum_i fx\nabla \xi_{Ti}\circ dW_t^i &= fu_T\wh{x}\,dt+ \frac{g}{\vartheta_0}\vartheta_s\wh{z}\,dt - \nabla\rmd P
	\,,\\
	\rmd u_T + \rmd \bx_S \cdot \nabla u_T &= - f\bu_S\cdot\wh{x}\,dt -\frac{g}{\vartheta_0}\left( z - \frac{H}{2} \right)s\,dt
	\,,\\
	\rmd\vartheta_s + \rmd \bx_S\cdot \nabla \vartheta_s + s\,\rmd \bx_T&= 0
	\,,\\
	\nabla \cdot \bu_S = \nabla \cdot \bxi_{Si} &= 0
	\,.
\end{align}
\begin{theorem}[Kelvin-Noether]
	For a vertical slice model with SALT as introduced in equation \eqref{eqn:Stochastic_Lie_Poisson}, we have that
	\begin{equation}
		\rmd \oint_{\gamma_t} \left( s\frac{m_S}{{D}}  - \frac{m_T}{{D}}\nabla\vartheta_s \right)\cdot d\bx = 0 \,,
	\end{equation}
	where $\gamma_t:C^1 \mapsto M$ is a closed loop moving with the flow $\rmd x_S$.
\end{theorem}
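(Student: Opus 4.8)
The plan is to recognise that the circulation integrand in the statement is, up to the constant factor $s$, the ``untangled'' circulation one-form $D^{-1}\bM\cdot d\bx$ that already appears in the deterministic Kelvin--Noether theorems of this section, and then to re-run the deterministic argument with $\p_t$ replaced by the Stratonovich differential $\rmd$ and the transport velocity $u_S$ replaced by $\rmd x_S = u_S\,dt + \sum_i\xi_{Si}\circ dW_t^i$. Concretely, since $\bm_S = D\bu_S$, $m_T = s\pi_T$ and $\bM = D\bu_S - \pi_T\nabla\vartheta_s$ from \eqref{def: momvars}, one has the identity $s\,\tfrac{\bm_S}{D}\cdot d\bx - \tfrac{m_T}{D}\,\nabla\vartheta_s\cdot d\bx = s\,D^{-1}\bM\cdot d\bx$, so it suffices to show $\rmd\oint_{\gamma_t} D^{-1}\bM\cdot d\bx = 0$.

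First I would invoke the stochastic Kelvin transport identity $\rmd\oint_{\gamma_t}\alpha = \oint_{\gamma_t}(\rmd + \mathcal{L}_{\rmd x_S})\alpha$ for a one-form $\alpha$ carried by the flow of $\rmd x_S$; in the Stratonovich calculus this is the Kunita--It\^o--Wentzell formula and it relies only on the facts that $\rmd$ commutes with $d$ and that $\mathcal{L}_{\rmd x_S}$ is a derivation, exactly as in the deterministic case. Thus the problem reduces to showing that $(\rmd + \mathcal{L}_{\rmd x_S})(D^{-1}\bM\cdot d\bx)$ is an exact one-form. To this end I would split $D^{-1}\bM\cdot d\bx = \bu_S\cdot d\bx - \tfrac{\pi_T}{D}\,d\vartheta_s$ and evaluate the two pieces. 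For the first, I divide the stochastic momentum equation \eqref{eqn:slice_LP_mom} by the advected density $Dd^2x$ --- legitimate since $(\rmd + \mathcal{L}_{\rmd x_S})(Dd^2x) = 0$ --- using that $\tfrac1D\bigl(\tfrac{\delta h}{\delta D}\diamond D\bigr) = d\tfrac{\delta h}{\delta D}$ and $\tfrac1D\bigl(\tfrac{\delta h_i}{\delta D}\diamond D\bigr) = -dp_i$ are exact, that the temperature diamond term reduces to $-\gamma(z)\,d\vartheta_s\,dt$, and that the transverse cross term reduces to $-\tfrac{m_T}{D}\,d(\rmd x_T)$ with $\rmd x_T = u_T\,dt + \sum_i\xi_{Ti}\circ dW_t^i$. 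For the second piece, I use the stochastic advection laws $(\rmd + \mathcal{L}_{\rmd x_S})\vartheta_s = -s\,\rmd x_T$ and $(\rmd + \mathcal{L}_{\rmd x_S})m_T = s\tfrac{\delta h}{\delta\vartheta_s}\,dt$, together with the Leibniz rule for $\mathcal{L}_{\rmd x_S}$ and the commutation $(\rmd + \mathcal{L}_{\rmd x_S})d\vartheta_s = d\bigl[(\rmd + \mathcal{L}_{\rmd x_S})\vartheta_s\bigr]$, to get $(\rmd + \mathcal{L}_{\rmd x_S})\bigl(\tfrac{\pi_T}{D}\,d\vartheta_s\bigr) = -\gamma(z)\,d\vartheta_s\,dt - \tfrac{m_T}{D}\,d(\rmd x_T)$. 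Subtracting the two computations, the $\tfrac{m_T}{D}\,d(\rmd x_T)$ and $\gamma(z)\,d\vartheta_s\,dt$ contributions cancel identically, leaving $(\rmd + \mathcal{L}_{\rmd x_S})(D^{-1}\bM\cdot d\bx) = d\bigl(B\,dt + \sum_i p_i\circ dW_t^i\bigr)$, which is exact; integrating around the closed loop $\gamma_t$ and multiplying by $s$ then gives the theorem.

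I expect the main obstacle to be bookkeeping rather than anything conceptual. One must keep the diamond-operator sign conventions straight --- in particular, that the semidirect-product cross term is $\rmd x_T\diamond m_T = m_T\,\nabla(\rmd x_T)\cdot d\bx\otimes d^2x$, with $m_T$ playing the role of the advected density and $\rmd x_T$ that of $\delta h/\delta m_T$ plus noise --- and one must notice that the cancellation closes \emph{only because} the same stochastic fields $\xi_{Ti}$ enter both the advection of $\vartheta_s$ and the transverse cross term in the $\bm_S$-equation, a feature built into the SALT Hamiltonians $h_i$ introduced in Section \ref{sec:SALTyVSM}. A secondary point worth making explicit is the justification that $\rmd$ passes through the loop integral and commutes with $d$; I would cite this from the SALT literature rather than reprove it here.
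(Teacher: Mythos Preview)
Your proposal is correct and follows essentially the same approach as the paper's proof: both apply the stochastic transport identity $\rmd\oint_{\gamma_t}\alpha = \oint_{\gamma_t}(\rmd+\mathcal{L}_{\rmd x_S})\alpha$, evaluate the advective derivative of the two summands using the stochastic Lie--Poisson equations, and observe that the non-exact contributions $-\frac{m_T}{D}d(\rmd x_T)$ and $\frac{1}{D}\frac{\delta h}{\delta\vartheta_s}d\vartheta_s\,dt$ cancel pairwise. Your preliminary identification of the integrand with $s\,D^{-1}\bM\cdot d\bx$ is a helpful framing that the paper does not make explicit, but the underlying computation is the same.
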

\begin{proof}
    The proof of this theorem follows directly from the Lie-Poisson equations \eqref{eqn:Stochastic_Lie_Poisson}. Indeed,
	\begin{equation*}
	\begin{aligned}
		\rmd \oint_{\gamma_t} \left( s\frac{m_S}{{D}}  - \frac{m_T}{{D}}\nabla\vartheta_s \right)\cdot d\bx &= \oint_{\gamma_t}(\rmd + \mathcal{L}_{\rmd x_S})\left[ \left( s\frac{m_S}{{D}}  - \frac{m_T}{{D}}\nabla\vartheta_s \right) \cdot d\bx\right]
		\\
		&= \oint_{\gamma_t} s\left( -d\left(\frac{\delta H}{\delta {D}}\right) \,dt + s\,d\rmd P + \frac{1}{{D}}\frac{\delta H}{\delta\vartheta_s}d\vartheta_s \,dt - \frac{1}{{D}}m_Td(\rmd x_T) \right)
        \\
        &\qquad\qquad - \frac{s}{{D}}\frac{\delta H}{\delta \vartheta_s} d\vartheta_s\,dt + \left(\frac{1}{{D}}m_t\right)d(s\,\rmd x_T)
		\\
		&= 0
		\,,
	\end{aligned}
	\end{equation*}
	as required.
\end{proof}

\begin{remark}
    The statement of the Kelvin-Noether theorem, when written in terms of the Lagrangian, is
	\begin{equation*}
		\rmd \oint_{\gamma_t} \left( s\bigg( \frac{1}{{D}}\frac{\delta\ell}{\delta u_S} \bigg) - \bigg( \frac{1}{{D}}\frac{\delta\ell}{\delta u_T} \bigg)\nabla\vartheta_s \right)\cdot d\bx = 0 \,.
	\end{equation*}
\end{remark}
\begin{remark}
In this section, we have included SALT noise in the vertical slice model. It is possible to go further, and include SALT noise in the WMFI models discussed in the previous section. Indeed, SALT noise can be coupled to each momentum term in equation \eqref{def-m-mom}, just as we have coupled noise terms to the slice and transverse momenta in this section.

\end{remark}

\section{Conclusion and Outlook}\label{sec: Conclude Outlook}

The ocean modelling goal of the present paper has been to include internal gravity wave (IGW) motion in the Vertical Slice Model (VSM) with transverse flow introduced in \cite{CH2013}. The mathematical goal accompanying the goal for data assimilation is also to determine the Poisson/Hamiltonian structure of the resulting model, and thereby formulate a new model of stochastic parameterisation of advective transport which is extremely versatile and therefore is of potential use for quantifying uncertainty in a variety of ocean model simulations, in addition to the stochastic VSM derived here to illustrate the variational ocean modelling approach.

Of course, many fundamental mathematical questions about the stochastic VSM derived here still remain open. Indeed, even the fundamental properties of existence, uniqueness and well‐posedness have not yet been proven for the stochastic VSM, although the stability, well-posedness and blow-up criterion have been established for the deterministic case in \cite{A-OBdL2019}. Nonetheless, the stochastic VSMs do belong to the SALT class of fluid models and therefore they have the capability to quantify uncertainty. In fact, the SALT models are also known to be effective in systematically reducing the uncertainty in the spread of their stochastic ensembles, when the data science methods of calibration and assimilation in \cite{CCHOS18a,CCHOS20} are applied. The application of the uncertainty reduction capabilities of the SALT models and their data calibration and assimilation results for these stochastic VSMs remain to be demonstrated elsewhere. 

\color{black}

\section*{Acknowledgements}
We are grateful to our friends, colleagues and collaborators for their advice and encouragement in the matters treated in this paper. 
We especially thank C. Cotter, I. Gjaja, J.C. McWilliams, C. Tronci, P. Bergold, and J. Woodfield for many insightful discussions of corresponding results similar to the ones derived here for the VSMs, and in earlier work together in deriving models of complex fluids, turbulence, plasma dynamics, vertical slice models and the quantum--classical hydrodynamic description of molecules. 
DH and RH were partially supported during the present work by Office of Naval Research (ONR) grant award  N00014-22-1-2082, ``Stochastic Parameterization of Ocean Turbulence for Observational Networks''. DH and OS were partially supported during the present work by European Research Council (ERC) Synergy grant ``Stochastic Transport in Upper Ocean Dynamics" (STUOD) -- DLV-856408.

\begin{appendices}

\section{The asymptotic expansion}\label{appendix:expansion}

This appendix contains an asymptotic expansion which reveals the form of the action for the dynamical system studied in Section \ref{sec:GLM_vertical_slice}. The appendix follows a sequence of calculations similar to those for the isotropic three-dimensional case in \cite{HHS2023c}. For full details of the set up, see \cite{GH1996}.

The modelling assumptions made in Section \ref{sec:GLM_vertical_slice} decompose a fluid trajectory into the sum of its Lagrangian mean flow trajectory $\bx_t$ and a Lagrangian mean fluctuation displacement $\alpha\bxi_t(\bx_t)$  of relative amplitude $\alpha$ at each point along the mean trajectory $\alpha$ as
\begin{equation*}
    \begin{pmatrix} x^{\xi}_t \\ y^{\xi}_t \\ z^{\xi}_t \end{pmatrix} = \begin{pmatrix} x_t \\ y_t \\ z_t \end{pmatrix} + \alpha\bxi_t \begin{pmatrix} x_t \\ y_t \\ z_t \end{pmatrix} \,,
\end{equation*}
and the associated velocity decomposes into
\begin{equation*}
    \bu(x^{\xi}_t,y^{\xi}_t,z^{\xi}_t,t) = \ob{\bu}(\bx_t,t) + \alpha\left( \p_t\bxi + \ob{\bu}_S\cdot\nabla \bxi \right) \,.
\end{equation*}
The fluctuating terms have a WKB structure, and as such the pressure has a contribution of this form also, c.f. equations \eqref{eqn:WKB_fluctuation} and \eqref{eqn:WKB_pressure},
\begin{align}
    \bxi(\bx_S,t) &= \ba(\epsilon\bx_S,\epsilon t)e^{i\phi(\epsilon\bx_S,\epsilon t)/\epsilon}
+ \ba^*(\epsilon\bx_S,\epsilon t)e^{-i\phi(\epsilon\bx_S,\epsilon t)/\epsilon}\,,
\\
    p(\bX_S, t) &= p_0(\bX_S, t) + \sum_{j\geq 1}\alpha^j\left(b_j(\epsilon\bX_S, \epsilon t)e^{ij\phi(\epsilon\bX_S,\epsilon t)/\epsilon} + b^*_j(\epsilon\bX_S, \epsilon t)e^{-ij\phi(\epsilon\bX_S,\epsilon t)/\epsilon}\right)\,,
\end{align}
where for brevity we have denoted the coordinates in the slice by $\bX_S = (X,Z)$ and $\bx_S = (x,z)$. Recall that the Lagrangian for the Euler-Boussinesq Eady model is given by
    \begin{equation}\tag{\ref{eqn:Eady_action} revisited}
        \ell[\bu_S,u_T,\mathscr{D},\vartheta_s,p] = \int_M \frac{\mathscr{D}}{2}(|\bu_S|^2 + u_T^2) + \mathscr{D}fu_TX + \frac{g}{\vartheta_0}\mathscr{D}\left( Z - \frac{H}{2} \right) \vartheta_s + p(1-\mathscr{D})\, d^2X \,,
    \end{equation}
and we will make the above approximations \emph{within} Hamilton's principle for the action corresponding to this Lagrangian. Noting that the variables defined in the Lagrangian are defined on the coordinates $(X,Z)$, rather than the full three dimensional space, due to the identification of the subgroup of suitable diffeomorphisms being isomorphic to ${\rm Diff}(M)\ltimes \mcal{F}(M)$.

Firstly, we consider how advected quantities are expressed in terms of $\bx_t$. The volume form becomes
\begin{equation}
    \mathscr{D}(X,Z)d^2X =: \mathscr{D}^{\bxi_S}(x,z)d^2X = \mathscr{D}^{\bxi_S}(x,z)\mathscr{J}d^2x =: {D} d^2x \,,
\end{equation}
where we have introduced the helpful notation $\mathscr{D}^{\bxi_S}$ and $$\mathscr{J} = {\rm det}\left( \delta_{ij} + \alpha\frac{\p\xi^i}{\p x^j}\right) \,.$$ Similarly, our scalar advected variable, $\vartheta_s$, transforms as $$ \vartheta_s(X,Z) = \vartheta_s^{\bxi_S}(x,z) =: \theta_s \,.$$

Beginning with the kinetic energy term, we have (up to order $\alpha^2\epsilon$)
\begin{align*}
    \int_M \frac{\mathscr{D}}{2}\left( |\bu_S|^2 + u_T^2 \right) \,d^2X &= \int_M \frac{D}{2}\bigg( \Big|\ob{\bu}_S + \alpha( \p_t\bxi_S + \ob{\bu}_S\cdot\nabla \bxi_S)\Big|^2  + \left(\ob{u}_T + \alpha ( \p_t\xi_T + \ob{\bu}_S\cdot\nabla \xi_T \right)^2 \bigg) \,d^2x
    \\
    &\hspace{-70pt}= \int_M \frac{D}{2}\bigg( \big|\ob{\bu}_S\big|^2 + \alpha^2\big| \ba_Sie^{\frac{i\phi}{\epsilon}}\p_t\phi - \ba_S^*ie^{-\frac{i\phi}{\epsilon}}\p_t\phi + \ba_S ie^{\frac{i\phi}{\epsilon}}\ob{\bu}_S\cdot\nabla\phi - \ba_S^*ie^{-\frac{i\phi}{\epsilon}}\ob{\bu}_S\cdot\nabla\phi \big|^2 
    \\
    &\hspace{-70pt} \qquad\qquad + \ob{u}_T^2 + \alpha^2\left( a_Tie^{\frac{i\phi}{\epsilon}}\p_t\phi - a_T^*ie^{-\frac{i\phi}{\epsilon}}\p_t\phi + a_T ie^{\frac{i\phi}{\epsilon}}\ob{\bu}_S\cdot\nabla\phi - a_T^*ie^{-\frac{i\phi}{\epsilon}}\ob{\bu}_S\cdot\nabla\phi \right)^2 \bigg)\,d^2x
    \\
    &\hspace{-70pt}= \int_M \frac{D}{2}\bigg( \big|\ob{\bu}_S\big|^2 + \ob{u}_T^2 + \alpha^2\big| -\ba_Si\wt{\omega}e^{\frac{i\phi}{\epsilon}} + \ba_S^*i\wt{\omega}e^{-\frac{i\phi}{\epsilon}}\big|^2 + \alpha^2\left(  -a_Ti\wt{\omega}e^{\frac{i\phi}{\epsilon}} + a_T^*i\wt{\omega}e^{-\frac{i\phi}{\epsilon}}\right)^2 \bigg)\,d^2x
    \\
    &\hspace{-70pt}= \int_M \frac{D}{2} \bigg(\big|\ob{\bu}_S\big|^2 + \ob{u}_T^2 + 2\alpha^2 \wt{\omega}^2\left( \ba_S\cdot\ba_S^* + a_Ta_T^*\right) \bigg) \,d^2x \,.
\end{align*}
Denoting $\bxi = (\xi_1,\xi_2,\xi_3)$, the rotation term is
\begin{align*}
    \int_M \mathscr{D}fu_T X \,d^2X &= \int_M Df\left( \ob{u}_T + \alpha\left( -a_Ti\wt{\omega}e^{\frac{i\phi}{\epsilon}} + a_T^*i\wt{\omega}e^{-\frac{i\phi}{\epsilon}} \right)(x + \alpha \xi_1) \right) \,d^2x
    \\
    &= \int_M Df\ob{u}_Tx + Df\alpha^2i\wt{\omega}\left( -a_Ta_1^* + a_T^*a_1 \right) \,d^2x \,.
\end{align*}
The potential energy is handled similarly
\begin{align*}
    \int_M \frac{g}{\vartheta_0}\mathscr{D}\left( Z - \frac{H}{2} \right) \vartheta_s \,d^2X &= \int_M \frac{g}{\theta_0}D\left( z + \alpha\xi_3 - \frac{H}{2} \right) \theta_s \,d^2x = \int_M \frac{g}{\theta_0}D\left( z - \frac{H}{2} \right) \theta_s \,d^2x \,.
\end{align*}
As in the three dimensional case, the pressure terms are the most complex part of the calculation. Since $p$ is a function of the vertical slice coordinates, the expansion of $p(X,Z) = p^{\bxi_S}(x,z)$ only invokes the fluctuations which lie within $M$, namely $\bxi_S$, and derivatives of these with respect to $x$ and $z$. The expansion for the pressure term is therefore equivalent to that found in \cite{HHS2023c}, and the action is
\begin{equation}
\begin{aligned}
    S[\ob{\bu}_S,\ob{u}_T,D,\theta_s,p_0,b,\ba] &= \int_{t_0}^{t_1} \ell[\ob{\bu}_S,\ob{u}_T,D,\theta_s,p_0,b,\ba] \,dt 
    \\
    &= \int_{t_0}^{t_1}\int_M \frac{D}{2} \bigg(\big|\ob{\bu}_S\big|^2 + \ob{u}_T^2 + 2\alpha^2 \wt{\omega}^2\left( \ba_S\cdot\ba_S^* + a_Ta_T^*\right) \bigg) 
    \\
    &\qquad\qquad + Df\ob{u}_Tx + Df\alpha^2i\wt{\omega}\left( -a_Ta_1^* + a_T^*a_1 \right) + \frac{g}{\theta_0}D\left( z - \frac{H}{2} \right) \theta_s
    \\
    &\qquad\qquad - D\alpha^2i\left( b\bk\cdot\ba_S^* - b^*\bk\cdot\ba_S \right) - D\alpha^2a^*_ia_j\frac{\p^2p_0}{\p x_i\p x_j} 
    \\
    &\qquad\qquad+ (1-{D})p_0 + \mathcal{O}(\alpha^2\epsilon) \,, d^2x\,dt \,,
\end{aligned}
\end{equation}
where $\bk = \nabla\phi = (\p_{\epsilon x}\phi,\p_{\epsilon z}\phi)$, and we sum over $i,j \in \{1,2\}$.

\section{Deriving the dispersion relation}\label{app-B-Dispersion}

Recall the linear equations
\begin{align}
    \begin{split}
        \wt{\omega}^2 a_1 - if\wt{\omega}a_T - ib k_1 - a_j\frac{\p^2 p_0}{\p x_1 \p x_j} &= 0 \,,\\
        \wt{\omega}^2 a_2 - ib k_2 - a_j\frac{\p^2 p_0}{\p x_2 \p x_j} &= 0\,,\\
        \wt{\omega}^2 a_T + if\wt{\omega}a_1 &= 0\,,\\
        \bk\cdot\ba_S &= 0\,.
    \end{split} \tag{\eqref{eq:ab constrains} revisited}
\end{align}
Taking the dot product of the first two equations with $\bk$ and using $\bk\cdot \ba_S = 0$ gives the equation for $b$ in terms of $a_1$
\begin{align*}
    ib|\bk|^2 = -f^2 a_1 k_1 - k_i a_j \frac{\p^2 p_0}{\p x_i \p x_j}\,,
\end{align*}
where we have substituted $a_T$ in terms of $a_1$ using $\wt{\omega }a_T = -ifa_1$. Then, we have a set of linear equations involving $a_1$ and $a_2$
\begin{align}
    \begin{split}
        \wt{\omega}^2 a_1 - f^2 a_1 - a_j\frac{\p^2 p_0}{\p x_1 \p x_j} + \frac{k_1}{|\bk|^2}\left(f^2a_1 k_1 + k_ia_j\frac{\p^2 p_0}{\p x_i \p x_j}\right) = 0\,,\\
        \wt{\omega}^2 a_2 - a_j\frac{\p^2 p_0}{\p x_2 \p x_j} + \frac{k_2}{|\bk|^2}\left(f^2a_1 k_1 + k_ia_j\frac{\p^2 p_0}{\p x_i \p x_j}\right) = 0\,,
    \end{split}
\end{align}
which can be assembled into a matrix form
\begin{align}\label{eqn:a_matrix}
    \begin{pmatrix}
        \wt{\omega}^2 - f^2 + \dfrac{k_1 k_i}{|\bk|^2}\dfrac{\p^2 p_0}{\p x_i \p x_1} + \dfrac{f^2k_1^2}{|\bk|^2} - \dfrac{\p^2 p_0}{\p x_1 \p x_1} & \dfrac{k_1k_i}{|\bk|^2}\dfrac{\p^2 p_0}{\p x_i \p x_2} - \dfrac{\p^2 p_0}{\p x_1 \p x_2} \\
        \dfrac{k_2 k_i}{|\bk|^2}\dfrac{\p^2 p_0}{\p x_i \p x_1} + \dfrac{f^2k_1k_2}{|\bk|^2} - \dfrac{\p^2 p_0}{\p x_2 \p x_1} & \wt{\omega}^2 + \dfrac{k_2k_i}{|\bk|^2}\dfrac{\p^2 p_0}{\p x_i \p x_2} - \dfrac{\p^2 p_0}{\p x_2 \p x_2}
    \end{pmatrix}
    \begin{pmatrix}
        a_1\\a_2
    \end{pmatrix}
    = 0\,.
\end{align}
\begin{proposition}\label{prop:slice_WCI_disp_rel}
    The solvability condition for $a_1$ and $a_2$ implies the following dispersion relation
\begin{equation}\tag{\eqref{eqn:slice_WCI_dispersion_rel} revisited}
    \wt{\omega}^2 = \frac{f^2k_2^2}{|\bk|^2} + \left( \delta_{ij} - \frac{k_ik_j}{|\bk|^2} \right)\frac{\p^2p_0}{\p x_i\p x_j} \,.
\end{equation}
\end{proposition}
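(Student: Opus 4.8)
The plan is to read off the dispersion relation directly from the four linear relations \eqref{eq:ab constrains}, eliminating the auxiliary amplitudes $a_T$ and $b$ by hand rather than expanding the determinant of the $2\times 2$ matrix in \eqref{eqn:a_matrix}; eliminating the pressure amplitude $b$ by a well-chosen linear combination is considerably cleaner than grinding through that determinant. Throughout I would abbreviate the symmetric Hessian entries by $P_{ij}:=\p^2 p_0/\p x_i\p x_j$ and assume the nondegeneracy conditions $\wt{\omega}\neq 0$ and $\bk\neq 0$ under which the WKB fluctuation is genuinely oscillatory; the excluded cases correspond to a stationary disturbance or to a $\phi$ with no spatial structure, and can simply be flagged as degenerate.

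First I would use the third equation of \eqref{eq:ab constrains}, which factors as $\wt{\omega}(\wt{\omega}a_T + ifa_1)=0$, to solve $\wt{\omega}a_T=-ifa_1$ when $\wt{\omega}\neq 0$; substituting into the first equation turns the term $-if\wt{\omega}a_T$ into $-f^2a_1$ and leaves the pair $(\wt{\omega}^2-f^2)a_1 = ibk_1 + a_jP_{1j}$ together with $\wt{\omega}^2 a_2 = ibk_2 + a_jP_{2j}$. Next I would form $k_2\times(\text{first}) - k_1\times(\text{second})$, in which the $b$-contributions cancel identically (each equals $ibk_1k_2$), giving the scalar identity $\wt{\omega}^2(k_2a_1 - k_1a_2) = f^2k_2a_1 + k_2a_jP_{1j} - k_1a_jP_{2j}$. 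The fourth relation $\bk\cdot\ba_S=0$ forces $\ba_S=(a_1,a_2)$ to be parallel to $\bk^\perp=(-k_2,k_1)$, and a nontrivial wave amplitude means $\ba_S = c\,\bk^\perp$ with $c\neq 0$. Substituting, with $k_2a_1-k_1a_2 = -c|\bk|^2$, $a_jP_{1j}=c(-k_2P_{11}+k_1P_{12})$, and $a_jP_{2j}=c(-k_2P_{12}+k_1P_{22})$, and then dividing by $-c$ would give
\[
\wt{\omega}^2|\bk|^2 = f^2k_2^2 + k_2^2P_{11} - 2k_1k_2P_{12} + k_1^2P_{22}\,.
\]

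The final step is to recognise that $k_2^2P_{11} - 2k_1k_2P_{12} + k_1^2P_{22}$ equals $|\bk|^2\big(\delta_{ij}-k_ik_j/|\bk|^2\big)P_{ij}$, which follows by expanding the right-hand side using $|\bk|^2=k_1^2+k_2^2$ and the symmetry $P_{12}=P_{21}$; dividing by $|\bk|^2$ then produces exactly \eqref{eqn:slice_WCI_dispersion_rel}. I expect the only real obstacle to be clerical: tracking signs through the $a_T$- and $b$-eliminations (the factors of $i$ are easy to misplace) and verifying that last projector identity. As a cross-check, and an alternative proof, one could instead impose vanishing of the determinant of the matrix in \eqref{eqn:a_matrix}; but since that matrix already has $\bk\cdot\ba_S=0$ built into its construction, the substitution $\ba_S\propto\bk^\perp$ used above is the more economical route and makes the cancellation of the pressure amplitude transparent.
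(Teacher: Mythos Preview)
Your proof is correct and takes a genuinely different route from the paper. The paper eliminates $b$ by taking the dot product of the first two relations with $\bk$ (using $\bk\cdot\ba_S=0$), substitutes back to obtain the $2\times2$ system \eqref{eqn:a_matrix}, and then computes the determinant by splitting it into three pieces $A[\wt{\omega},\wt{\omega}f]+B[f]+C[1]$, verifying term-by-term that $B[f]$ and $C[1]$ vanish identically before reading off the dispersion relation from $A=0$. You instead eliminate $b$ by the ``cross'' combination $k_2\times(\text{first})-k_1\times(\text{second})$, which collapses the system to a single scalar relation immediately, and then exploit $\bk\cdot\ba_S=0$ by substituting the explicit null direction $\ba_S=c\,\bk^\perp$ rather than imposing a determinant condition. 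Your approach is shorter and makes the projector structure $(\delta_{ij}-k_ik_j/|\bk|^2)$ appear directly, at the cost of requiring the nondegeneracy assumption $c\neq 0$ up front; the paper's determinant route is more mechanical but involves considerably more cancellation bookkeeping. Both are valid, and your observation that the matrix in \eqref{eqn:a_matrix} already has $\bk\cdot\ba_S=0$ baked in is exactly why substituting $\ba_S\propto\bk^\perp$ is the more efficient path.
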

\begin{proof}
The solvability condition for $a_1$ and $a_2$ is that the determinant of this matrix found in equation \eqref{eqn:a_matrix} vanishes. To compute this determinant, we will separate it into terms multiplying $\wt{\omega}$, those multiplying $f$ (but not $\wt{\omega}$), and those featuring neither $f$ nor $\wt{\omega}$.
\begin{equation}
\begin{aligned}
    0 &= {\rm det}\begin{pmatrix}
        \wt{\omega}^2 - f^2 + \dfrac{k_1 k_i}{|\bk|^2}\dfrac{\p^2 p_0}{\p x_i \p x_1} + \dfrac{f^2k_1^2}{|\bk|^2} - \dfrac{\p^2 p_0}{\p x_1 \p x_1} & \dfrac{k_1k_i}{|\bk|^2}\dfrac{\p^2 p_0}{\p x_i \p x_2} - \dfrac{\p^2 p_0}{\p x_1 \p x_2} \\
        \dfrac{k_2 k_i}{|\bk|^2}\dfrac{\p^2 p_0}{\p x_i \p x_1} + \dfrac{f^2k_1k_2}{|\bk|^2} - \dfrac{\p^2 p_0}{\p x_2 \p x_1} & \wt{\omega}^2 + \dfrac{k_2k_i}{|\bk|^2}\dfrac{\p^2 p_0}{\p x_i \p x_2} - \dfrac{\p^2 p_0}{\p x_2 \p x_2}
    \end{pmatrix} \\
    &=: A[\wt{\omega},\wt{\omega}f] + B[f] + C[1] \,.
\end{aligned}
\end{equation}
We see that the terms corresponding to $A[\wt{\omega},\wt{\omega}f]$ produce the desired dispersion relation. We first demonstrate that the other terms coming from the determinant, $B[f]$ and $C[1]$, do not contribute. Beginning by evaluating terms which multiply the Coriolis parameter but do not feature $\wt{\omega}^2$, we have
\begin{align*}
    B[f] &= -\frac{f^2k_2^2}{|\bk|^2}\left( \frac{k_2k_i}{|\bk|^2}\frac{\p^2 p_0}{\p x_i \p x_2} - \frac{\p^2 p_0}{\p x_2 \p x_2} \right) - \frac{f^2k_1k_2}{|\bk|^2}\left( \frac{k_1k_i}{|\bk|^2}\frac{\p^2 p_0}{\p x_i \p x_2} - \frac{\p^2 p_0}{\p x_1 \p x_2} \right) 
    \\
    &= \frac{\p^2 p_0}{\p x_1 \p x_2}\left(\frac{f^2k_1k_2}{|\bk|^2} - \frac{f^2k_1k_2^3}{|\bk|^4} - \frac{f^2k_1^3k_2}{|\bk|^4}  \right) + \frac{\p^2p_0}{\p x_2\p x_2}\left(\frac{f^2k_2^2}{|\bk|^2} - \frac{f^2k_2^4}{|\bk|^4} - \frac{f^2k_1^2k_2^2}{|\bk|^4} \right)
    \\
    &= 0 \,.
\end{align*}
Similarly, we may evaluate the terms which feature neither $f$ nor $\wt{\omega}$
\begin{align*}
    C[1] &= \left( \frac{k_1 k_i}{|\bk|^2}\frac{\p^2 p_0}{\p x_i \p x_1} - \frac{\p^2 p_0}{\p x_1 \p x_1} \right)\left( \frac{k_2k_i}{|\bk|^2}\frac{\p^2 p_0}{\p x_i \p x_2} - \frac{\p^2 p_0}{\p x_2 \p x_2} \right) 
    \\
    &\qquad\qquad - \left( \frac{k_2 k_i}{|\bk|^2}\frac{\p^2 p_0}{\p x_i \p x_1} - \frac{\p^2 p_0}{\p x_2 \p x_1} \right)\left(\frac{k_1k_i}{|\bk|^2}\frac{\p^2 p_0}{\p x_i \p x_2} - \frac{\p^2 p_0}{\p x_1 \p x_2} \right) 
    \\
    &= \frac{\p^2p_0}{\p x_1 \p x_1}\frac{\p^2p_0}{\p x_2 \p x_2}\bigg( 1 - \frac{k_1^2}{|\bk|^2} - \frac{k_2^2}{|\bk|^2}  \bigg)
    \\
    &\quad + \frac{\p^2p_0}{\p x_1 \p x_1}\frac{\p^2p_0}{\p x_1 \p x_2}\bigg( \frac{k_1^3k_2}{|\bk|^4} - \frac{k_1k_2}{|\bk|^2} - \frac{k_1^3k_2}{|\bk|^4} + \frac{k_2k_1}{|\bk|^2} \bigg)
    \\
    &\quad + \frac{\p^2p_0}{\p x_1 \p x_2}\frac{\p^2p_0}{\p x_1 \p x_2}\bigg( \frac{k_1^2k_2^2}{|\bk|^4} -\frac{k_1^2k_2^2}{|\bk|^4} + \frac{k_2^2}{|\bk|^2} + \frac{k_1^2}{|\bk|^2} - 1 \bigg)
    \\
    &= 0 \,.
\end{align*}
Thus, we have $A[\wt{\omega},\wt{\omega}f] = 0$, and hence
\begin{align*}
    \wt{\omega}^4 &=  -\wt{\omega}^2\left( \frac{k_2k_i}{|\bk|^2}\frac{\p^2 p_0}{\p x_i \p x_2} - \frac{\p^2 p_0}{\p x_2 \p x_2} \right) - \wt{\omega}^2\left( - f^2 + \dfrac{k_1 k_i}{|\bk|^2}\dfrac{\p^2 p_0}{\p x_i \p x_1} + \dfrac{f^2k_1^2}{|\bk|^2} - \dfrac{\p^2 p_0}{\p x_1 \p x_1} \right)
    \\
    &= \wt{\omega}^2\frac{f^2k_2^2}{|\bk|^2} + \wt{\omega}^2\left( \delta_{ij} - \frac{k_ik_j}{|\bk|^2} \right)\frac{\p^2p_0}{\p x_i\p x_j} \,.
\end{align*}
This produces the dispersion relation \eqref{eqn:slice_WCI_dispersion_rel} after dividing by $\wt{\omega}^2$.
\end{proof}

\end{appendices}

\end{document}